\let\orgautoref\autoref                         		
\renewcommand{\autoref}[1]{
    \def\equationautorefname{Eq.}
    \def\figureautorefname{Fig.}%
    \def\subfigureautorefname{Fig.}%
    \def\lemmaautorefname{Lemma}%
    \def\conjectureautorefname{Conjecture}%
    \def\remarkautorefname{Remark}%
    \def\propositionautorefname{Prop.}%
    \def\corollaryautorefname{Corollary}%
    \def\definitionautorefname{Def.}%
    \def\sectionautorefname{Sect.}%
    \def\subsectionautorefname{Sect.}%
    \def\subsubsectionautorefname{Section}%
    \def\exampleautorefname{Example}%
    \orgautoref{#1}%
}
\newmdtheoremenv{runexample}{Example}
\newcommand{\specificref}[2]{\hyperref[#2]{#1~\ref*{#2}}}
\renewcommand{\(}{\left(}                   
\renewcommand{\epsilon}{\varepsilon}    
\definecolor{gray}{rgb}{0.5,0.5,0.5}
\definecolor{niceblue}{rgb}{.8,.85,1}
\newcommand{\makeop}[2]                         
  {\ifx#2.\def\next##1{}\else\escapechar=-1     
  \def\next##1{\escapechar=92\def#2{#1}}        
  \expandafter\next\expandafter{\string#2}      
  \let\next\makeop\fi\next{#1}}                 
\def \var(#1){{\bf #1}}
\def\AddSpace#1{\ifcat#1a\ \fi#1} 
\newcommand{\silentreminder}[1]{}
\def \up(#1){[#1)}
\def \down(#1){(#1]}
\def \series(#1,#2){#1_1, \dots \; #1_{#2}}
\def \serieszero(#1,#2){#1_0, #1_1, \dots \; #1_{#2}}
\def \para(#1){{\vspace{1ex}\noindent\small\bf #1\hspace{1ex}}}
\def \myem(#1){{\vspace{1ex}\noindent\small\em #1\hspace{1ex}}}
\newcommand{\eat}[1]{}
\def\pflist#1#2{\ifnum \@listdepth >15\relax \@toodeep 
	\else \global\advance\@listdepth\@ne \fi
	\rightmargin \z@ \listparindent\z@ \itemindent\z@
	\csname @list\romannumeral\the\@listdepth\endcsname 
	\def\@itemlabel{#1}\let\makelabel\@mklab \@nmbrlistfalse #2\relax
	\@trivlist
	\parskip\parsep \parindent\listparindent
	\advance\linewidth -\rightmargin \advance\linewidth -\leftmargin
	\advance\@totalleftmargin \leftmargin
	\parshape \@ne \@totalleftmargin \linewidth 
	\ignorespaces}
\newcommand{\cut}[1]{}   
\newenvironment{packed_enum}{
\begin{enumerate}
   \setlength{\itemsep}{1pt}
  \setlength{\parskip}{0pt}
   \setlength{\parsep}{0pt}
}
{\end{enumerate}}
\newenvironment{packed_grep}{
\begin{description}
   \setlength{\itemsep}{1pt}
   \setlength{\parskip}{0pt}
   \setlength{\parsep}{0pt}
}
{\end{description}}
\newcommand{\ie}{{\em i.e.}\xspace}
\newcommand{\eg}{{\em e.g.}\xspace}
\newcommand{\introparagraph}[1]{\noindent {\bf \em #1.}}  
\newcommand{\setof}[2]{\{{#1}\mid{#2}\}}        
\newtheorem{definition}{Definition}
\newtheorem{proposition}{Proposition}
\newtheorem{corollary}{Corollary}
\newtheorem{conjecture}{Conjecture}
\newtheorem{lemma}{Lemma}
\newtheorem{example}{Example}
\newtheorem{theorem}{Theorem}
\providecommand{\corollaryautorefname}{Corollary}
\providecommand{\lemmaautorefname}{Lemma}
\providecommand{\propositionautorefname}{Proposition}
\providecommand{\definitionautorefname}{Definition}
\providecommand{\exampleautorefname}{Example}
\providecommand{\bx}[0]{\mathbf{x}}
\providecommand{\by}[0]{\mathbf{y}}
\providecommand{\bu}[0]{\mathbf{u}}
\providecommand{\ba}[0]{\mathbf{a}}
\providecommand{\bc}[0]{\mathbf{c}}
\providecommand{\hgraph}[0]{\mathcal{H}}
\providecommand{\edges}[0]{\mathcal{E}}
\providecommand{\nodes}[0]{\mathcal{V}}
\providecommand{\slack}[0]{\alpha}
\providecommand{\domain}[0]{\mathbf{dom}}
\providecommand{\dom}[0]{\mathbf{D}}
\providecommand{\tree}[0]{\mathcal{T}}
\providecommand{\dict}[0]{\mathcal{D}}
\providecommand{\delay}[0]{\delta}
\providecommand{\htree}[0]{\mathscr{T}}
\providecommand{\bag}[0]{\mathscr{B}}
\providecommand{\fhw}[1]{\mathsf{fhw}(#1)}
\providecommand{\cmpr}[1]{\textsf{C}_{#1}}
\providecommand{\dmpr}[1]{\textsf{D}_{#1}}
\providecommand{\prq}[2]{{#1}\mid{#2}}
\providecommand{\pangle}[1]{\langle{#1}\rangle}
\providecommand{\pbox}[0]{\mathbf{B}}
\providecommand{\interval}[0]{\mathbf{I}}
\providecommand{\bound}[0]{\mathsf{b}}
\providecommand{\free}[0]{\mathsf{f}}
\begin{document}
	
	\title{Compressed Representations of Conjunctive Query Results}
	
	\author{Shaleen Deep$^1$ \and Paraschos Koutris$^1$\\
		\and$^1$University of Wisconsin-Madison, Madison, WI\\ \{shaleen, paris\}@cs.wisc.edu
	}
	
	\maketitle
	
	\begin{abstract}
	Relational queries, and in particular join queries, often generate large output results when
	executed over a huge dataset. In such cases, it is often infeasible to store the
	whole materialized output if we plan to reuse it further down a data
	processing pipeline. Motivated by this problem, we study the construction of 
	space-efficient {\em compressed representations} of the output of conjunctive queries, with the goal of supporting the efficient
	access of the intermediate compressed result for a given access pattern. In particular, we initiate the study of an important tradeoff: minimizing the {\em space}
	necessary to store the compressed result, versus minimizing the {\em answer time} and 
	{\em delay} for an access request over the result.
	Our main contribution is a novel parameterized data structure, which can be tuned to trade off space
	for answer time. The tradeoff allows us to control the space requirement of the
	data structure precisely, and depends both on the structure of the query and the 
	access pattern. We show how we can use the data structure in conjunction with query decomposition techniques in order
	to efficiently represent the outputs for several classes of conjunctive queries.
\end{abstract}

	\section{Introduction}
\label{sec:intro}

In this paper, we study the problem of constructing space-efficient {\em compressed} 
{\em representations} of the output of conjunctive query results, with the goal of 
efficiently supporting a given access pattern directly over the compressed result, 
instead of the original input database. 
In many data management tasks, the data processing pipeline repeatedly accesses 
the result of a conjunctive query (CQ) using a particular access pattern. 
In the simplest case, this access pattern can be to
enumerate the full result (\eg, in a multiquery optimization context). Generally, the
access pattern can specify, or {\em bound}, the values of some variables, and ask to 
enumerate the values of the remaining variables that satisfy the query.

Currently, there are two extremal solutions for this problem. In one extreme, we can materialize
the full result of the CQ and index the result according to the access pattern. However, since
the output result can often be extremely large, storing this index can be prohibitively expensive.
In the other extreme, we can service each access request by executing the CQ directly over the input database
every time. This solution does not need extra storage, but can lead to inefficiencies, since 
computation has to be done from scratch and may be redundant. 
In this work, we explore the design space between these two extremes. In other words,
we want to compress the query output such that it can be stored in a space-efficient way, while we can 
support a given access pattern over the output as fast as possible.
\begin{example}\label{ex:intro}
	Suppose we want to perform an analysis about mutual friends of users in a social network. 
	The friend relation is represented by a symmetric binary relation $R$ of size $N$, where a tuple $R(a,b)$ 
	denotes that user {\em a} is a friend of user {\em b}.  
	The data analysis involves accessing the database through the following pattern: {\em given any two users $x$ and $z$ 
		who are friends, return all mutual friends $y$.} We formalize this task through 
	an {\em adorned view} $V^{\bound \free \bound}(x,y,z) = R(x,y),R(y,z),$ $R(z,x)$. 
	The above formalism says that the view $V$ of the database will be accessed as follows:
	given values for the {\em bound ($\bound$)} variables $x,z$, we have to return the values for the
	{\em free ($\free$)} variable $y$ such that the tuple is in the view $V.$ The sequence $\bound \free \bound$
	is called the {\em access pattern} for the adorned view.
	
	One option to solve this problem is to satisfy each access by evaluating a query on the input database. This 
	approach is space-efficient, since we work directly on the input and need space $O(N)$.
	However, we may potentially have to wait $\Omega(N)$ time to even learn whether 
	there is any returned value for $y$.
	A second option is to materialize the view $V(x,y,z)$ and build a hash index with key $(x,z)$: in this case, we can satisfy any access optimally with constant delay $\tilde{O}(1)$.\footnote{the $\tilde{O}$ notation includes
		a poly-logarithmic dependence on $N$.} On the other hand, the space needed for storing the view  can be $\Omega(N^{3/2})$. 
	
	In this scenario, we would like to construct representations that trade off between space and delay (or answer time). As we will show later, for this particular example we can construct a data structure for any parameter $\tau$ that needs space $O(N^{3/2}/\tau)$, and can answer any access request with
	delay $\tilde{O}(\tau)$. 
\end{example}	

The idea of efficiently compressing query results has recently gained considerable attention,
both in the context of  factorized databases~\cite{DBLP:journals/tods/OlteanuZ15}, as well as
constant-delay enumeration~\cite{Segoufin15,DBLP:conf/csl/BaganDG07}.
In these settings, the focus is to construct compressed representations that allow for enumeration
of the full result with constant delay:  this means that the time between outputting two consecutive tuples is $O(1)$, independent of the size of the data. 
Using factorization techniques, for any input database $D$, we can construct a compressed
data structure for any CQ without projections, called a $d$-representation, using space 
$O(|D|^{\mathsf{fhw}})$, where
$\mathsf{fhw}$ is the {\em fractional hypertree width} of the query~\cite{DBLP:journals/tods/OlteanuZ15}.
Such a $d$-representation guarantees constant delay enumeration of the full result. 
In~\cite{Segoufin13,DBLP:conf/csl/BaganDG07}, the compression of CQs with projections is also
studied, but the setting is restricted to $O(|D|)$ time preprocessing --which also restricts the size of the 
compressed representation to $O(|D|)$.


In this work, we show that we can dramatically decrease the space for the compressed
representation by both $(i)$ taking advantage of the access pattern,
and $(ii)$ tolerating a possibly increased delay.
For instance, a $d$-representation for the query in Example~\ref{ex:intro} needs 
$O(N^{3/2})$ space, while no linear-time preprocessing can support constant delay
enumeration (under reasonable complexity assumptions~\cite{DBLP:conf/csl/BaganDG07}).
However, we show that if we are willing to tolerate a delay of $\tilde{O}(N^{1/2})$, we can
support the access pattern of Example~\ref{ex:intro} using only $\tilde{O}(N)$ space, linear in
the input size.


\smallskip
\introparagraph{Applications} 
We illustrate the applicability of compressed representations of conjunctive queries on two practical problems: 
$(i)$ processing graph queries over relational databases, and 
$(ii)$ scaling up statistical inference engines. 

In the context of graph analytics, the graph to be analyzed is often defined as a {\em declarative} query over a relational schema~\cite{graphgen2017, graphgen2015, graphgen2017adaptive, spartex}. For instance, consider the DBLP dataset, which contains information about which authors write which papers
through a table $R(author, paper)$. To analyze the relationships between co-authors, we will need to extract
the {\em co-author graph}, which we can express as the view $V(x,y) = R(x,p), R(y,p)$. Most graph analytics
algorithms typically access such a graph through an API that asks for the set of neighbors of
a given vertex, which corresponds to the adorned view $V^{\bound \free}(x,y) = R(x,p), R(y,p)$.
Since the option of
materializing the whole graph (here defined as the view $V$) may require prohibitively large space, it is 
desirable to apply techniques that compress $V$, while we can still answer any access request efficiently. Recent
work~\cite{graphgen2017} has proposed compression techniques for this particular domain, but these techniques
are limited to adorned views of the form $V^{\bound \free}(x,y)$, rely on heuristics, and do not provide any formal analysis on the tradeoff between space and runtime.

The second application of query compression is in statistical inference. For example, \textsc{Felix}~\cite{felix}
is an inference engine for Markov Logic Networks over relational data, which provides scalability by
optimizing the access patterns of logical rules that are evaluated during inference.
These access patterns over rules are modeled exactly as adorned views. \textsc{Felix} groups the relations in the 
body of the view in partitions (optimizing for some cost function), and then materializes each partition (which corresponds to materializing a subquery). 
In the one extreme, it will eagerly materialize the whole view, and in the other extreme it will lazily materialize nothing.
The materialization in \textsc{Felix} is discrete, in that it is not possible to partially materialize each subquery.
In contrast, we consider materialization strategies that explore the full continuum between the two extremes.

\smallskip
\introparagraph{Our Contribution} In this work, we study the design space for compressed representations of
conjunctive queries in the full continuum between optimal space and optimal runtime, when our goal is to
optimize for a specific access pattern.

Our main contribution is a novel data structure that {\em $(i)$ can compress the result for every CQ
	without projections according to the access pattern given by an adorned view, and $(ii)$ can be tuned to tradeoff space for delay and answer time.} 
At the one extreme, the data structure achieves constant delay $O(1)$;
At the other extreme it uses linear space $O(|D|)$, but provides a worst delay guarantee. 
Our proposed data structure includes as a special case the data structure developed in~\cite{Cohen2010} 
for the {\em fast set intersection} problem.

To construct our data structure, we need two technical ingredients. 
The first ingredient (Theorem~\ref{thm:main}) is a data structure that trades space with delay 
with respect to the worst-case size bound of the query result. 
As an example of the type of tradeoffs that can be achieved, for any CQ $Q$ without projections and any access pattern, the data structure needs space $\tilde{O}(|D|^{\rho^*}/\tau)$ to achieve delay $\tilde{O}(\tau)$, where $\rho^*$ is the fractional edge cover number of $Q$, and $|D|$ the size of the input database. 
In many cases and for specific access patterns, the data structure can substantially improve upon this tradeoff. To prove Theorem~\ref{thm:main}, we develop novel techniques on how to encode information
about expensive sub-instances of the problem in a balanced way. 

However, Theorem~\ref{thm:main} by its own gives suboptimal tradeoffs, since it ignores structural
properties of the query (for example, for constant delay it materializes the full result). Our second
ingredient (Theorem~\ref{thm:main2}) combines the data structure of Theorem~\ref{thm:main}  with
a type of tree decomposition called {\em connex tree decomposition}~\cite{DBLP:conf/csl/BaganDG07}.
This tree decomposition has the property of restricting the tree structure such that the bound variables in the
adorned view always form a connected component at the top of the tree.

Finally, we discuss the complexity of choosing the optimal parameters for our two main theorems, when
we want to optimize for delay given a space constraint, or vice versa.

\smallskip
\noindent {\bf \em Organization.} We present our framework in Section~\ref{sec:framework}, along with 
the preliminaries and basic notation. Our two main results (Theorems~\ref{thm:main} 
and~\ref{thm:main2}) are presented in Section~\ref{sec:main}. We then present the detailed construction of the data structure of Theorem~\ref{thm:main} in
Section~\ref{sec:primitive}, and of Theorem~\ref{thm:main2} in Section~\ref{sec:fractionalwidth}.
Finally, in Section~\ref{sec:application}, we discuss some complexity results for optimizing the choice
of parameters.

	\section{Problem Setting}
\label{sec:framework}

In this section we present the basic notions and terminology, and then discuss in detail 
our framework.

\subsection{Conjunctive Queries}
\label{subsec:cq}

In this paper we will focus on the class of {\em conjunctive queries} (CQs), which are expressed as
\begin{equation*} \label{eq:q}
Q(\by) = R_1(\bx_1), R_2(\bx_2), \ldots, R_n(\bx_n) 
\end{equation*}
Here, the symbols $\by,\bx_1, \dots, \bx_n$ are vectors that contain {\em variables} 
or {\em constants}, the atom $Q(\by)$ is the {\em head} of the query, and the atoms
$R_1(\bx_1), R_2(\bx_2), \ldots, R_n(\bx_n)$ form the {\em body}.
The variables in the head are a subset of the variables that appear in the body. A CQ is 
{\em full} if every variable in the body appears also in the head, and it is {\em boolean} 
if the head contains no variables, \ie it is of the form $Q()$.
We will typically use the symbols 
$x,y,z,\dots$ to denote variables, and $a,b,c,\dots$ to denote constants.
If $D$ is an input database, we denote by $Q(D)$ the result of running $Q$
over $D$.


\smallskip
\introparagraph{Natural Joins}
If a CQ is full, has no constants and no repeated variables in the same atom, then we
say it is a {\em natural join query}. For instance,
the triangle query $\Delta(x,y,z) = R(x,y), S(y,z), T(z,x)$ is a natural join query.
A natural join can be represented equivalently as a {\em hypergraph} 
$\mathcal{H} = (\nodes, \edges)$, where $\nodes$ is the set of variables, and
for each hyperedge $F \in \edges$ there exists a relation $R_F$ with variables $F$. 
We will write the join as $\Join_{F \in \edges} R_F$.
The size of relation $R_F$ is denoted by $|R_F|$. Given a set of variables $I \subseteq \nodes$, we define $\edges_{I} = \setof{F \in \edges}{F \cap I \neq \emptyset}$.

\smallskip
\introparagraph{Valuations}
A {\em valuation} $v$ over a subset $V$ of the variables is a total function that maps
each variable $x \in V$ to a value $v(x) \in \domain$, where $\domain$ is a domain
of constants. Given a valuation $v$ of the variables $(x_{i_1}, \dots, x_{i_\ell})$, 
we denote $R_F(v) = R_F \ltimes \{ v(x_{i_1}, \dots, x_{i_\ell}) \}$. 

\smallskip
\introparagraph{Join Size Bounds}
Let $\mathcal{H} = (\nodes, \edges)$ be a hypergraph, and $S \subseteq \nodes$.
A weight assignment $\bu = (u_F)_{F \in \edges}$ 
is called a {\em fractional edge cover} of $S$ if 
$(i)$ for every $F \in \edges, u_F \geq 0$  and $(ii)$ for every
$x \in S, \sum_{F:x \in F} u_F \geq 1$. 
The {\em fractional edge cover number} of $S$, denoted by
$\rho^*_{\mathcal{H}}(S)$ is the minimum of
$\sum_{F \in \edges} u_F$ over all fractional edge covers of $S$. 
We write $\rho^*(\mathcal{H}) = \rho^*_{\mathcal{H}}(\nodes)$.

In a celebrated result, Atserias, Grohe and Marx~\cite{AGM} proved that
for every fractional edge cover $\bu$ of $\nodes$,  the size 
of a natural join is bounded using the following inequality, 
known as the {\em AGM inequality}:
\begin{align}\label{eq:agm}
|\Join_{F \in \edges} R_F| \leq \prod_{F \in \edges} |R_F|^{u_F}
\end{align}
The above bound is constructive~\cite{skewstrikesback,ngo2012worst}: 
there exist worst-case algorithms that compute the join $\Join_{F \in \edges} R_F$ in 
time $O(\prod_{F \in \edges} |R_F|^{u_F})$ for every fractional edge cover $\bu$ of $\nodes$.

\smallskip
\introparagraph{Tree Decompositions}
Let $\mathcal{H} = (\nodes, \edges)$ be a hypergraph of a natural join query $Q$.
A {\em tree decomposition} of $\mathcal{H}$ is  a tuple 
$(\htree, (\bag_t)_{t \in V(\htree)})$ where $\htree$ is a tree, and 
every $\bag_t$ is a subset of $\nodes$, called the {\em bag} of $t$, such that 
\begin{packed_enum}
	\item  
	each edge in $\edges$ is contained in some bag $\bag_t$; and
	\item 
	for each $x \in \nodes$, the set of nodes $\{t \mid x \in \bag_t\}$ is connected in $\htree$.
\end{packed_enum}

The {\em fractional hypertree width} of a tree decomposition is 
defined as $\max_{t \in V(\htree)} \rho^*(\bag_t)$, where
$\rho^*(\bag_t)$ is the minimum fractional edge cover of the vertices in $\bag_t$.
The  fractional hypertree width of a query $Q$, denoted $\fhw{Q}$, is the minimum 
fractional hypertree width among all tree decompositions of its hypergraph.

\smallskip
\introparagraph{Computational Model}
To measure the running time of our algorithms, we will use the uniform-cost RAM 
model~\cite{hopcroft1975design}, where data values as well as pointers to
databases are of constant size. Throughout the paper, all complexity results are 
with respect to data complexity (unless explicitly mentioned), 
where the query is assumed fixed. 

We use the notation $\tilde{O}$ to hide a polylogarithmic 
factor $\log^k |D|$ for some constant $k$, where $D$ is the input database.

\subsection{Adorned Views}

In order to model access patterns over a view $Q$ defined over the input database,
we will use the concept of {\em adorned views}~\cite{AdornedViews}.
In an adorned view, each variable in the head of the view definition is associated with a 
binding type, which can be either {\em bound} $(\bound)$ or {\em free} ($\free$).
A view $Q(x_1, \dots, x_k)$ is then written as $Q^{\eta}(x_1, \dots, x_k)$, where
$\eta \in \{\bound,\free\}^k$ is called the {\em access pattern}.
We denote by $\nodes_\bound$ (resp. $\nodes_\free$) the set of bound (resp. free)
variables from $\{x_1, \dots, x_k\}$. 

We can interpret an adorned view as
a function that maps a valuation over the bound variables $\nodes_\bound$ to a
relation over the free variables $\nodes_\free$.
In other words, for each valuation $v$ over $\nodes_\bound$, the adorned view returns the
answer for the query 
$Q^\eta[v] = \{ \nodes_\free \mid Q(x_1, \dots, x_k) \wedge \forall x_i \in \nodes_\bound: x_i = v(x_i)  \}$,
which we will also refer to as an {\em access request}.

\begin{example}
	$\Delta^{\bound \bound \free}(x,y,z) =$  $R(x,y),$ $ S(y,z), T(z,x)$ captures the following access pattern: given values $x=a,y=b$, list all the $z$-values that form a 
	triangle with the edge $R(a,b)$. 
	As another example,  $\Delta^{\free \free \free}(x,y,z) = R(x,y), S(y,z), T(z,x)$
	simply captures the case where we want to perform a full enumeration of all the triangles in the
	result.
	Finally, $\Delta^{\bound}(x) = R(x,y), S(y,z), T(z,x)$ expresses the access pattern where
	given a node with $x=a$, we want to know whether there exists a triangle that contains it or not.
\end{example}   

An adorned view $Q^\eta(x_1, \dots, x_k)$ is {\em boolean} if every head variable is bound, 
it is {\em non-parametric} if every head variable is free, and it is {\em full} if the CQ if full
(\ie, every variable in the body also appears in the head).
Of particular interest is the adorned view that is full and non-parametric, which we call the
{\em full enumeration view}, and simply asks to output the whole result.

\tikzset{%
	>={Latex[width=2mm,length=2mm]},
	base/.style = {rectangle, rounded corners, draw=black,
		minimum width=2cm, minimum height=3cm,
		text centered, font=\sffamily},
	other/.style =  {minimum width=1cm, draw=none,fill=none,text centered, font=\sffamily}                          
}

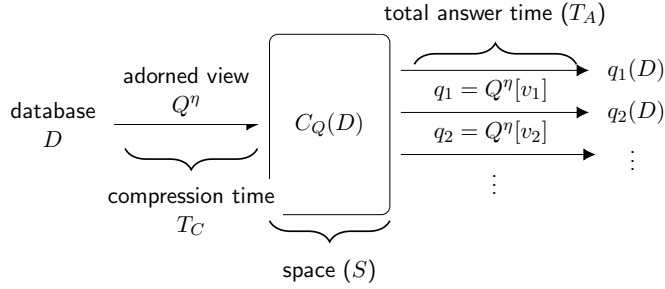
\begin{figure}[t]
	\begin{center}
		\scalebox{0.8}{\begin{tikzpicture}[node distance=0.1cm,
			every node/.style={fill=white, font=\sffamily}, align=center]
			
			\node (input)  [other]       {database \\ $D$};
			\node (compressed)     [base, right of=input, xshift=4.5cm]      { $C_Q(D)$};
			\node (query1)     [other, right of=compressed, xshift=5cm,yshift=-5mm]      {$\vdots$ };
			\node (query2)     [other, below of=query1, yshift=8mm]      { $q_2(D)$};
			\node (query3)     [other, below of=query2, yshift=8mm]      { $q_1(D)$};
			\draw[->,shorten >=5pt, shorten <=5pt ]  (input) -- node [above,midway] {adorned view \\ $Q^\eta$}  (compressed) ;
			\def\brcpad{12mm}   
			\draw [thick, decoration={brace,mirror,raise=4mm,amplitude=10pt},decorate,] 
			($(input)+(\brcpad,0mm)$) to node[below=9mm] {compression time \\ $T_C$} ($(compressed)+(-\brcpad,0mm)$) ;
			\draw [thick, decoration={brace,mirror,raise=16mm,amplitude=10pt},decorate,] 
			($(compressed)+(-10mm,0pt)$) to node[below=21mm] {space ($S$)} ($(compressed)+(10mm,0pt)$) ;
			\draw[->,shorten >=5pt, shorten <=5pt ]  (query1-|compressed.east) -- node [below,midway] { $\vdots$ }  (query1) ;  
			\draw[->,shorten >=5pt, shorten <=5pt ]  (query2-|compressed.east) -- node [below,midway] { $q_2 = Q^\eta[v_2]$ }  (query2) ;  
			\draw[->,shorten >=5pt, shorten <=5pt ]  (query3-|compressed.east) -- node [below,midway] { $q_1 = Q^\eta[v_1]$ }  (query3) ;  
			\draw [thick, decoration={brace,raise=15mm,amplitude=10pt},decorate,] 
			($(query1-|compressed.east)+(4mm,0mm)$) to node[above=20mm] {total answer time ($T_A$)} ($(query1)+(-10mm,0mm)$) ; 
			\end{tikzpicture}}
	\end{center}
	\caption{Depiction of the compression framework along with the parameters.}  
	\label{fig:framework}
\end{figure}

\subsection{Problem Statement}
\label{subsec:cq}

Given an adorned view $Q^\eta(x_1,\dots, x_k)$ and an input database $D$, our goal is to answer any
access request $Q^\eta[v]$ that conforms to the access pattern $\eta$.
The view $Q$ can be expressed through any type of query, but in this work we will focus on the 
case where $Q$ is a conjunctive query. 

There are two extremal approaches to handle this problem. The first solution is to answer any such query 
directly on the input database $D$, without materializing $Q(D)$. This solution is 
efficient in terms of space, but it can lead to inefficient query answering. For instance, consider the
adorned view $\Delta^{\bound \bound \free}(x,y,z) = R(x,y), S(y,z),$ $ T(z,x)$. 
Then, every time we are given new values $x=a, y=b$, we would have to compute all the nodes $c$
that form a triangle with $a,b$, which can be very expensive.

The second solution is to materialize the view $Q(D)$, and then answer
any incoming query over the materialized result. For example, we could choose to materialize
all triangles, and then create an appropriate index over the output result. The drawback of this approach is 
that it requires a lot of space, which may not be available.

We propose to study the solution space between these two extremal solutions, that is, 
instead of materializing all of $Q(D)$, we would like to store a {\em compressed
	representation} $\cmpr{Q}(D)$ of $Q(D)$. The compression function
$\cmpr{Q}$ must guarantee that the compression is lossless, \ie,
there exists a decompression function $\dmpr{Q}$ such that for every
database $D$, it holds that $\dmpr{Q}(\cmpr{Q}(D)) = Q(D)$. We compute the
compressed representation $\cmpr{Q}(D)$
during a {\em preprocessing phase}, and then answer any access request in an 
{\em online phase}.

\smallskip
\introparagraph{Parameters}
Our goal is to construct a compression that is  as space-efficient as possible, 
while it guarantees that we can 
efficiently answer any access query. In particular, we are interested in measuring the
tradeoff between the following parameters, which are also depicted in Figure~\ref{fig:framework}:
\begin{packed_grep}
	\item {\bf Compression Time} ($T_C$): the time to compute 
	$\cmpr{Q}(D)$ during the preprocessing phase. 
	\item {\bf Space} ($S$): the size of $\cmpr{Q}(D)$. 
	\item {\bf Answer Time:} this parameter measures the time to enumerate a query result, where
	the query is of the form $Q^\eta[v]$. The enumeration algorithm must $(i)$ enumerate the 
	query result without any repetitions of tuples, and $(ii)$ use only $O(\log |D|)$ extra memory\footnote{Memory requirement also depends on the memory required for executing the join algorithm. Note that worst case optimal join algorithms such as NPRR~\cite{ngo2012worst} can be executed using $\log |D|$ memory assuming query size is constant and all relations are sorted and indexed.}. 
	We will measure answer time in two different ways.
	\begin{packed_enum}
		\item {\bf delay} ($\delta$): the maximum time to output any two consecutive tuples (and also the 
		time to output the first tuple, and the time to notify that the enumeration has completed). 
		\item {\bf total answer time} ($T_A$): the total time to output the result.
	\end{packed_enum}
\end{packed_grep}

In the case of a boolean adorned view, the delay and the total answer time coincide. In an ideal 
situation, both the compression time and the space are linear to the input size and any query can be answered with constant delay $O(1)$. As we will see later,  
this is achievable in certain cases, but in most cases we have to tradeoff space and 
preprocessing time for delay and total answer time.

\subsection{Some Basic Results}
\label{sec:few}

We present here some basic results that set up a baseline for our framework.
We will study the case where the given view definition $Q$ is a conjunctive query. 

Our first observation is that if we allow the compression time to be at least $\Omega(|D|)$,
we can assume without loss of generality that the adorned view $Q^\eta$
has no constants or repeated variables in a single atom. Indeed, we can first do a 
linear time computation to rewrite the adorned view $Q^\eta$ to a new view where constants
and repeated variables are removed, and then compute the compressed representation
for this new view (with the same adornment). 

\begin{example}
	Consider $Q^{\free \bound}(x,z) = R(x,y,a),S(y,y,z)$. We can
	first compute in linear time $R'(x,y) = R(x,y,a)$ and $S'(y,z) = S(y,y,z)$, and then rewrite the 
	adorned view as $Q^{\free \bound}(x,z) = R'(x,y),S'(y,z)$. 
\end{example}

Hence, whenever the adorned view is a full CQ, we can w.l.o.g. assume that it is  
a natural join query. We now state a simple result for the case where the adorned view is full 
and every variable is bound.

\begin{restatable}{proposition}{linearspacecde}\label{prop:all:bound}
	Suppose that the adorned view is a natural join query with head 
	$Q^{\bound \cdots \bound}(x_1, \dots, x_k)$.
	Then, in time $T_C = O(|D|)$, we can construct a data structure with
	space $S = O(|D|)$, such that we can answer any access request
	over $D$ with constant delay $\delta = O(1)$.
\end{restatable}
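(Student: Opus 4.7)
The plan is to observe that when every head variable is bound, an access request $Q^\eta[v]$ is essentially a boolean membership test: since $\nodes_\free = \emptyset$, the only question is whether $(v(x_1),\dots,v(x_k))$ belongs to the natural join $\Join_{F\in\edges} R_F$. By the definition of the natural join, this holds if and only if, for every atom $R_F$, the restriction $v|_F$ of the valuation to the variables of $F$ is a tuple of $R_F$. Thus the enumeration task reduces to performing $|\edges|$ independent point lookups, one per atom.

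First I would build, in the preprocessing phase, a dictionary $\dict_F$ for each relation $R_F$ that supports worst-case constant-time membership queries keyed on the full tuple of attributes of $F$. Using, e.g., the Fredman--Koml\'os--Szemer\'edi perfect hashing scheme (or any comparable static dictionary), each $\dict_F$ can be constructed in $O(|R_F|)$ time and $O(|R_F|)$ space. Since $Q$ is fixed and $|\edges|$ is a constant under data complexity, the total preprocessing time is $T_C = O(\sum_{F\in\edges} |R_F|) = O(|D|)$ and the total space is $S = O(|D|)$, matching the claimed bounds. Note that because we assumed (without loss of generality, as discussed right before the statement) that $Q^\eta$ has no constants or repeated variables within a single atom, each $R_F$ is already in the form required for the dictionary lookup, without any further preprocessing.

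To answer an access request given a valuation $v$, I would iterate over the atoms of $Q$ and, for each $F\in\edges$, query $\dict_F$ with the tuple $v|_F$. If every lookup succeeds, output the empty tuple (the unique element of $Q^\eta[v]$) and then signal end-of-enumeration; otherwise signal end-of-enumeration immediately. Each lookup costs $O(1)$ in the worst case, and we perform $|\edges|=O(1)$ of them, so the delay between emitting the first tuple (if any), the last tuple, and the end-of-enumeration signal is $\delta = O(1)$. The extra working memory is $O(\log|D|)$, as required by the enumeration model of Section~\ref{subsec:cq}.

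The only mildly delicate point is guaranteeing \emph{worst-case} $O(1)$ lookup rather than expected $O(1)$; this is handled by invoking a static perfect-hashing dictionary constructible in linear time and space in the uniform-cost RAM model. Everything else is bookkeeping, so I expect no significant obstacle.
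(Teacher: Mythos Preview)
Your proposal is correct and is precisely the natural argument: with all head variables bound, an access request is a membership test that factors into one point lookup per atom, so building a perfect hash on each base relation suffices for $O(|D|)$ preprocessing time and space and $O(1)$ delay. The paper itself states this proposition as a ``simple result'' without giving a proof, so there is nothing to compare against; your write-up is exactly what one would supply.
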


Next, consider the full enumeration view $Q^{\free \cdots \free}(x_1,$ $ \dots, x_k)$. 
A first observation is that if we store the materialized view,
we can enumerate the result in constant delay. From the AGM bound, to achieve
this we need space $|D|^{\rho^*(\mathcal{H})}$, where $\mathcal{H}$ is the hypergraph of $Q$.
However, it is possible to improve upon this naive solution using the concept of a
factorized representation~\cite{DBLP:journals/tods/OlteanuZ15}. 
Let $\fhw{Q}$ denote the {\em fractional hypertree width} of $Q$. Then, the result 
from~\cite{DBLP:journals/tods/OlteanuZ15} can be translated in our terminology as follows.

\begin{proposition}[\cite{DBLP:journals/tods/OlteanuZ15}]
	Suppose that the adorned view is a natural join query with head 
	$Q^{\free \cdots \free}(x_1, \dots, x_k)$.
	Then, in compression time $T_C = \tilde{O}(|D|^{\fhw{Q}})$, we can construct a data structure with
	space $S = O(|D|^{\fhw{Q}})$, such that we can answer any access request over $D$ 
	with constant delay $\delta = O(1)$.
\end{proposition}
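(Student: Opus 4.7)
The plan is to instantiate the standard pipeline from factorized databases: take an optimal fractional hypertree decomposition, materialize each bag as a single compound relation, reduce with Yannakakis-style semi-joins, then enumerate from the resulting acyclic instance with constant delay. Since the statement is an attribution to Olteanu and Z\'avodn\'y~\cite{DBLP:journals/tods/OlteanuZ15}, the job is not to invent a new argument but to recast the known construction into the framework of this paper.

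First, I would fix a tree decomposition $(\htree,(\bag_t)_{t\in V(\htree)})$ of $\hgraph$ with $\max_t \rho^*(\bag_t)=\fhw{Q}$. For each bag $t$, let $Q_t$ be the sub-join of all atoms $R_F$ with $F \subseteq \bag_t$, padded, if necessary, so that every variable of $\bag_t$ is covered. By the constructive AGM bound, $Q_t$ can be computed in time $\tilde{O}(|D|^{\rho^*(\bag_t)}) \le \tilde{O}(|D|^{\fhw{Q}})$ using a worst-case optimal join algorithm, and the resulting relation $M_t$ has at most $|D|^{\fhw{Q}}$ tuples. This gives the claimed compression time and a total storage of $O(|D|^{\fhw{Q}})$ across all bags (the number of bags is a query-size constant).

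Next, I would run a standard two-pass Yannakakis semi-join reduction on the bag relations $\{M_t\}$ along the tree $\htree$: one bottom-up pass $M_t := M_t \ltimes M_{t'}$ for each child $t'$, then one top-down pass in the reverse direction. By the running-intersection property guaranteed by the tree decomposition, the reduced instance is globally consistent, meaning every tuple in every $M_t$ participates in at least one answer to the full join. Each semi-join costs time linear in the sizes of the participating bag relations, i.e.\ $\tilde{O}(|D|^{\fhw{Q}})$, so this stays within the overall budget. After reduction, the join over $\{M_t\}$ is an acyclic natural join whose join tree is exactly $\htree$.

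For the online phase, I would root $\htree$ arbitrarily and enumerate by a depth-first traversal: pick a tuple from the root bag, then for each child bag pick a matching tuple (using a hash index keyed on the shared attributes, which I build once during preprocessing), and so on. Because the instance is globally consistent, every partial tuple extends to a full answer, so each backtrack is cheap; using the classical Bagan-Durand-Grandjean technique for acyclic full CQs, we obtain constant delay $O(1)$ between successive output tuples (and also constant time to the first tuple and to detect termination). Since the adornment is $\free\cdots\free$, there is nothing to specialize per access request and the indices built in preprocessing suffice for $O(1)$ extra memory during enumeration.

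The step that takes the most care is arguing constant delay rather than merely linear total time: one must verify that the semi-join reduction leaves no ``dead ends'' in the traversal and that the per-child lookup uses $O(1)$-time hashing on attributes shared between parent and child bag. Both are standard for acyclic joins, but they are the only non-trivial ingredients beyond the AGM-based bag materialization; all other bookkeeping (removing constants and repeated variables, bounding the number of bags by a query constant) is routine and already covered by the reductions mentioned in Section~\ref{sec:few}.
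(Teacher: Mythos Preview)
Your proposal is correct and matches the paper's treatment. The paper does not give a separate proof of this proposition---it is attributed directly to~\cite{DBLP:journals/tods/OlteanuZ15}---but the construction you describe (materialize each bag via a worst-case optimal join, semi-join reduce bottom-up, build hash indexes keyed on the parent-shared attributes, then enumerate by pre-order traversal) is exactly what the paper later spells out in Section~\ref{sec:fractionalwidth} as the warm-up case $\nodes_\bound=\emptyset$ of Proposition~\ref{prop:cd}. The only cosmetic difference is that the paper uses a single bottom-up semi-join pass rather than your two-pass Yannakakis reduction; since enumeration proceeds top-down from the root, the bottom-up pass alone already guarantees that every partial tuple extends, so your extra top-down pass is harmless but unnecessary.
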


Since every acyclic query has $\fhw{Q} = 1$, for acyclic CQs without projections both the
compression time and space become linear, $O(|D|)$. In the next section,
we will see how we can generalize the above result to an arbitrary adorned view that is full.

	\section{Main Results and Application}
\label{sec:main}

In this section we present our two main results, and show how they can be applied. 
The first result (Theorem~\ref{thm:main}) is a compression
primitive that can be used with any full adorned view. The second result 
(Theorem~\ref{thm:main2}) builds upon Theorem~\ref{thm:main} and query decomposition
techniques to obtain an improved tradeoff between space and delay.

\subsection{First Main Result}

Consider a full adorned view $Q^\eta(x_1, \dots, x_k)$, where $Q$ is a natural join query expressed by  the hypergraph $\mathcal{H} = (\nodes, \edges)$. Recall that $\nodes_\bound, \nodes_\free$ are the bound and free
variables respectively. Since the query is a natural join and there are no projections, 
we have $\nodes_\bound \cup \nodes_\free = \nodes$. We will denote by
$\mu = |\nodes_\free|$ the number of free variables.
We also impose a lexicographic order on the enumeration order of the output
tuples. Specifically, we equip the domain $\domain$ with a total order $\mathbf{\leq}$,
and then extend this to a total order for output tuples in $\domain^{\mu}$ using some
order $x^1_\free, x^2_\free, \dots, x^\mu_\free$ of the free variables.\footnote{{There is no restriction imposed on the lexicographic ordering of the free variables.}}

\begin{example}\label{ex:running}
	As a running example, consider
	\begin{align*}
	Q^{\free \free \free \bound \bound \bound}(x,y,z,w_1, w_2, w_3)  = & 
	R_1(w_1, x,y), R_2(w_2, y,z), \\ &  R_3(w_3, x,z).
	\end{align*}
	We have
	$\nodes_\free = \{x,y,z\}$ and $\nodes_\bound = \{w_1, w_2, w_3\}$. 
	To keep the exposition simple, assume that $|R_1| = |R_2| = |R_3| = N$.
	
	If we materialize the result and create an index with composite key $(w_1, w_2, w_3)$,
	then in the worst case we need space $S = O(N^3)$, but we will be able to
	enumerate the output for every access request with constant delay. On the other hand, if 
	we create three indexes, one for each $R_i$ with key $w_i$, we can compute
	each access request with worst-case running time and delay of $O(N^{3/2})$. 
	Indeed, once we fix the bound variables to constants $c_1,c_2, c_3$, we need to compute the
	join $R_1(c_1, x,y) \Join R_2(c_2, y,z) \Join R_3(c_3, x,z)$, which needs
	time $O(N^{3/2})$ using any worst-case optimal join algorithm.
\end{example}

For any fractional edge cover $\bu$ of  $\nodes$, and $S \subseteq \nodes$, we define 
the {\em slack} of $\bu$ for $S$ as:
\begin{align} 
\slack(S) = \min_{x \in S} \left( \sum_{F: x \in F} u_F \right)
\end{align}
Intuitively, the slack is the maximum positive quantity such that
$(u_F/\alpha(S))_{F \in \edges}$ is still a fractional edge cover of $S$ . By
construction, the slack is always at least one, $\slack(S) \geq 1$.
For our running example, suppose that we pick a fractional edge cover for 
$\nodes$ with  $u_{R_1} = u_{R_3} =u_{R_3} = 1$. 
Then, the slack of $\bu$ for $\nodes_\free$ is $\slack(\nodes_\free) = 2$.

\begin{theorem} \label{thm:main}
	Let $Q^\eta$ be an adorned view over a natural join query with
	hypergraph $(\nodes, \edges)$. Let $\bu$ be any fractional edge cover of $\nodes$. 
	Then, for any input database $D$ and parameter $\tau >0$ we can construct a data 
	structure with
	\begin{align*}
	\text{compression time } T_C & = \tilde{O} (|D| + \prod_{F \in \edges} |R_F|^{u_F}) \\
	\text{ space } S & = \tilde{O} (|D| + \prod_{F \in \edges} |R_F|^{u_F} / \tau^{\slack(\nodes_\free)})
	\end{align*}
	such that for any access request $q = Q^\eta[v]$, we can enumerate its result
	$q(D)$ in lexicographic order with
	\begin{align*}
	\text{delay } \delta & = \tilde{O}(\tau) \\
	\text{answer time } T_A & = \tilde{O}(|q(D)| + \tau \cdot |q(D)|^{1/\slack(\nodes_\free)})
	\end{align*}
\end{theorem}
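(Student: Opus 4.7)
The plan is to construct the data structure by combining indexes on the input relations with a heavy/light decomposition controlled by $\tau$. Write $\alpha := \slack(\nodes_\free)$. The slack enters the argument through the observation that $\bu/\alpha$ is still a valid fractional edge cover of $\nodes_\free$, so the AGM inequality applied to the residual instance (with the bound variables fixed to some $v$) yields both $|q_v(D)| \leq \prod_{F} |R_F(v)|^{u_F/\alpha}$ and equivalently $\prod_F |R_F(v)|^{u_F} \geq |q_v(D)|^{\alpha}$. This inequality is what promotes a $\tau$-factor space saving into a $\tau^{\alpha}$-factor saving.

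For the preprocessing phase, I would first build an index on each relation $R_F$ keyed by every subset of $F \cap \nodes_\bound$, at a cost of $\tilde{O}(|D|)$ time and space. I then classify each bound valuation $v$ as heavy if its residual AGM score $A_v := \prod_F |R_F(v)|^{u_F}$ exceeds $\tau^{\alpha}$, and light otherwise. No storage is required for light valuations. For heavy valuations I materialize $q_v(D)$ sorted in the required lex order, stored compactly (\eg in a trie) and keyed by $v$ in a hash map. The sizes of these stored answers are controlled by combining the bound $|q_v(D)| \leq A_v^{1/\alpha}$ with a H\"older-style summation over heavy $v$ against the global AGM budget $\prod_F |R_F|^{u_F}$. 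To extract the full $\tau^\alpha$ factor (rather than $\tau^{\alpha-1}$, which is what a single-level threshold would give), the partitioning must be refined to operate at the level of sub-keys of the bound variables instead of full valuations, so that the heaviness threshold is charged once per free-variable covering.

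At query time for a request $Q^\eta[v]$: if $v$ is heavy, the stored answer is streamed with $O(1)$ delay per tuple and total time $O(|q_v(D)|)$. If $v$ is light, I invoke a worst-case optimal join~\cite{ngo2012worst} on the residual using the reduced cover $\bu/\alpha$, running in time $\tilde{O}\bigl(\prod_F |R_F(v)|^{u_F/\alpha}\bigr)$. By the heaviness threshold this is at most $\tilde{O}(\tau)$ for light $v$, and the same quantity is always at most $\tilde{O}\bigl(\tau \cdot |q_v(D)|^{1/\alpha}\bigr)$ through the slack-AGM inequality. Collecting and sorting the light-case output contributes at most $\tilde{O}(\tau)$ additional work, giving delay $\tilde{O}(\tau)$ and total time $\tilde{O}\bigl(|q_v(D)| + \tau \cdot |q_v(D)|^{1/\alpha}\bigr)$.

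The main obstacle is the $\tau^{\alpha}$ factor in the space bound: a naive heavy/light split only buys a factor of $\tau$. Extracting the full exponent $\alpha$ requires exploiting the fact that \emph{every} free variable is $\alpha$-covered by $\bu$ simultaneously, so that the threshold test can compound across free variables in the space accounting. Translating this algebraic fact into a concrete balanced encoding of heavy sub-instances — and doing so while preserving lex-order enumeration without additional polylogarithmic overhead beyond what is hidden in $\tilde{O}$ — is the technical heart of the construction.
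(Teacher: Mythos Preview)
Your proposal correctly isolates the central difficulty: a single heavy/light split on bound valuations, with full materialization of the heavy answers, yields space $\prod_F |R_F|^{u_F}/\tau^{\alpha-1}$ rather than $/\tau^{\alpha}$. However, your proposed remedy --- refining the threshold to ``sub-keys of the bound variables'' --- is not the mechanism the paper uses, and it is not clear how it would recover the missing factor of $\tau$.

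The paper's construction differs from yours in two essential ways. First, it never materializes $q_v(D)$ for heavy $v$; it stores only a \emph{single bit} per heavy pair, indicating whether a certain restriction of the output is empty. Second, the refinement is over the \emph{free} variable space, not the bound one: the paper builds a balanced binary tree (the ``delay-balanced tree'') whose nodes are lexicographic intervals $\interval \subseteq \dom_\free$, split so that the AGM cost $T(\interval)$ halves at each level. At level $\ell$ the heaviness threshold is lowered to $\tau_\ell = \tau/2^{\ell(1-1/\alpha)}$, and the dictionary records, for each $\tau_\ell$-heavy $v_\bound$ at a node $w$, whether $(\Join_F R_F(v_\bound)) \ltimes \interval(w)$ is nonempty. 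The H\"older/query-decomposition bound you invoke then shows there are at most $(T(\interval(w))/\tau_\ell)^{\alpha}$ heavy valuations per node, and summing over the tree gives the $\prod_F |R_F|^{u_F}/\tau^{\alpha}$ space bound.

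Query answering is also different from your two-case lookup: it is a tree traversal. At a node with bit $\mathsf{1}$ you recurse into both children; with bit $\mathsf{0}$ you skip; with no entry ($\bot$, i.e.\ light for that interval) you run the worst-case join restricted to $\interval(w)$ in time $O(\tau_\ell)$. The delay and answer-time bounds then follow from a path-length argument in the tree together with Kraft's inequality over the $\bot$-leaves --- this is where the $\tau \cdot |q(D)|^{1/\alpha}$ term arises, not from a direct AGM comparison as you sketch. Your claim that the light-case running time ``is always at most $\tilde{O}(\tau \cdot |q_v(D)|^{1/\alpha})$ through the slack-AGM inequality'' goes the wrong direction: AGM gives $|q_v(D)| \leq A_v^{1/\alpha}$, not the reverse, so that step does not stand on its own.
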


\begin{example}
	Let us apply Theorem~\ref{thm:main} to our running example for
	$\bu = (1,1,1)$ and $\tau = N^{1/2}$. The slack for the free variables
	is $\slack(\nodes_\free)=2$.
	The theorem tells us that we can construct in time $\tilde{O}(N^3)$
	a data structure with space $\tilde{O}(N^{2})$, such that every access request $q$ can 
	be answered with delay $\tilde{O}(N^{1/2})$ and answer time
	$\tilde{O}(|q(D)| + \sqrt{N \cdot |q(D)|})$.	 
\end{example}

We prove Theorem~\ref{thm:main} in Section~\ref{sec:primitive}. We 
next show how to apply the theorem to obtain several results on space-efficient 
compressed representations for CQs.

\smallskip
\introparagraph{Applying Theorem~\ref{thm:main}}
We start with the observation that we can always apply Theorem~\ref{thm:main} by
choosing $\bu$ to be the fractional edge cover with optimal value $\rho^*(\mathcal{H})$.
Since the slack is always $\geq 1$, we obtain the following result. 

\begin{proposition} \label{prop:agm}
	Let $Q^\eta$ be an adorned view over a natural join query with
	hypergraph $\mathcal{H}$.
	Then, for any input database $D$ and parameter $\tau > 0$, we can construct a data structure with
	$$ \text{ space } S = \tilde{O} (|D| + |D|^{\rho^*(\mathcal{H})} / \tau)$$
	such that for any access request $q$, we can enumerate its result
	$q(D)$ in lexicographic order with
	\begin{align*}
	\delta  = \tilde{O}(\tau), \quad T_A  = \tilde{O}(\tau \cdot |q(D)|)
	\end{align*}
\end{proposition}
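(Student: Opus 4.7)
The plan is to observe that Proposition~\ref{prop:agm} is a direct corollary of Theorem~\ref{thm:main}, obtained by instantiating the edge cover $\bu$ as an optimal one and then applying two loose bounds. Concretely, I would take $\bu$ to be any fractional edge cover of $\nodes$ that attains $\sum_{F \in \edges} u_F = \rho^*(\mathcal{H})$; such a $\bu$ exists by definition of the fractional edge cover number.

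The first step is to bound the product $\prod_{F \in \edges} |R_F|^{u_F}$ that appears in the space guarantee of Theorem~\ref{thm:main}. Since $|R_F| \leq |D|$ for every hyperedge $F$ and $u_F \geq 0$, we have $\prod_{F \in \edges} |R_F|^{u_F} \leq |D|^{\sum_F u_F} = |D|^{\rho^*(\mathcal{H})}$. Substituting this into the bound from Theorem~\ref{thm:main} yields the intermediate space estimate $S = \tilde{O}(|D| + |D|^{\rho^*(\mathcal{H})}/\tau^{\slack(\nodes_\free)})$.

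The second step is to get rid of the slack in the exponent of $\tau$. By definition of the slack, $\slack(\nodes_\free) = \min_{x \in \nodes_\free} \sum_{F : x \in F} u_F$, and since $\bu$ is a fractional edge cover of $\nodes \supseteq \nodes_\free$, each inner sum is at least $1$; thus $\slack(\nodes_\free) \geq 1$. Assuming $\tau \geq 1$ (the regime in which the delay bound $\tilde{O}(\tau)$ is meaningful), we get $\tau^{\slack(\nodes_\free)} \geq \tau$, which gives the claimed $S = \tilde{O}(|D| + |D|^{\rho^*(\mathcal{H})}/\tau)$. The delay bound $\delta = \tilde{O}(\tau)$ is inherited verbatim from Theorem~\ref{thm:main}. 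For the total answer time, Theorem~\ref{thm:main} gives $T_A = \tilde{O}(|q(D)| + \tau \cdot |q(D)|^{1/\slack(\nodes_\free)})$; using again $\slack(\nodes_\free) \geq 1$, we have $|q(D)|^{1/\slack(\nodes_\free)} \leq |q(D)|$, so $T_A = \tilde{O}(|q(D)| + \tau \cdot |q(D)|) = \tilde{O}(\tau \cdot |q(D)|)$ since $\tau \geq 1$.

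There is no genuine technical obstacle here: the proposition is simply the crudest AGM-style packaging of Theorem~\ref{thm:main}, dropping both the refined per-relation sizes in favor of $|D|^{\rho^*(\mathcal{H})}$ and the favorable exponent $\slack(\nodes_\free)$ in favor of the trivial lower bound $1$. All of the algorithmic content — the construction of the data structure and the enumeration procedure — is already delivered by Theorem~\ref{thm:main}, and this proposition is exposition that highlights the baseline tradeoff parameterized only by the AGM bound.
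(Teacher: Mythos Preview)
Your proposal is correct and matches the paper's own justification: the paper states just before Proposition~\ref{prop:agm} that one applies Theorem~\ref{thm:main} with an optimal fractional edge cover attaining $\rho^*(\mathcal{H})$ and uses that the slack is always at least $1$. Your write-up simply spells out the two trivial inequalities ($\prod_F |R_F|^{u_F} \le |D|^{\rho^*(\mathcal{H})}$ and $\slack(\nodes_\free) \ge 1$) that the paper leaves implicit.
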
 

Proposition~\ref{prop:agm} tells us that the data structure has a linear tradeoff between space 
and delay. Also, to achieve (almost) constant delay $\delta = \tilde{O}(1)$,
the space requirement  becomes $\tilde{O}(|D|^{\rho^*})$; in other words, the
data structure will essentially materialize the whole result. Our second main result will 
allow us to exploit query decomposition to avoid this case.

\begin{example}
	Consider the following adorned view over the Loomis-Whitney join:
	\begin{align*}
	LW_n^{\bound \cdots \bound \free}(x_1, \dots, x_n)  = &  S_1(x_2, \dots, x_n),  S_2(x_1, x_3, \dots, x_n), \\
	& \dots,  S_n(x_1, \dots, x_{n-1})
	\end{align*}
	The minimum fractional edge cover assigns weight $1/(n-1)$
	to each hyperedge and has $\rho^*=n/(n-1)$. Then, 
	Proposition~\ref{prop:agm} tells us that for $\tau > 0$, we can construct a
	compressed representation with space $S = \tilde{O}(|D|+|D|^{n/(n-1)}/\tau)$ and
	delay $\delta = \tilde{O}(\tau)$. 
	Notice that if we aim for linear space, we can choose $\tau = |D|^{1/(n-1)}$
	and achieve a small delay of $\tilde{O}(|D|^{1/(n-1)})$.
\end{example}

Proposition~\ref{prop:agm} ignores the effect of the slack for the free variables. 
The next example shows that taking slack into account is critical in obtaining better tradeoffs.

\begin{example} \label{ex:star}
	Consider the adorned view over the {\em star join}
	$$ S_n^{\bound \cdots \bound \free}(x_1,  \dots, x_n, z) = R_1(x_1,z), R_2(x_2,z), \dots, R_n(x_n, z)$$ 
	The star join is acyclic, which means that the $d$-representation of the full result takes only
	linear space. This $d$-representation can be used for any adornment of $S_n$ where $z$ is a bound
	variable; hence, in all these cases we can guarantee $O(1)$ delay using linear compression space.
	However, we cannot get any guarantees when $z$ is free, as is in the adornment used above.
	
	If we apply Proposition~\ref{prop:agm}, we get space $\tilde{O}(|D|+|D|^n/\tau)$ with delay
	$\tilde{O}(\tau)$. However, we can improve upon this by noticing that for the fractional edge cover
	where $u_1 = \dots = u_n = 1$, the slack is $\slack(\nodes_\free) = n$. 
	Hence, Theorem~\ref{thm:main} 
	tells us that with space $\tilde{O}(|D|^n/\tau^n)$ we get delay $\tilde{O}(\tau)$ and answer time
	$\tilde{O}(|Q(D)| + \tau \cdot |Q(D)|^{1/n})$.
\end{example}

We should note here that our data structure strictly generalizes the data structure proposed in~\cite{Cohen2010}
for  the problem of {\em fast set intersection}. 
Given a family of sets $S_1, \hdots, S_n$, the goal
in this problem is to construct a space-efficient data structure, such that given any two
sets $S_i, S_j$ we can compute their intersection $S_i \cap S_j$ as fast as possible.
It is easy to see that this problem is captured by the adorned view 
$S_2^{\bound \bound \free}(x_1,x_2,z) = R(x_1,z), R(x_2,z)$, where $R$ is a relation that
describes set membership ($R(S_i,a)$ means that $a \in S_i$).

\subsection{Second Main Result}

The direct application of Theorem~\ref{thm:main} can lead to suboptimal tradeoffs 
between space and time/delay, since it ignores the structural properties  of the query. 
In this section, we show how to overcome this problem by combining 
Theorem~\ref{thm:main} with tree decompositions.

We first need to introduce a variant of a tree decomposition of a hypergraph
$\mathcal{H} = (\nodes,\edges)$, defined with respect to a given subset $C \subseteq \nodes$.

\begin{definition}[Connex Tree Decomposition~\cite{DBLP:conf/csl/BaganDG07}] \label{def:fhw}
	Let $\mathcal{H} = (\nodes,\edges)$ be a hypergraph, and $C \subseteq \nodes$.
	A \emph{$C$-connex tree decomposition} of $\mathcal{H}$ 
	is a tuple $(\mathcal{T},A)$, where:
	\begin{packed_enum}
		\item $\mathcal{T}=(\htree, (\bag_t)_{t \in V(\htree)})$ is a tree decomposition of $\mathcal{H}$; and 
		\item 
		$A$ is a connected subset of $V(\htree)$  such that
		$\bigcup_{t \in A} \bag_{t} = C$.
	\end{packed_enum}
\end{definition}

In a $C$-connex tree decomposition, the existence of the set $A$ forces 
the set of nodes that contain some variable from $C$ to be connected in the tree.

\begin{example}\label{ex:path6}
	Consider the hypergraph $\hgraph$ in Figure~\ref{fig:cfhw}.
	The decomposition depicted on the left is a $C$-connex tree decomposition for $C = \emptyset$.
	The $C$-connex tree decomposition on the right is for $C = \{v_1, v_5, v_6\}$. In both
	cases, $A$ consists of a single bag (colored grey) which contains exactly the variables in $C$.
\end{example}

In~\cite{DBLP:conf/csl/BaganDG07}, $C$-connex decompositions were used to obtain 
compressed representations of CQs with projections (where $C$ is the set of the head
variables). In our setting, we will choose $C$ to be the set of bound variables in the adorned
view, \ie, $C = \nodes_\bound$. Additionally, we will use a novel notion of width,
which we introduce next.

Given a $\nodes_\bound$-connex tree decomposition $(\mathcal{T},A)$, 
we orient the tree $\htree$ from some node in $A$. For any node $t \in V(\htree) \setminus A$,
we denote by $\textsf{anc}(t)$ the union of all the bags for the nodes that are the ancestors of $t$.
Define $\nodes_\bound^t = \bag_t \cap \textsf{anc}(t)$ and 
$\nodes_\free^t  = \bag_t \setminus \nodes_\bound^t$. Intuitively, 
$\nodes_\bound^t$ (resp.  $\nodes_\free^t$) are
the bound (resp. free) variables for the bag $t$ as we traverse the tree top-down. 
Figure~\ref{fig:cfhw} depicts each bag $\bag_t$ as $\nodes_\free^t \mid \nodes_\bound^t$.

Given a $\nodes_\bound$-connex tree decomposition, a {\em delay assignment} is a function 
$\delay: V(\htree) \rightarrow [0,\infty)$ that maps each bag to a non-negative number, such
that $\delay(t) = 0$ for $t \in A$. Intuitively, this assignment means that we want to achieve
a delay of $|D|^{\delay(t)}$ for traversing this particular bag.
For a bag $t$, define
\begin{align} \label{eq:rho}
\rho^+_t = \min_{\bu}  \left( \sum_{F} u_F- \delta(t) \cdot \alpha(\nodes_\free^t ) \right)
\end{align}
where $\bu$ is a fractional edge cover of the bag $\bag_t$.
The {\em $\nodes_\bound$-connex fractional hypertree $\delta$-width} of 
$(\mathcal{T},A)$ is defined as $\max_{t \in V(\htree) \setminus A} \rho^+_t $.
It is critical that we ignore the bags in the set $A$ in the max computation. 
We also define $u^+_t = \sum_{F} u'_F$ where $\bu'$ is the fractional edge cover of 
bag $\bag_t$ that minimizes $\rho^+_t$.

When $\delta(t) = 0$ for every bag $t$, the $\delta$-width of any 
$\nodes_\bound$-connex tree decomposition becomes 
$\max_{t \in V(\htree) \setminus A} \rho^*(\bag_t)$, where $\rho^*(\bag_t)$ is the
fractional edge cover number of $\bag_t$.
Define $\textsf{fhw}(\hgraph \mid \nodes_\bound)$ as the smallest such
quantity among all  $\nodes_\bound$-connex tree decompositions of $\mathcal{H}$.
When  $\nodes_\bound = \emptyset$, then
$\textsf{fhw}(\hgraph \mid \nodes_\bound) = \textsf{fhw}(\hgraph)$, thus recovering
the notion of fractional hypertree width. Appendex~\ref{sec:width} shows the relationship between $\textsf{fhw}(\hgraph \mid \nodes_\bound)$ and other hypergraph related parameters.

Finally, we define the $\delay$-height of a $\nodes_\bound$-connex tree decomposition to be 
the maximum weight root-to-leaf path, where the weight of a path $P$ is defined as $\sum_{t \in P} \delay(t)$.

\begin{example}
	Consider the decomposition on the right in Figure~\ref{fig:cfhw}, and a delay assignment $\delay$
	that assigns $1/3$ to node $t_1$ with $\bag_{t_1}= \{v_2,v_4,v_1,v_5\}$, $1/6$ to the bag $t_2$ with $\bag_{t_2} = \{v_2, v_3,v_4\}$, and
	$0$ to the node $t_3$ with $\bag_{t_3} = t_3 = \{v_6,v_7\}$. The $\delta$-height of the tree is $h = \max\{1/3+1/6,0\} = 1/2$. To
	compute the fractional hypertree $\delay$-width, observe that we can cover the bag 
	$\{v_2,v_4,v_1,v_5\}$ by assigning weight of 1 to the edges $\{v_1,v_2\}, \{v_4,v_5\}$, in
	which case $\rho^+_{t_1} = (1+1)-1/3\cdot 1 = 5/3$. We also have
	$\rho^+_{t_2} = (1+1)-1/6\cdot 2 = 5/3$, and $\rho^+_{t_3} = 1$. Hence, the 
	fractional hypertree $\delay$-width is $5/3$. Also, observe that $u^+_{t_1} = u^+_{t_2} = 2$ and $u^+_{t_3} = 1$.\end{example}

\begin{figure}[t]
	\centering
	\scalebox{.9}{\begin{tikzpicture}
		\tikzset{edge/.style = {->,> = latex'},
			vertex/.style={circle, thick, minimum size=5mm}}
		\def\x{0.25}
		
		\begin{scope}[fill opacity=1]
		\node[vertex]  at (0,0) {$v_1$};
		\node[vertex]  at (1,0) {$v_2$};
		\node[vertex]  at (2,0) {$v_3$};
		\node[vertex]  at (3,0) {$v_4$};
		\node[vertex]  at (4,0) {$v_5$};
		\node[vertex]  at (5,0) {$v_6$};
		\node[vertex]  at (6,0) {$v_7$};
		
		\foreach \i in {0,...,5}{
			\draw (\i+\x,0) -- (1+\i-\x,0);}
		
		\draw[fill=black!10] (0,-1) ellipse (0.4cm and 0.33cm) node {};
		\draw[] (0,-2) ellipse (0.8cm and 0.33cm) node {\small $\prq{v_2, v_1}{}$};
		\draw[] (0,-3) ellipse (0.8cm and 0.33cm) node {\small $\prq{v_3}{v_2}$};
		\draw[] (0,-4) ellipse (0.8cm and 0.33cm) node {\small $\prq{v_4}{v_3}$}; 
		\draw[] (0,-5) ellipse (0.8cm and 0.33cm) node {\small $\prq{v_5}{v_4}$};			  			  
		\draw[] (0,-6) ellipse (0.8cm and 0.33cm) node {\small $\prq{v_6}{v_5}$};
		\draw[] (0,-7) ellipse (0.8cm and 0.33cm) node {\small $\prq{v_7}{v_6}$};
		
		\draw[edge] (0,-1.33) -- (0,-1.65);
		\draw[edge] (0,-2.33) -- (0,-2.65);
		\draw[edge] (0,-3.33) -- (0,-3.65);			  
		\draw[edge] (0,-4.33) -- (0,-4.65);
		\draw[edge] (0,-5.33) -- (0,-5.65);
		\draw[edge] (0,-6.33) -- (0,-6.65);
		
		\draw[dotted, line width=0.2mm] (-1,-0.4) -- (7.2, -0.4);
		\draw[dotted, line width=0.2mm] (1.8,-0.4) -- (1.8, -7.2);
		
		\draw[fill=black!10] (5,-2) ellipse (1cm and 0.33cm) node {\small ${\color{red} v_1, v_5, v_6}$};
		\draw[] (3.5,-4) ellipse (1.2cm and 0.33cm) node {\small \small $\prq{v_2, v_4}{\color{red} v_1, v_5}$};
		\draw[] (3.5,-6) ellipse (1cm and 0.33cm) node {\small $\prq{v_3}{v_2 , v_4}$};
		\draw[] (6.5,-4) ellipse (0.8cm and 0.33cm) node {\small $\prq{v_7}{\color{red} v_6}$};
		
		\draw[edge] (5,-2.33) -- (3.5,-3.65);
		\draw[edge] (5,-2.33) -- (6.5,-3.65);
		\draw[edge] (3.5,-4.33) -- (3.5,-5.65);
		\end{scope}	
		\end{tikzpicture}}
	
	\caption{The hypergraph $\hgraph$ for a path query of length $6$, 
		along with two $C$-connex tree decompositions.
		The  decomposition on the left has $C = \emptyset$, and the decomposition on the right  
		$C = \{v_1, v_5, v_6 \}$. The variables in $C$ are colored red, and the grey nodes are the
		ones in the set $A$.}
	\label{fig:cfhw}	
\end{figure}
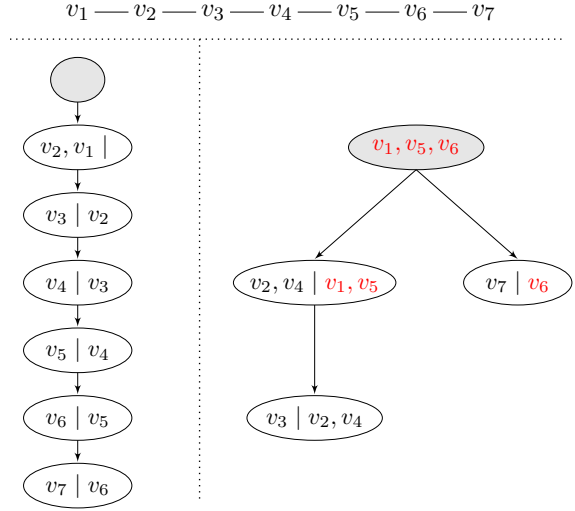

\begin{theorem}\label{thm:main2}
	Let $q = Q^\eta$ be an adorned view over a natural join query with hypergraph $\hgraph = (\nodes, \edges)$. 
	Suppose that $\hgraph$ admits a $\nodes_\bound$-connex tree decomposition.
	Fix any delay assignment $\delay$, and let $f$ be the
	$\nodes_\bound$-connex fractional hypertree $\delay$-width, $h$ the 
	$\delay$-height of the decomposition, and  $u^* = \max_{t \in V(\htree) \setminus A} u^+_t$.
	
	Then, for any input database $D$, we can construct a data structure
	in compression time  $T_C = \tilde{O}(|D| + |D|^{u^* + \max_{t} \delta(t)})$ 
	with space $S = \tilde{O}(|D| + |D|^f)$, such that we can answer any access request 
	with delay $\tilde{O}(|D|^{h})$. 
\end{theorem}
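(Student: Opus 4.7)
The plan is to use the $\nodes_\bound$-connex tree decomposition to split the query into per-bag subqueries, apply Theorem~\ref{thm:main} locally at every bag outside $A$, and then stitch the resulting local compressed representations together with a top-down tree traversal.

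Step 1 (Handling the bags in $A$). I would orient the tree decomposition toward a root chosen inside $A$. Since the $\nodes_\bound$-connex property guarantees that $\bigcup_{t \in A}\bag_t = \nodes_\bound$ and that $A$ is connected, the relations whose atoms are covered by $A$ depend only on the bound valuation of the access request. Adapting the construction underlying Proposition~\ref{prop:all:bound}, I would build, in $\tilde O(|D|)$ time and space, a lookup index on this sub-join keyed by the bound valuation.

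Step 2 (Per-bag preprocessing for $t \in V(\htree) \setminus A$). For each such bag I view the natural join $Q_t = \Join_{F \subseteq \bag_t} R_F$ as an adorned view in which $\nodes_\bound^t = \bag_t \cap \textsf{anc}(t)$ is bound and $\nodes_\free^t = \bag_t \setminus \nodes_\bound^t$ is free. I would apply Theorem~\ref{thm:main} to $Q_t$ with the fractional edge cover $\bu'$ that attains $\rho^+_t$ in equation~(\ref{eq:rho}) and with parameter $\tau_t = |D|^{\delay(t)}$. By the definitions of $u^+_t$, $\slack(\nodes_\free^t)$ and $\rho^+_t$, this yields space $\tilde O(|D| + |D|^{u^+_t}/|D|^{\delay(t)\,\slack(\nodes_\free^t)}) = \tilde O(|D| + |D|^{\rho^+_t})$ and per-bag preprocessing $\tilde O(|D| + |D|^{u^+_t})$. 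Taking the maximum over the (constantly many) bags, the total space becomes $\tilde O(|D| + |D|^{f})$ and the dominant preprocessing cost is $\tilde O(|D| + |D|^{u^*})$.

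Step 3 (Enumeration). Given an access request with bound valuation $v$, I first probe the index from Step 1 to confirm that $v$ is realizable at the top. If so, I perform a depth-first traversal of $\htree$; at each visited bag $t \notin A$ the valuation of $\nodes_\bound^t$ has already been fixed by the ancestors, so the Theorem~\ref{thm:main} data structure enumerates the values of $\nodes_\free^t$ with delay $\tilde O(|D|^{\delay(t)})$. Concatenating the valuations visited along a root-to-leaf walk yields one full output tuple.

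The main obstacle is to argue that the delay between two consecutive output tuples is $\tilde O(|D|^{h})$: a partial valuation at some bag could fail to extend to a full solution, causing wasted enumeration. To avoid this, I would add a standard bottom-up semi-join cleaning pass before invoking Theorem~\ref{thm:main}, restricting each bag's relations to tuples whose projection onto the shared variables is realizable in every child subtree. After cleaning, a careful amortization along the tree shows that whenever the enumeration advances at a bag $t^*$, producing the next complete tuple costs at most $|D|^{\delay(t')}$ at every descendant bag $t'$, so the delay is bounded by the path product $\prod_{t\in P}|D|^{\delay(t)} = |D|^{\sum_{t\in P}\delay(t)} \le |D|^{h}$. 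The cleaning pass itself is dominated by one enumeration per bag using the data structures of Step~2, which accounts for the extra $|D|^{\max_t \delay(t)}$ factor in $T_C = \tilde O(|D| + |D|^{u^* + \max_t \delay(t)})$.
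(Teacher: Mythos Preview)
Your overall strategy---apply Theorem~\ref{thm:main} bag-by-bag with the parameters dictated by the $\nodes_\bound$-connex decomposition, then enumerate by a top-down traversal---is exactly the paper's approach, and your space and per-bag preprocessing analysis in Steps~1--2 is correct.

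The gap is in the cleaning step. Restricting each base relation $R_F$ (for $F\subseteq\bag_t$) to tuples whose projection onto the shared variables is realizable in the child subtree does \emph{not} ensure that the bag's join output is free of dangling tuples. Take a bag on $\{x,y,z\}$ with $R(x,y)=\{(1,1),(2,2)\}$ and $S(y,z)=\{(1,2),(2,1)\}$, and a single child whose shared variables are $\{x,z\}$ with realizable pairs $\{(1,1),(2,2)\}$. Every individual $x$-value and every individual $z$-value is realizable, so neither $R$ nor $S$ is pruned, yet both bag outputs $(1,1,2)$ and $(2,2,1)$ are dangling. After such a ``cleaning'' the per-bag enumerator can still emit unboundedly many non-extendable valuations, so the claimed path-product delay bound does not follow. (There is also an order-of-operations issue: you place the cleaning ``before invoking Theorem~\ref{thm:main}'' but then cost it out ``using the data structures of Step~2'', which are precisely those Theorem~\ref{thm:main} structures.) The obvious fix---materialize each bag and run a genuine Yannakakis semi-join---would give the right invariant but needs $\tilde O(|D|^{u^*})$ space, which can exceed the claimed $\tilde O(|D|^{f})$.

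The paper avoids both problems by cleaning the \emph{dictionaries} rather than the relations. It first builds the Theorem~\ref{thm:main} structure $(\tree_t,\dict_t)$ at every bag; then, in a bottom-up post-order pass, for each heavy pair $(v^t_\bound,\interval(w))$ with $\dict_t(w,v^t_\bound)=\mathsf 1$ it re-enumerates the bag join restricted to $\interval(w)$ and, for each output tuple, probes the child's structure via Algorithm~\ref{algo:answer} at cost $\tilde O(|D|^{\delta(\text{child})})$. If no tuple in the interval extends, the entry is flipped to $\mathsf 0$. This keeps the space at $\tilde O(|D|^{f})$ while guaranteeing that every interval still marked $\mathsf 1$ contains an extendable tuple, which is the invariant that drives the multiplicative-along-a-path delay bound $\tilde O(|D|^{h})$. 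The extra $|D|^{\max_t\delta(t)}$ factor you correctly identified in $T_C$ arises from this dictionary-rewriting pass, not from a relation-level semi-join.
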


If we write the delay in the above result as $\tilde{O}(\prod_{t \in P} |D|^{\delta(t)})$, where
$P$ is the maximum-weight path, Theorem~\ref{thm:main2} tells us that the delay is
essentially multiplicative in the same branch of the tree, but additive across branches. Unlike Theorem~\ref{thm:main}, the lexicographic ordering of the result $q(D)$ for Theorem~\ref{thm:main2} now depends on the tree decomposition.

For our running example, Theorem~\ref{thm:main2} implies a data structure with
space $\tilde{O}(|D| + |D|^{5/3})$ and delay $\tilde{O}(|D|^{1/2})$. This data structure can be computed in time $\tilde{O}(|D| + |D|^{7/3})$. Notice that this is much smaller than the $O(|D|^4)$ time required to compute the worst case output.
We prove the theorem in detail in Section~\ref{sec:fractionalwidth}, and we discuss
the complexity of choosing the optimal parameters in Section~\ref{sec:application}.
Next, we delve deeper into Theorem~\ref{thm:main2} and how to apply it.

\smallskip
\introparagraph{Applying Theorem~\ref{thm:main2}}
We first give an example where Theorem~\ref{thm:main2} can substantially improve upon the 
space/delay tradeoff of Theorem~\ref{thm:main}.

\begin{example}
	Consider the following adorned view:
	\begin{align*} 
	P_n^{\bound \free \cdots \free \bound} (x_1, \dots, x_{n+1})   =  
	R_1(x_1, x_2), R_2(x_2, x_3), \dots,  R_n(x_n, x_{n+1}).
	\end{align*}
	A direct application of Theorem~\ref{thm:main} results in a tradeoff of space $\tilde{O}(|D| + |D|^{\lceil n/2 \rceil}/\tau)$ with delay
	$\tilde{O}(\tau)$. 
	On the other hand, we can construct a connex tree decomposition where
	$A$ has a single bag $\{x_1,x_{n+1}\}$, which is connected to $\{x_1,x_2, x_n,x_{n+1}\}$,
	which is in turn connected to $\{x_2, x_3, x_{n-1}, x_n\}$, and so on.
	Consider the delay assignment that assigns to each bag $\delta(t) = \log_{|D|} \tau$.
	The $\delta$-width of this decomposition is $2-\log_{|D|} \tau$, while the
	$\delta$-height is $\lfloor n/2 \rfloor \cdot \log_{|D|} \tau $.
	Hence, Theorem~\ref{thm:main2} results in a tradeoff of space $\tilde{O}(|D| + |D|^{2}/\tau)$ with delay
	$\tilde{O}(\tau^{\lfloor n/2 \rfloor})$. 
\end{example}

Suppose now that our goal is to achieve constant delay. From Theorem~\ref{thm:main2}, in 
order to do this we have to choose the delay assignment to be 0 everywhere. In this case,
we have the following result (which slightly strengthens Theorem~\ref{thm:main2}
in this special case by dropping the polylogarithmic dependence).

\begin{proposition}\label{prop:cd}
	Let $Q^\eta$ be a full adorned view over a hypergraph $\hgraph = (\nodes, \edges)$. 
	Then, for any input database $D$, we can construct a data structure in compression time 
	and space $S = O(|D|^{\textsf{fhw}(\hgraph \mid \nodes_\bound)})$, such that we can answer any 
	access request with delay $O(1)$. 
\end{proposition}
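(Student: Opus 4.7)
The plan is to obtain the result as a strengthened specialization of Theorem~\ref{thm:main2} to the delay assignment $\delta(t) = 0$ for every $t$. First, I would fix a $\nodes_\bound$-connex tree decomposition $(\mathcal{T}, A)$ that achieves the width $\textsf{fhw}(\hgraph \mid \nodes_\bound)$. With $\delta \equiv 0$, the quantity $\rho^+_t$ from Equation~\eqref{eq:rho} collapses to the ordinary fractional edge cover number $\rho^*(\bag_t)$, so the $\delta$-width becomes exactly $\max_{t \in V(\htree) \setminus A} \rho^*(\bag_t) = \textsf{fhw}(\hgraph \mid \nodes_\bound)$ and the $\delta$-height is $0$. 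Directly invoking Theorem~\ref{thm:main2} already yields $\tilde{O}(|D|^{\textsf{fhw}(\hgraph \mid \nodes_\bound)})$ space with $\tilde{O}(1)$ delay, so all that remains is to strip the polylogarithmic factors.

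To sharpen $\tilde{O}$ to $O$, I would build the data structure directly, bypassing Theorem~\ref{thm:main}. For each bag $t$, materialize the relation $M_t = \Join_{F \subseteq \bag_t} R_F$ using a worst-case optimal join algorithm; this produces an output of size $O(|D|^{\rho^*(\bag_t)})$, which is bounded by $O(|D|^{\textsf{fhw}(\hgraph \mid \nodes_\bound)})$ for $t \notin A$. I would then run a standard bottom-up followed by top-down semi-join (Yannakakis) reduction over $\mathcal{T}$ so that every surviving tuple in each $M_t$ extends to at least one global witness of the join. Each $M_t$ would be equipped with a static perfect hash index keyed on $\nodes_\bound^t$, so that the tuples of $M_t$ consistent with a given valuation of $\nodes_\bound^t$ can be retrieved in worst-case $O(1)$ time per tuple.

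At query time, given a valuation $v$ of $\nodes_\bound$, I would first look up the consistent tuples in the bags of $A$ in $O(1)$ using the hash index. If a witness exists, I would traverse $\htree$ top-down starting from the root of $A$: at each non-$A$ bag $t$, iterate through the tuples of $M_t$ consistent with the already-fixed $\nodes_\bound^t$-valuation, recurse into the children, and emit the concatenated tuple upon reaching a leaf. The Yannakakis invariant guarantees that every recursive step produces at least one output, so the gap between consecutive emitted tuples is a constant number of hash lookups and pointer moves, giving worst-case $O(1)$ delay.

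The hard part will be justifying the removal of the polylogarithmic factors inherited from Theorem~\ref{thm:main2}. This hinges on two ingredients: static perfect hashing (rather than balanced search trees) for all bag indexes, so that each lookup is truly $O(1)$; and the full-reduction invariant, which guarantees that the top-down enumeration never recurses into a dead branch. Individually both are standard, but verifying that the bookkeeping required to move between bags of the decomposition --- in particular, maintaining the pointers into each $M_t$ as the recursion descends and backtracks --- operates without amortized logarithmic overhead is the delicate point that distinguishes the $O(\cdot)$ bound stated here from the $\tilde{O}(\cdot)$ bound obtainable by a black-box application of Theorem~\ref{thm:main2}.
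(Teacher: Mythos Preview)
Your approach is essentially the one the paper takes in Section~5.1: materialize the bags of an optimal $\nodes_\bound$-connex decomposition with a worst-case optimal algorithm, run a bottom-up semi-join reduction, build hash indexes keyed on $\nodes_\bound^t$, and enumerate by a pre-order traversal at query time.

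There is one point that, as written, would break the space bound. You materialize $M_t = \Join_{F \subseteq \bag_t} R_F$ for \emph{every} bag and then observe that the size bound $O(|D|^{\rho^*(\bag_t)}) \le O(|D|^{\textsf{fhw}(\hgraph \mid \nodes_\bound)})$ holds only for $t \notin A$. That caveat is exactly the problem: the bags in $A$ are excluded from the width on purpose, and $\rho^*(\nodes_\bound)$ can be arbitrarily larger than $\textsf{fhw}(\hgraph \mid \nodes_\bound)$, so materializing $M_{t_\bound}$ may exceed the stated space. The paper does not materialize $t_\bound$ at all; instead, for every hyperedge $F \subseteq \nodes_\bound$ it keeps only a membership hash on the base relation $R_F$, and at query time checks $v_\bound|_F \in R_F$ for each such $F$ --- a constant number of $O(1)$ lookups using linear space. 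With $t_\bound$ unmaterialized, the semi-join pass simply stops at the children of $t_\bound$ (so only the bottom-up direction is needed; your top-down pass is harmless but superfluous, since at enumeration time the root valuation is already fixed to $v_\bound$). Once you make this adjustment, your argument and the paper's coincide.
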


Observe that when all variables are free, then $\nodes_\bound = \emptyset$, in which case
$\textsf{fhw}(\hgraph \mid \nodes_\bound) = \textsf{fhw}(\hgraph)$, thus recovering the 
compression result of a $d$-representation. Moreover, since the delay assignment is 0 for all bags, the compression time $T_C = \tilde{O}(|D| + |D|^{\textsf{fhw}(\hgraph \mid \nodes_\bound) })$.

\smallskip
\introparagraph{Beyond full adorned views} Our work provides compression strategies for queries that do not admit out-of-the-box factorization (such as Loomis-Whitney joins), and can also recover the result of compressed $d$-representations as a special case when all variables are free (Proposition~\ref{prop:cd}). On the other hand, factorized databases support a much richer set of queries such as projections, aggregations~\cite{bakibayev2012fdb, bakibayev2013aggregation} and analytical tasks such as learning regression models~\cite{schleich2016learning, olteanu2016factorized}. One possible approach to handling projections in our setting is to force a variable ordering in the $\nodes_{\bound}$-connex decomposition: more precisely, we can force projection variables to appear first in any root to leaf path. This idea of fixing variable ordering would be similar to how order-by clauses are handled in $d$-tree query plans~\cite{bakibayev2013aggregation}. Remarkably, the $\nodes_{\bound}$-connex decomposition in our setting also corresponds to the tree decompositions used to compute aggregations and orderings with group-by attributes as $\nodes_{\bound}$~\cite{olteanu2016factorized}. This points to a deeper connection between our compressed representation and $d$-tree representations used to compute group-by aggregates. We defer the study of these connections and extension of our framework to incorporate more expressive queries 
to future work.

\subsection{A Remark on Optimality} 

So far we have not discussed the
optimality of our results. We remark why proving tight lower bounds might be a hard problem. 

The problem of \textsf{k-SetDisjointness} is defined as follows. Given a family of sets 
$S_1, \hdots, S_m$ of total size $N$, we want to ask queries of the following form: given
as input a subset $I \subseteq \{1, \dots, m\}$ of size $k$, is the intersection $\bigcap_{i \in I} S_i$ empty?
The goal is to construct a space-efficient data structure such that we can answer as fast as possible.
Note that \textsf{k-SetDisjointness} corresponds to the following adorned view:
$Q^{\bound \cdots \bound}(x_1, \dots, x_k) = R(x_1,z), R(x_2,z), \dots, R(x_k, z)$, where $R$ has size $N$.
One can see that we can use the data structure for the corresponding full view with 
head $Q^{\bound \cdots \bound \free}(x_1, \dots, x_k,z)$  (see Example~\ref{ex:star}) to answer 
\textsf{k-SetDisjointness} queries in time $\tilde{O}(\tau)$, using space $\tilde{O}(N^k/\tau^k)$.

In a recent work, Goldstein et al.~\cite{goldstein2017conditional} conjecture the following lower bound:
\begin{conjecture} (due to~\cite{goldstein2017conditional}) 
	\label{conjecture:setintersection:strong}
	Consider a data structure that preprocesses a family of sets $S_1, \hdots, S_m$ of total size $N$.
	If the data structure can answer \textsf{k-SetDisjointness} queries in time (or delay)~\footnote{For boolean queries,
		answer time and delay coincide.} $T$, then it must use
	$S = \tilde{\Omega}(N^k/T^k)$ space.
\end{conjecture}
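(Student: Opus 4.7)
The plan is to attack Conjecture~\ref{conjecture:setintersection:strong} by a fine-grained conditional reduction, since unconditional cell-probe lower bounds for static data structures bottom out far below the polynomial query times demanded by the conjecture. The natural source of hardness is the family of clique-listing and matrix-multiplication style problems, which already underpin the known lower bounds for related set-intersection primitives in the fine-grained complexity literature.

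Concretely, I would begin with the case $k = 2$, reducing from triangle detection on a tripartite graph with parts $A, B, C$ of size $n$ and $N = \Theta(n^2)$ edges. Set $S_b = N(b) \cap A$ for every $b \in B$; the family $\{S_b\}_{b \in B}$ has total size $N$. A triangle $(a, b, c)$ with $a \in A$ exists iff $a \in S_b \cap S_c$, so a data structure with space $S$ and query time $T$ decides triangle detection in total time $\tilde{O}(S + n^2 T)$. For general $k$, the same idea extends by reducing from $(k+1)$-clique listing in a $(k+1)$-partite graph: put $S_v = N(v) \cap V_{k+1}$ and observe that $k$ vertices $v_1, \ldots, v_k$ extend to a $(k+1)$-clique through $V_{k+1}$ iff $\bigcap_{i=1}^{k} S_{v_i} \neq \emptyset$, which is precisely one \textsf{k-SetDisjointness} query. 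Under a suitable $k$-clique hypothesis, summing over the $n^k$ candidate $k$-tuples forces any fast data structure to beat the conjectured clique barrier.

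The main obstacle, and presumably the reason the statement is left open, is converting a pure time lower bound into the clean bilinear tradeoff $S \cdot T^k = \tilde{\Omega}(N^k)$. Reductions of the above shape only yield $S + (\text{queries}) \cdot T \geq L$ for the source hardness lower bound $L$, which separates into $\max\{S, (\text{queries}) \cdot T\} \geq L/2$ rather than into a product. I would try to close this gap by parameterizing a family of hard instances of varying densities, so that a single data structure must simultaneously handle many query loads at different scales, and then averaging over the scales to isolate the bilinear term; a complementary angle is to invoke an online version of the source problem, in the spirit of OMv, whose sequential nature is more suited to enforcing a space/time product. I expect that making either route rigorous would require either a genuinely new reduction paradigm beyond the standard fine-grained toolkit, or a fundamentally new cell-probe technique capable of polynomial lower bounds, so that absent such innovation the best one can realistically deliver is a matching tradeoff under a strictly stronger hypothesis that already bakes in the desired space/time blowup.
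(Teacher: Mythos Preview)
The statement you are attacking is a \emph{conjecture}, not a theorem: the paper does not prove it, and explicitly says that even the weaker $k=2$ case of Patrascu and Roditty is ``considered a hard open problem.'' There is therefore no proof in the paper to compare your proposal against. Your task here was misidentified; the paper merely cites the conjecture to argue that its upper bounds for the star join are plausibly tight.

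On the substance of your attempt: you correctly identify the core obstruction. The clique-based reductions you describe yield additive bounds of the form $S + (\text{number of queries})\cdot T \geq L$, and such bounds fundamentally cannot be rearranged into the multiplicative tradeoff $S\cdot T^k = \tilde\Omega(N^k)$ without an additional idea. Your proposed fixes---averaging over a family of instances at different densities, or moving to an online/OMv-style source problem---are exactly the kinds of workarounds that have been tried in the fine-grained literature, and as you yourself concede in the final paragraph, none of them currently delivers the bilinear form. So your proposal is not a proof sketch but an honest survey of why the conjecture is open, which is the appropriate conclusion.
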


The above conjecture is a generalization of a conjecture from~\cite{Davoodi12} for the case $k=2$,
which in turn generalizes a folklore conjecture of Patrascu and 
Roditty~\cite{patrascu2010distance}, which was stated only for the case where $\tau=1$ and $k=2$.
Applied in our setting, Conjecture~\ref{conjecture:setintersection:strong} implies that for the adorned view
$Q^{\bound \cdots \bound}(x_1, \dots, x_k) = R_1(x_1,z),$ $R_2(x_2,z), \dots, R_k(x_k, z)$, the tradeoff
between space and delay (or answer time) is essentially optimal when all relations have equal size. Unfortunately, 
proving even the weaker conjecture of~\cite{patrascu2010distance} is considered a hard open problem.

	\section{A Compression Primitive}
\label{sec:primitive}

In this section, we describe the detailed construction of our data structure for Theorem~\ref{thm:main}.

\subsection{Intervals and Boxes}

Before we present the compression procedure, we first introduce two important
concepts in our construction, $\free$-intervals and $\free$-boxes, both of which
describe subspaces of the space of all possible tuples in the output.

\smallskip
\introparagraph{Intervals}
The {\em active domain} $\dom[x]$ of each variable $x$ is equipped with a total order $\leq$
induced from the order of $\domain$.
We will use $\bot, \top$ to denote the smallest and largest element of the 
active domain respectively (these will always exist, since we assume finite databases).
An {\em interval} for variable $x$ is any subset of $\: \dom[x]$ of the form 
$\setof{u \in \dom[x]}{a \leq u \leq b}$, where $a,b \in \dom[x]$, denoted by $[a,b]$. 
We adopt the standard notation for closed and open intervals and write
$[a,b) = \setof{u \in \dom[x]}{a \leq u < b}$, and $(a,b] = \setof{u \in \dom[x]}{a < u \leq b}$. 
The interval $[a,a]$ is called the {\em unit interval} and represents a single value.
We will often write $a$ for the interval $[a,a]$, and the symbol $\square$ for the interval $ \dom[x]$.

By lifting the order from a single domain to the lexicographic order of tuples in 
$\dom_\free = \dom[x^1_\free] \times \dots \times \dom[x^\mu_\free]$, we can also define
intervals over $\dom_\free$, which we call {\em $\free$-intervals}. 
For instance, if $\ba = \pangle{a_1, \dots, a_\mu}$ and $\mathbf{b} = \pangle{b_1, \dots, b_\mu}$,
the $\free$-interval $\interval = [\ba, \mathbf{b})$ represents all valuations $v_\free$ 
over $\nodes_\free$ that
are lexicographically at least $\ba$, but strictly smaller than $\mathbf{b}$.

\smallskip
\introparagraph{Boxes}
It will be useful to consider another type of subsets of
$\dom_\free$, which we call $\free$-boxes.

\begin{definition}[$\free$-box]
An {\em $\free$-box} is defined as a tuple of intervals $\pbox = \pangle{I_1, \dots, I_\mu}$, where
$I_i$ is an interval of $\dom[x_\free\free^i]$. The $\free$-box represents all valuations $v_\free$
over $\nodes_\free$, such that $v_\free(x^i_\free) \in I_i$ for every $i=1, \dots, \mu$.
\end{definition}

We say that a $\free$-box is {\em canonical} if whenever $I_i \neq \square$, then 
every $I_j$ with  $j<i$ is a unit interval. A canonical $\free$-box is always of the form 
$\pangle{a_1, \dots, a_{i-1}, I_i, \square, \dots}$. For ease of notation, we will
omit the $\square$ intervals in the end of a canonical $\free$-box, and simply write
$\pangle{a_1, \dots, a_{i-1}, I_i}$. 

A $\free$-box satisfies the following important property:
%
%

\begin{proposition}\label{prop:box:semijoin}
For every $\free$-box $\pbox$, $(\Join_{F \in \edges} R_F) \ltimes \pbox = \Join_{F \in \edges} (R_F \ltimes \pbox)$.
\end{proposition}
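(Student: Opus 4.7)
The plan is to prove the two inclusions separately, exploiting the crucial product structure of an $\free$-box: membership in $\pbox = \pangle{I_1, \dots, I_\mu}$ is equivalent to the conjunction of the per-variable constraints $v_\free(x_\free^i) \in I_i$. This is precisely what allows semijoin with $\pbox$ to distribute over the natural join, whereas it would not distribute over the right-hand side of a semijoin with an arbitrary relation over $\nodes_\free$.

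\textbf{Forward inclusion ($\subseteq$).} Let $t \in (\Join_{F \in \edges} R_F) \ltimes \pbox$. Then $t$ is a tuple over $\nodes$ with $\pi_F(t) \in R_F$ for every $F \in \edges$, and the restriction of $t$ to $\nodes_\free$ lies in $\pbox$. Fix any $F \in \edges$. For each free variable $x_\free^i \in F$, the value $t(x_\free^i)$ lies in $I_i$, so the tuple $\pi_F(t)$ witnesses itself as an element of $R_F \ltimes \pbox$ (one can extend $\pi_F(t)$ on the coordinates outside $F \cap \nodes_\free$ arbitrarily within the remaining intervals, which are nonempty since they contain at least the values $t(x_\free^j)$ for $j \notin F$). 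Hence $t \in \Join_{F \in \edges}(R_F \ltimes \pbox)$.

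\textbf{Reverse inclusion ($\supseteq$).} Let $t \in \Join_{F \in \edges}(R_F \ltimes \pbox)$. Then $t$ is a tuple over $\nodes$ whose projection $\pi_F(t) \in R_F \ltimes \pbox$ for every $F$; in particular $t \in \Join_{F \in \edges} R_F$. It remains to show that the restriction of $t$ to $\nodes_\free$ lies in $\pbox$, i.e.\ that $t(x_\free^i) \in I_i$ for every $i \in \{1, \dots, \mu\}$. Fix any free variable $x_\free^i$. Because the edges of $\edges$ cover all variables in $\nodes$, there exists some $F \in \edges$ with $x_\free^i \in F$. Since $\pi_F(t) \in R_F \ltimes \pbox$, there exists a valuation $v \in \pbox$ that agrees with $\pi_F(t)$ on $F \cap \nodes_\free$; in particular $v(x_\free^i) = t(x_\free^i)$. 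But $v \in \pbox$ forces $v(x_\free^i) \in I_i$, hence $t(x_\free^i) \in I_i$. Since $i$ was arbitrary, all per-variable constraints are met, so $t$ restricted to $\nodes_\free$ belongs to $\pbox$, proving $t \in (\Join_{F \in \edges} R_F) \ltimes \pbox$.

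\textbf{Main obstacle.} There is no deep obstacle here; the statement is essentially a structural observation. The only subtlety is to be precise about the semantics of the semijoin $R_F \ltimes \pbox$ when $\pbox$ and $R_F$ share only a subset of variables, and to invoke the product decomposition of $\pbox$ at exactly the right place in the reverse inclusion. If $\pbox$ were replaced by a general relation over $\nodes_\free$, the argument would fail because there would be no way to conclude, from the existence of separate witnesses $v^F$ for different hyperedges $F$, the existence of a single global witness.
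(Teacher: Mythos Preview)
Your proof is correct and follows essentially the same approach as the paper: both argue the two inclusions by unpacking the per-variable constraints that define membership in $\pbox$, using in the reverse direction the fact that every free variable lies in some hyperedge. Your treatment is slightly more explicit about the semijoin semantics (producing a witness valuation in $\pbox$ rather than directly reading off the interval constraint), but the substance is the same.
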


\begin{proof}
Suppose that the $\free$-box is $\pbox = \pangle{I_1, \dots, I_\mu}$. 

Consider some valuation $v$ over $\nodes$ that belongs in $(\Join_{F \in \edges} R_F) \ltimes \pbox $.
Then, for every $F \in \edges$ we have $v(F) \in R_F$, and also for every variable $x_\free^i$ we have
$v(x_\free^i) \in I_i$. Since for every variable in $F \cap \nodes_\free$ we have $v(x_\free^i) \in I_i$ as well,
we conclude that $v(F) \in (R_F \ltimes \pbox$). Thus, $v$ belongs in $\Join_{F \in \edges} (R_F \ltimes \pbox)$
as well.

For the opposite direction, consider some valuation $v$ over $\nodes$ that belongs in $\Join_{F \in \edges} (R_F \ltimes \pbox)$. Since $(R_F \ltimes \pbox) \subseteq R_F$, we have that for every $F \in \edges$,
$v(F) \in R_F$.
Thus, in order to show the desired result, it suffices to show that for every $x_\free^i$ we have
$v(x_\free^i) \in I_i$. Indeed, take any hyperdge $F$ such that $x_\free^i \in F$: then, $v(F) \in (R_F \ltimes \pbox)$ implies that $v(x_\free^i) \in I_i$.
\end{proof}

In other words, if we want to compute the restriction of an output to tuples in $\pbox$,
it suffices to first restrict each relation to $\pbox$ and then perform the join. 
We denote this restriction of the relation as $R_F(\pbox) = R_F \ltimes \pbox$.

Unfortunately, Proposition~\ref{prop:box:semijoin} does not extend to $\free$-intervals.
As we show in the example below, it
is generally not possible to first restrict each relation to $R_F \ltimes \interval$ and then perform
the join.

\begin{example} \label{ex:box}
Consider the adorned view $V^{\free \bound \free \free}(x,y,z,w) = R_1(x,y), R_2(y,z), R_3(z,w), R_4(w,x)$. 
Assume that the active domain is $\dom[x] = \dom[y] = \dom[z] = \dom[w] = \{1,2\}$. Since $\nodes_\free = \{x,z,w\}$, consider the $\free$-interval $\interval = [\ba, \mathbf{b}]$ where $\ba = \pangle{1,2,1}$ and $\mathbf{b} = \pangle{2,1,2}$. In other words, interval $\interval$ contains the following valuations for $\nodes_{\free}$: $(1,2,1), (1,2,2), (2,1,1), (2,1,2)$. It is easy to verify that $R_i \ltimes \interval = R_i$ for every $i=1,2,3,4$ and that $(1,1,1,1)$ is an output tuple. However, $(\Join_{F \in \edges} R_F) \ltimes \interval$ filters out $(1,1,1,1)$ as $(1,1,1)$ does not lie in the interval $\interval$.
\end{example}

As we will see next, we can partition each $\free$-interval to a set
of $\free$-boxes of constant size.

\smallskip
\introparagraph{Box Decomposition}
It will be useful to represent a $\free$-interval $\interval = (\ba, \mathbf{b})$ 
as a union of canonical $\free$-boxes.
Let $j$ be the first position such that $a_j \neq b_j$. Then, we define the {\em box
decomposition} of $\interval$, denoted $\mathcal{B}(\interval)$, as the following set
of canonical $\free$-boxes:
\begin{align*}
\pbox_{\mu}^\ell & = \pangle{a_1, \dots, a_{\mu-1}, (a_\mu, \top]} \displaybreak[0]\\
& \:\dots  \displaybreak[0]\\
\pbox_{j+1}^\ell & = \pangle{a_1, \dots, a_{j}, (a_{j+1}, \top]} \displaybreak[0]\\
\pbox_{j} & = \pangle{a_1, \dots, a_{j-1}, (a_j,b_j)} \displaybreak[0]\\
\pbox_{j+1}^r & = \pangle{b_1, \dots, b_{j}, [\bot, b_{j+1})}  \displaybreak[0]\\
& \:\dots  \displaybreak[0]\\
\pbox_{\mu}^r & = \pangle{b_1, \dots, b_{\mu-1}, [\bot, b_\mu)}
\end{align*}
Intuitively, a box decomposition divides an interval into a set of disjoint, lexicographically ordered intervals. 
We give next an example of an $\free$-interval
and its decomposition into canonical $\free$-boxes.

\begin{example}
	For our running example (Example~\ref{ex:running}), let the active domain be
	$\dom[w_i] = \{1, 2, \dots, 1000 \}$ for $i=1,2,3$.
	Consider an open $\free$-interval $\interval = (\pangle{10,50,100 },\pangle{20,10, 50})$. The box decomposition of $\interval$ consists of the following 5 canonical $\free$-boxes:
	\begin{align*}
		\pbox_{3}^\ell & = \pangle{10, 50, (100, \top]},  \quad
		\pbox_{2}^\ell  = \pangle{10, (50, \top]} \displaybreak[0]\\
		\pbox_{1} & = \pangle{(10,20)}, \displaybreak[0]\\
		\pbox_{2}^r & = \pangle{20, [\bot, 10)} \quad
		\pbox_{3}^r  = \pangle{20, 10, [\bot, 50)}
	\end{align*}
	For another $\free$-interval $\interval' = [\pangle{10,50,100},\pangle{10,50,200})$,
	where the first two positions coincide, 
	the box decomposition consists of one $\free$-box: 
	$\pbox_{3}  = \pangle{10, 50, [100, 200)}$.
\end{example}

The following lemma summarizes several important properties of the box decomposition:
\begin{lemma}\label{lem:box:decomposition}
Let $\interval$ be an $\free$-interval and $\mathcal{B}(\interval)$ be its box
decomposition. Then:
\begin{packed_enum}
\item The $\free$-boxes in $\mathcal{B}(\interval)$ form an order, $\pbox_{\mu}^\ell  \leq \dots \leq \pbox_{j+1}^\ell  \leq \pbox_{j}
\leq \pbox_{j+1}^r \leq \dots \leq \pbox_{\mu}^r$, such that two tuples from different $\free$-boxes are 
ordered according to the order of their $\free$-boxes.
\item The non-empty $\free$-boxes of $\mathcal{B}(\interval)$ form a partition of $\interval$.
\item $|\mathcal{B}(\interval)| \leq 2\mu-1$, where $\mu = |\nodes_\free|$.
\end{packed_enum}
\end{lemma}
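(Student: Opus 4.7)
The plan is to dispatch the three claims separately, leveraging throughout the fact that lex order is decided by the first coordinate where two tuples disagree. Let $j$ be the smallest index with $a_j \ne b_j$, so that $a_i = b_i$ for $i<j$ and $a_j < b_j$. Claim (3) is essentially a count: the construction produces one middle box $\pbox_j$ together with pairs $\pbox_k^\ell, \pbox_k^r$ for each $k\in\{j+1,\dots,\mu\}$, for a total of $1+2(\mu-j)$ boxes. Since $\interval$ is nonempty we have $\ba<\mathbf{b}$ and hence $j\ge 1$, which gives $|\mathcal{B}(\interval)|\le 2\mu-1$.

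For claim (1), I would argue by direct inspection of shared prefixes. For two left boxes $\pbox_k^\ell$ and $\pbox_{k'}^\ell$ with $k>k'\ge j+1$, both share the prefix $a_1,\dots,a_{k'-1}$; at position $k'$ the former forces the fixed value $a_{k'}$ while the latter requires a value strictly greater than $a_{k'}$, so every tuple in $\pbox_k^\ell$ lex-precedes every tuple in $\pbox_{k'}^\ell$. Analogous comparisons at position $j$ give $\pbox_{j+1}^\ell < \pbox_j < \pbox_{j+1}^r$ (the three boxes share the prefix $a_1,\dots,a_{j-1}=b_1,\dots,b_{j-1}$ and take position-$j$ values equal to $a_j$, strictly between $a_j$ and $b_j$, and equal to $b_j$, respectively), and at position $k'$ give $\pbox_{k'}^r < \pbox_k^r$ for $k>k'\ge j+1$. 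Disjointness of the boxes then falls out for free.

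For claim (2), coverage is all that remains. Given $t=(t_1,\dots,t_\mu)$ with $\ba<t<\mathbf{b}$, let $k$ be the smallest position with $t_k\ne a_k$; lex order forces $t_k>a_k$. Because $a_i=b_i$ for $i<j$, we must have $k\ge j$. If $k>j$, then $t$ matches both $\ba$ and $\mathbf{b}$ on positions $1,\dots,j-1$ while $t_j=a_j<b_j$ already guarantees $t<\mathbf{b}$, so $t\in\pbox_k^\ell$. If $k=j$ and $t_j<b_j$, then $t\in\pbox_j$. The remaining possibility is $k=j$ with $t_j=b_j$ (the case $t_j>b_j$ is ruled out by $t<\mathbf{b}$); here $t$ matches $\mathbf{b}$ through position $j$, and taking $k'>j$ to be the smallest position with $t_{k'}\ne b_{k'}$ forces $t_{k'}<b_{k'}$, so $t\in\pbox_{k'}^r$.

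The only real obstacle is ensuring that the case analysis in (2) is simultaneously exhaustive and non-overlapping; once one recognizes that a tuple in $\interval$ is classified by the earliest position at which it deviates from whichever endpoint it still agrees with, everything reduces to routine bookkeeping with prefixes.
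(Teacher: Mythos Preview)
Your proposal is correct and follows essentially the same approach as the paper. The paper argues item~(1) by comparing consecutive boxes at the first coordinate where they differ (you compare arbitrary pairs, but the idea is identical), handles item~(2) via the same case analysis on the first coordinate where $t$ deviates from $\ba$ (and then from $\mathbf{b}$), and proves item~(3) by the same count $1+2(\mu-j)\le 2\mu-1$; the only cosmetic difference is that the paper briefly notes each box is contained in $\interval$ before turning to coverage, whereas you leave this implicit.
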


\begin{proof}
To show item (1), we begin by considering two consecutive $\free$-boxes of the form $\pbox_i^\ell$.
Consider the largest element $\ba^{>} \in \pbox_{i}^\ell$ and the smallest element $\ba^{<} \in \pbox_{i-1}^\ell$,
for any $i=j+2, \dots, \mu$. Note that $\ba^{>}$ has value $a_{i-1}$ in the $(i-1)$-th position and $\ba^{>}$ has a value from $(a_{i-1}, \top]$ in its $(i-1)$-th position. Since $a_{i-1}$ appears before any element in the set $(a_{i-1}, \top]$ and both boxes agree on the first $i-2$ positions, it follows that $\ba^{>} < \ba^{<}$ (notice that the inequality here is strict). A similar argument applies to all other consecutive $\free$-boxes in the decomposition.

\vspace{2mm}
We next show item (2). We have already shown that the $\free$-boxes in the decomposition are all disjoint. It is 
also easy to observe that every $\free$-box in $\mathcal{B}(\interval)$ is a subset of $\interval$. Thus, in order to
show that the non-empty $\free$-boxes form a partition of $\interval$, it suffices to show that every $\bc \in \interval$
belongs in some $\free$-box of $\mathcal{B}(\interval)$. Let $\interval = (\ba, \mathbf{b})$ and $\bc = \pangle{c_1, \dots, c_\mu}$ such that $\bc \in \interval$. 

We start by looking at the value of $c_j$ where $j$ is the first position such that $a_j \neq b_j$. We 
distinguish three cases. If $a_j < c_j < b_j$, then we have that $\bc \in \pbox_j$ and we are done. 
Suppose now that $c_j = a_j$, and consider the first position $k$ such that $c_k \neq a_k$. 
(Note that  such a $k$ always exists, otherwise $\bc = \ba \not \in \interval$.) Then it is easy to see that
$\bc \in \pbox_k^\ell$. If $c_j = b_j$, then we symmetrically consider the first position $k$ such that
$c_k \neq b_k$; then one can see that $\bc \in \pbox_k^r$.

\vspace{2mm}
To prove item (3), observe that  $|\mathcal{B}(\interval)| = (\mu - j) + 1 + (\mu - j) = 2\mu + (1-2j) \leq 2\mu-1$,
where the last inequality follows because $j \geq 1$.
\end{proof}

The above lemma implies the following corollary:

\begin{corollary} \label{prop:union:box}
Let $\interval$ be an $\free$-interval and $\mathcal{B}(\interval)$ be its box
decomposition.Then:
$$ \bigcup_{\pbox \in \mathcal{B}(\interval)} \Join_{F \in \edges}(R_F \ltimes \pbox) = (\Join_{F \in \edges}R_F) \ltimes \interval$$
\end{corollary}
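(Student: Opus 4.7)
The plan is to derive the corollary directly from the two ingredients that precede it: Proposition~\ref{prop:box:semijoin}, which lets us push a semijoin with an $\free$-box inside the join, and Lemma~\ref{lem:box:decomposition}(2), which says that the non-empty $\free$-boxes in $\mathcal{B}(\interval)$ partition $\interval$. The work reduces to commuting a union of semijoins with a join, using the fact that semijoin distributes over union on the right argument.

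More concretely, I would first apply Proposition~\ref{prop:box:semijoin} to every $\pbox \in \mathcal{B}(\interval)$, yielding
\[
\bigcup_{\pbox \in \mathcal{B}(\interval)} \Join_{F \in \edges}(R_F \ltimes \pbox) \;=\; \bigcup_{\pbox \in \mathcal{B}(\interval)} \bigl( (\Join_{F \in \edges} R_F) \ltimes \pbox \bigr).
\]
Next I would use the elementary identity $(T \ltimes A) \cup (T \ltimes B) = T \ltimes (A \cup B)$, applied to $T = \Join_{F \in \edges} R_F$ and to the family of $\free$-boxes in $\mathcal{B}(\interval)$ (viewed as sets of valuations over $\nodes_\free$). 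This rewrites the right-hand side as $(\Join_{F \in \edges} R_F) \ltimes \bigl(\bigcup_{\pbox \in \mathcal{B}(\interval)} \pbox\bigr)$. Finally, by Lemma~\ref{lem:box:decomposition}(2), the non-empty boxes in $\mathcal{B}(\interval)$ form a partition of $\interval$, so the union of all $\pbox$ equals $\interval$, and the expression collapses to $(\Join_{F \in \edges} R_F) \ltimes \interval$, which is the claim.

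There is essentially no obstacle here: the only small point worth being careful about is that we are semijoining with sets of valuations over $\nodes_\free$ (rather than with full tuples), and that the partition in Lemma~\ref{lem:box:decomposition}(2) refers to the non-empty boxes. Including empty boxes in the union is harmless because an empty semijoin contributes nothing, so we can index the union over all of $\mathcal{B}(\interval)$ without affecting the outcome. Thus the corollary is an immediate two-line consequence of Proposition~\ref{prop:box:semijoin} and Lemma~\ref{lem:box:decomposition}(2), and no further machinery is required.
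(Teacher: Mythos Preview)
Your proposal is correct and matches the paper's intent: the paper does not spell out a proof but simply states that the corollary follows from Lemma~\ref{lem:box:decomposition}, and your argument---apply Proposition~\ref{prop:box:semijoin} per box, distribute the semijoin over the union, then invoke the partition property---is precisely the two-line derivation this remark points to.
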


\subsection{Two Key Ingredients}

We describe here the intuition behind the compression representation. 
Our data structure is parametrized by an integer $\tau \geq 0$, which can be
viewed as a threshold parameter that works as a knob.
We seek to compute the result $(\Join_{F \in \edges} R_F(v_\bound)) \ltimes \interval$, where
$\interval$ is initially the $\free$-interval that represents all possible valuations. 
We can upper bound the running time
for this instance using the AGM bound. If the bound is less than $\tau$,
we can compute the answer in time and delay at most $\tau$. 

Otherwise, we do two things: $(i)$ we store a bit ($\mathsf{1}$ if the answer is nonempty, and $\mathsf{0}$ if it is empty), and $(ii)$ we split the $\free$-interval into two smaller $\free$-intervals.
Then, we recursively apply the same idea for each of the two $\free$-intervals. 
Since we need to store
one bit for every valuation that exceeds the given threshold for a given  $\free$-interval, we need to
bound the number of such valuations: this bound will be our first ingredient. Second,
we split each $\free$-interval in the same way for every valuation; we do it such that
we can balance the information we need to store for each smaller $\free$-interval. The method to
split the $\free$-intervals in a balanced way is our second key ingredient.

\smallskip
\introparagraph{Bounding the Heavy Valuations}
Given a valuation $v_\bound$ for the bound variables, suppose we are asked
to compute the result restricted in some $\free$-interval $\interval$, in other words
$(\Join_{F \in \edges} R_F(v_\bound)) \ltimes \interval$. 
Let $R_F(v,\pbox) = R_F(v) \ltimes \pbox = (R_F \ltimes v) \ltimes \pbox$. 
For an $\free$-box $\pbox$ and valuation $v$ over any variables, we define:
\begin{align*}
T(\pbox) = \prod_{F \in \edges} |R_F(\pbox)|^{\hat u_F}, \quad
T(v, \pbox) = \prod_{F \in \edges} |R_F(v, \pbox)|^{\hat u_F}  
\end{align*}
We overload $T$ to apply to an $\free$-interval $\interval$ and valuation $v$ 
over any variables as follows:
$$ T(\interval) = \sum_{B \in \mathcal{B}(\interval)} T(\pbox),
\quad T(v, \interval) = \sum_{B \in \mathcal{B}(\interval)} T(v,\pbox)$$

\begin{restatable}{proposition}{outputofinterval}
The output $(\Join_{F \in \edges} R_F(v_\bound)) \ltimes \interval$ can be computed
in time $O(T(v_\bound, \interval))$.
\end{restatable}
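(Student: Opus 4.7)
The plan is to reduce the computation over the $\free$-interval $\interval$ to a constant number of computations over $\free$-boxes, and then bound the time for each box computation using the AGM inequality together with a worst-case optimal join algorithm.

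First I would apply Corollary~\ref{prop:union:box}, replacing each relation $R_F$ by the restriction $R_F(v_\bound) = R_F \ltimes v_\bound$. This yields
\[
(\Join_{F \in \edges} R_F(v_\bound)) \ltimes \interval \; = \; \bigcup_{\pbox \in \mathcal{B}(\interval)} \Join_{F \in \edges} R_F(v_\bound, \pbox),
\]
so it suffices to compute each join on the right-hand side separately and take their union. By Lemma~\ref{lem:box:decomposition}(2) the non-empty $\free$-boxes partition $\interval$, so the union is disjoint and no deduplication is needed; by Lemma~\ref{lem:box:decomposition}(3) the number of terms is at most $2\mu - 1$, a constant in data complexity.

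Next, for a fixed $\free$-box $\pbox$ I would compute $\Join_{F \in \edges} R_F(v_\bound, \pbox)$ by first materializing each restriction $R_F(v_\bound, \pbox)$ in time linear in its output size (assuming the standard indexing set up during preprocessing, so that a value-and-range semi-join on a box can be read off in time proportional to its size), and then running a worst-case optimal join algorithm such as NPRR on these relations. By the AGM inequality with fractional edge cover $\hat\bu$, this join runs in time
\[
O\!\left(\prod_{F \in \edges} |R_F(v_\bound,\pbox)|^{\hat u_F}\right) = O(T(v_\bound,\pbox)),
\]
which dominates the cost of materializing the restrictions since each $|R_F(v_\bound,\pbox)| \le T(v_\bound,\pbox)$.

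Finally, summing over all $\pbox \in \mathcal{B}(\interval)$ gives total time $O\big(\sum_{\pbox \in \mathcal{B}(\interval)} T(v_\bound,\pbox)\big) = O(T(v_\bound,\interval))$ by definition of $T(v_\bound, \interval)$. The only subtle point is that Proposition~\ref{prop:box:semijoin} is stated for $\free$-boxes and fails for arbitrary $\free$-intervals (as shown in Example~\ref{ex:box}), which is precisely why the decomposition step at the beginning is essential; this is the main conceptual obstacle, and it is already handled by Corollary~\ref{prop:union:box}. The rest is a routine application of the AGM bound.
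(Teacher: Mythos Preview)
Your proof is essentially the same as the paper's: decompose $\interval$ into its box decomposition via Corollary~\ref{prop:union:box}, run a worst-case optimal join on each box using the cover $\hat\bu$ (which covers $\nodes_\free$), and sum. One small caveat: the claim that $|R_F(v_\bound,\pbox)| \le T(v_\bound,\pbox)$ is not true in general (take two relations with $\hat u_F = 1/2$ each), so the explicit materialization step you inserted is not justified; the paper, and standard worst-case optimal algorithms, avoid this by working directly over precomputed indexes on the base relations rather than materializing the box-restricted relations first.
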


\begin{proof}
	Consider the box decomposition $\mathcal{B}(\interval)$. First, observe that for any
	$\pbox \in \mathcal{B}(\interval)$ the join $(\Join_{F \in \edges} R_F(v_\bound, \pbox))$
	is over the variables $\nodes_\free$. Since every variable in $\nodes_\free$ is covered by 
	$\hat{\bu}$, we can use any worst-case optimal algorithm to compute the join
	in time at most $T(v_\bound, \pbox)$. By Corollary~\ref{prop:union:box}, we can now compute
	the join over every $\pbox$ and union the (disjoint) results to obtain the desired result. 
	The time needed for this is at most 
	$T(v_\bound, \interval) = \sum_{\pbox \in \mathcal{B}(\interval)} T(v_\bound, \pbox)$. 
\end{proof}

%

We will use the above bound on the running time as a threshold of when it means that a 
particular interval is expensive to compute.

\begin{definition} \label{def:heavy:property}
A pair $(v_\bound, \interval)$ is {\em $\tau$-heavy} for a fractional edge cover $\bu$ if 
$T(v_\bound, \interval) > \tau $.
\end{definition}

Observe that if a pair is not $\tau$-heavy, this means that we can compute the
corresponding subinstance over $\interval$ in time at most $O(\tau)$.
The following proposition provides an upper bound for the number of such $\tau$-heavy pairs.

\begin{restatable}{proposition}{boundedheavyvaluations}
\label{lem:heavy:bound}
Given a $\free$-interval $\interval$ and integer $\tau$, let $\mathcal{H}(\interval,\tau)$ be the valuations $v_\bound$ such that the pair $(v_\bound, \interval)$ is $\tau$-heavy for $\bu$. Then,
$$|\mathcal{H}(\interval,\tau)| \leq \left(\frac{T(\interval)}{\tau} \right)^{\slack} $$
\end{restatable}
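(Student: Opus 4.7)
The plan is to lift the heaviness condition to the $\slack$-th power, split across the box decomposition via Minkowski's inequality, and finish by invoking a generalized AGM bound on each individual $\free$-box. By definition, every $v_\bound \in \mathcal{H}(\interval,\tau)$ satisfies $T(v_\bound,\interval) > \tau$, and since $\slack \geq 1$ (because $\bu$ is a fractional cover of $\nodes \supseteq \nodes_\free$), raising both sides to the $\slack$-th power and summing over all heavy $v_\bound$ gives
\begin{equation*}
|\mathcal{H}(\interval,\tau)| \cdot \tau^\slack \;\leq\; \sum_{v_\bound} T(v_\bound,\interval)^\slack \;=\; \sum_{v_\bound} \Bigl( \sum_{\pbox \in \mathcal{B}(\interval)} T(v_\bound,\pbox) \Bigr)^{\!\slack}.
\end{equation*}

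Next I would apply Minkowski's inequality (valid since $\slack \geq 1$) to pull the outer sum inside the power:
\begin{equation*}
\Bigl( \sum_{v_\bound} T(v_\bound,\interval)^\slack \Bigr)^{\!1/\slack} \;\leq\; \sum_{\pbox \in \mathcal{B}(\interval)} \Bigl( \sum_{v_\bound} T(v_\bound,\pbox)^\slack \Bigr)^{\!1/\slack}.
\end{equation*}
This reduces everything to bounding $\sum_{v_\bound} T(v_\bound,\pbox)^\slack$ for a fixed $\free$-box $\pbox$. Here the choice of exponent $\slack$ pays off: since $\hat u_F \cdot \slack = u_F$, we have $T(v_\bound,\pbox)^\slack = \prod_F |R_F(v_\bound,\pbox)|^{u_F}$, and crucially each factor depends on $v_\bound$ only through its restriction $v_{\bound,F}$ to $\nodes_\bound \cap F$.

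The crux is therefore the inequality $\sum_{v_\bound} \prod_F |R_F(v_\bound,\pbox)|^{u_F} \leq \prod_F |R_F(\pbox)|^{u_F} = T(\pbox)^\slack$, which is precisely a Finner-type (generalized Loomis--Whitney) inequality for non-negative functions. It applies because $\bu$, being a fractional edge cover of $\nodes$, also covers $\nodes_\bound$ in the induced hypergraph $\{F \cap \nodes_\bound : F \in \edges\}$; one can prove it either by a weighted Hölder argument or, equivalently via Shearer's entropy inequality, by reducing to the classical AGM bound applied to a multiset-weighted join over the bound variables. Once it is in hand, plugging back into the Minkowski bound collapses the RHS to $\sum_\pbox T(\pbox) = T(\interval)$, so $\sum_{v_\bound} T(v_\bound,\interval)^\slack \leq T(\interval)^\slack$, and combining with the first display yields $|\mathcal{H}(\interval,\tau)| \leq (T(\interval)/\tau)^\slack$. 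I expect the Finner-type step to be the only non-routine piece; everything else is direct manipulation of the definitions.
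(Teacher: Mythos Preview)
Your argument is correct and shares its central ingredient with the paper's proof: both reduce to the inequality $\sum_{v_\bound} \prod_F |R_F(v_\bound,\pbox)|^{u_F} \leq \prod_F |R_F(\pbox)|^{u_F}$ for each fixed $\free$-box, which the paper invokes as the ``Query Decomposition Lemma'' and you identify as a Finner/Shearer-type bound (these are the same thing, relying only on $\bu$ covering $\nodes_\bound$). The packaging differs slightly: the paper keeps the first power of the heaviness inequality, swaps the two sums, and applies H\"older over $v_\bound$ to extract a factor $|\mathcal{H}|^{1-1/\slack}$ on the right, whereas you raise to the $\slack$-th power up front and use Minkowski over the box decomposition to separate the $\pbox$-sum. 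Both routes are equally short and lead to the identical final rearrangement $|\mathcal{H}|^{1/\slack}\tau \leq T(\interval)$; your Minkowski version is arguably a touch cleaner since no factor of $|\mathcal{H}|$ appears on the right-hand side to be cancelled.
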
 

\begin{proof}
	For the sake of simplicity, we will write $\mathcal{H}$ instead of $\mathcal{H}(\interval,\tau)$. We can now write:
	\begin{align*}
		\tau |\mathcal{H}| 
		& \leq \sum_{v_\bound \in \mathcal{H}} \sum_{\pbox \in \mathcal{B}(\interval)}
		\prod_{F \in \edges} |R_F(v_\bound, \pbox)|^{\hat u_F}  \displaybreak[0]\\
		& = \sum_{\pbox \in \mathcal{B}(\interval)}
		\sum_{v_\bound \in \mathcal{H}} 1^{1-1/\slack} \cdot 
		\left( \prod_{F \in \edges} |R_F(v_\bound, \pbox)|^{u_F} \right)^{1/\slack} \displaybreak[0]\\
		& \leq \sum_{\pbox \in \mathcal{B}(\interval)} 
		\left( \sum_{v_\bound \in \mathcal{H}} 1 \right)^{1-1/\slack}
		\left(  \sum_{v_\bound \in \mathcal{H}} \prod_{F \in \edges} |R_F(v_\bound, \pbox)|^{u_F} \right)^{1/\slack} \displaybreak[0]\\
		& = |\mathcal{H}|^{1-1/\slack} \cdot 
		\sum_{\pbox \in \mathcal{B}(\interval)}\left(  \sum_{v_\bound \in \mathcal{H}} \prod_{F \in \edges} |R_F(\pbox) \ltimes v_\bound|^{u_F} \right)^{1/\slack} \displaybreak[0]\\
		& \leq |\mathcal{H}|^{1-1/\slack} \cdot 
		\sum_{\pbox \in \mathcal{B}(\interval)}\left( \prod_{F \in \edges} |R_F(\pbox)|^{u_F} \right)^{1/\slack} \displaybreak[0]\\
		& =   |\mathcal{H}|^{1-1/\slack}  \sum_{\pbox \in \mathcal{B}(\interval)}T(\pbox)
	\end{align*}
	The first inequality comes directly from the definition of a $\tau$-heavy pair. 
	The second inequality is an application of H{\"o}lder's inequality.
	The third inequality is an application of the Query Decomposition Lemma from~\cite{skewstrikesback}.
\end{proof}

\begin{example}
	Consider the following instance for our running example.

	\begin{minipage}[t]{0.3\linewidth}
		\centering
		\begin{tabular}[t]{ !{\vrule width1pt} c|c|c !{\vrule width1pt} } 
			\Xhline{1pt}
			$\mathbf{w_1}$ & $\mathbf{x}$ & $\mathbf{y}$ \\ 
			\Xhline{1pt}
			1 & 1 & 1 \\ 
			1 & 1 & 2 \\
			1 & 2 & 1 \\ 
			2 & 1 & 1 \\
			3 & 1 & 1 \\
			\Xhline{1pt}
		\end{tabular}		
		\vspace{1em}
		
		$R_1$
	\end{minipage}
	\begin{minipage}[t]{0.3\linewidth}
		\centering
		\begin{tabular}[t]{ !{\vrule width1pt} c|c|c !{\vrule width1pt} } 
			\Xhline{1pt}
			$\mathbf{w_2}$ & $\mathbf{y}$ & $\mathbf{z}$ \\ 
			\Xhline{1pt}
			1 & 1 & 2 \\ 
			1 & 2 & 1 \\ 
			1 & 2 & 2 \\ 
			2 & 1 & 1 \\ 
			2 & 1 & 2 \\ 
			\Xhline{1pt}
		\end{tabular}		
		\vspace{1em}
		
		$R_2$
	\end{minipage}
	\begin{minipage}[t]{0.3\linewidth}
		\centering
		\begin{tabular}[t]{ !{\vrule width1pt} c|c|c !{\vrule width1pt} } 
			\Xhline{1pt}
			$\mathbf{w_3}$ & $\mathbf{x}$ & $\mathbf{z}$ \\ 
			\Xhline{1pt}
			1 & 1 & 1 \\ 
			1 & 1 & 2 \\
			1 & 2 & 1 \\ 
			2 & 1 & 1 \\
			2 & 1 & 2 \\
			\Xhline{1pt}
		\end{tabular}		
		\vspace{1em}
		
		$R_3$
	\end{minipage}
	
	We will use $\bu = ( 1, 1 ,1 )$ as the fractional edge cover for $\nodes$. 
	Recall that the slack is $\alpha = 2$, and thus $\hat{\bu} = ( 1/2 ,1/2 ,1/2 )$. 
	Observe that $\dom[x] = \dom[y] = \dom[z] = \{1,2\}$, 
	$\dom[w_1] = \{1,2,3\}$, $\dom[w_2] = \{1,2\}$, $\dom[w_3] = \{1,2,3\}$. Consider the root interval $\interval(r) = [\pangle{1,1,1}, \pangle{2,2,2}]$. The box decomposition $\mathcal{B}(\interval(r))$ is:
	\begin{align*}
		\pbox_{3}^\ell & = \pangle{1, 1, [1,2]},  \quad
		\pbox_{2}^\ell  = \pangle{1, (1, 2]} \displaybreak[0]\\
		\pbox_{2}^r & = \pangle{2, [1,2)} \quad
		\pbox_{3}^r  = \pangle{2, 2, [1, 2]}
	\end{align*}
	We can then compute 
	$T(\interval(r)) =  \sqrt{|3||3||4|} + \sqrt{|1||2||4|} + \sqrt{|1||3||1|} + 0 \approx 10.56$. 
	Consider $v_\bound(w_1, w_2, w_3) = (1,1,1)$. One can compute
	$T(v_\bound, \interval(r)) = \sqrt{2} + 2 + 1 = 4.414$. 
	If we pick  $\tau = 4$, then $(v_\bound, \interval(r))$ is $\tau$-heavy.
\end{example}

\introparagraph{Splitting an Interval}
We next discuss how we perform a balanced splitting of an $\free$-interval $\interval$.

\begin{restatable}{lemma}{friedgut} \label{lem:box:partition}
Let $\pbox = \pangle{I_1, \dots, I_i, \dots}$ be an $\free$-box,
and $J_1, \dots, J_p$ a partition of the interval $I_i$. Denote
$\pbox_k = \pangle{I_1, \dots, J_k, \dots}$. Then, 
$ \sum_{k=1}^p T(\pbox_k) \leq T(\pbox)$.
\end{restatable}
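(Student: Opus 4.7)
The plan is to unfold the definition $T(\pbox) = \prod_{F \in \edges} |R_F(\pbox)|^{\hat u_F}$ and exploit the fact that the splitting happens along a single coordinate. Partition the atoms as $E_i = \{F \in \edges : x_\free^i \in F\}$ and its complement. For $F \notin E_i$, the restriction $R_F(\pbox) = R_F \ltimes \pbox$ is insensitive to the $i$-th interval of the $\free$-box, so $R_F(\pbox_k) = R_F(\pbox)$ for every $k$, and the factor $|R_F(\pbox)|^{\hat u_F}$ pulls out as a common multiplicative constant from both sides. For $F \in E_i$, the tuples in $R_F(\pbox)$ are classified by their $x_\free^i$-value, and since $J_1,\dots,J_p$ is a partition of $I_i$, writing $n_F^k = |R_F(\pbox_k)|$ and $n_F = |R_F(\pbox)|$ gives the conservation identity $\sum_{k=1}^p n_F^k = n_F$.

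After cancelling the common factors, the claim reduces to
\[
\sum_{k=1}^p \prod_{F \in E_i} (n_F^k)^{\hat u_F} \;\le\; \prod_{F \in E_i} n_F^{\hat u_F},
\qquad \sum_k n_F^k = n_F.
\]
The crucial structural input is that $\hat{\bu}$ is a fractional edge cover of $\nodes_\free$ (by the definition of slack, each free variable is covered to weight at least one). In particular, $W := \sum_{F \in E_i} \hat u_F \ge 1$. Set $w'_F = \hat u_F / W$, so that $\sum_{F \in E_i} w'_F = 1$. The generalized Hölder inequality then yields
\[
\sum_{k=1}^p \prod_{F \in E_i} (n_F^k)^{w'_F} \;\le\; \prod_{F \in E_i} \Bigl(\sum_{k=1}^p n_F^k\Bigr)^{w'_F} \;=\; \prod_{F \in E_i} n_F^{w'_F}.
\]
Raising both sides to the $W$-th power and invoking the super-additivity inequality $\sum_k c_k^W \le \bigl(\sum_k c_k\bigr)^W$ for nonnegative $c_k$ and $W \ge 1$ (applied to $c_k = \prod_{F \in E_i} (n_F^k)^{w'_F}$) gives
\[
\sum_{k=1}^p \prod_{F \in E_i} (n_F^k)^{\hat u_F} \;=\; \sum_{k=1}^p c_k^W \;\le\; \Bigl(\sum_{k=1}^p c_k\Bigr)^W \;\le\; \prod_{F \in E_i} n_F^{\hat u_F},
\]
which completes the reduction.

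The argument is elementary, so there is no deep obstacle; the only point to handle with care is the case $W > 1$, where plain Hölder alone is not enough and one really needs the extra super-additivity step that the slack produces. A degenerate corner case is $E_i = \emptyset$, but this cannot arise because $x_\free^i \in \nodes_\free$ is a free variable of a natural-join query and must appear in at least one atom. One should also remark that the argument is tight exactly when $\hat{\bu}$ is a tight fractional edge cover at $x_\free^i$ and the restrictions on the other atoms behave product-like, matching the intuition that balanced splitting cannot inflate the AGM budget.
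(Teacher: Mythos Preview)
Your proof is correct and follows essentially the same approach as the paper: partition the atoms according to whether they contain $x_\free^i$, cancel the common factors, and reduce to the inequality $\sum_k \prod_{F \in E_i} (n_F^k)^{\hat u_F} \le \prod_{F \in E_i} n_F^{\hat u_F}$ using that $\sum_{F \in E_i} \hat u_F \ge 1$. The only cosmetic difference is that the paper cites this last step directly as Friedgut's generalized H\"older inequality, whereas you unpack it into normalized H\"older plus the super-additivity of $W$-th powers.
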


\begin{proof}
	Let $\mathcal{F}$ be the hyperedges that include the variable $x_\free^i$. 
	Notice that if $F \notin \mathcal{F}$, then $R_F(\pbox_k) = R_F(\pbox)$
	for every $k=1, \dots,p $.
	Moreover, observe that for every $F \in \mathcal{F}$, we have
	$\sum_{k=1}^p |R_{F}(\pbox_k)| = |R_F(\pbox)|$.
	Thus, to prove the lemma it suffices to show that
	$$ \sum_{k=1}^p \prod_{F \in \mathcal{F}} |R_{F}(\pbox_k)|^{\hat{u}_F} \leq 
	\prod_{F \in \mathcal{F}} \left( \sum_{k=1}^p |R_{F}(\pbox_k)| \right)^{\hat{u}_F}$$
	The above inequality is an application of Friedgut's inequality~\cite{Friedgut} called the
	generalized H{\"o}lder inequality, which we can apply because  
	$\sum_{F \in \mathcal{F}} \hat{u}_F \geq 1$.
\end{proof}


\begin{restatable}{lemma}{splitpoint} \label{lem:split:interval}
Consider the canonical $\free$-box
$$\pbox = \pangle{a_1, \dots, a_{i-1}, [\beta_L, \beta_U]}.$$
Then, for any $t \geq 0$, there exists $\beta \in \dom[x_\free^i]$ such that 
\begin{packed_enum}
\item $T(\pangle{a_1, \dots, a_{i-1}, [\beta_L, \beta)} \leq t$ 
\item $T(\pangle{a_1, \dots, a_{i-1}, (\beta, \beta_U]}) \leq \max\{0, T(\pbox)-t\}$. 
\end{packed_enum}
Moreover, we can compute $\beta$ in time $\tilde{O}(1)$. 
\end{restatable}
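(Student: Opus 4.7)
The plan is a monotonicity-based threshold argument that uses Lemma~\ref{lem:box:partition} twice: once to establish monotonicity of the ``left mass'' as a function of the split point, and once to bound the ``right mass'' at the chosen split. For $\beta \in \dom[x_\free^i] \cap [\beta_L, \beta_U]$, define
$$f(\beta) = T(\pangle{a_1, \dots, a_{i-1}, [\beta_L, \beta)}),$$
and observe that $f$ is non-decreasing: for $\beta_1 < \beta_2$, applying Lemma~\ref{lem:box:partition} to the two-part split $[\beta_L, \beta_2) = [\beta_L, \beta_1) \cup [\beta_1, \beta_2)$ gives $f(\beta_1) \leq f(\beta_2)$. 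Similarly, $f(\beta_L) = 0$ and $f(\beta_U) \leq T(\pbox)$.

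If $t \geq T(\pbox)$, I would pick $\beta = \beta_U$: then $f(\beta_U) \leq T(\pbox) \leq t$ and the right interval $(\beta_U, \beta_U]$ is empty, so both bounds hold with $\max\{0, T(\pbox)-t\} = 0$. Otherwise $t < T(\pbox)$, and I would choose $\beta^\star$ to be the largest element of $\dom[x_\free^i] \cap [\beta_L, \beta_U]$ with $f(\beta^\star) \leq t$ (this exists since $f(\beta_L) = 0 \leq t$). If $\beta^\star = \beta_U$ the right interval is again empty; otherwise, let $\beta^+$ denote the immediate successor of $\beta^\star$ in $\dom[x_\free^i]$, so $\beta^+ \leq \beta_U$ and $f(\beta^+) > t$ by maximality. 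Applying Lemma~\ref{lem:box:partition} to the partition $[\beta_L, \beta_U] = [\beta_L, \beta^+) \cup [\beta^+, \beta_U]$ yields
$$f(\beta^+) + T(\pangle{a_1, \dots, a_{i-1}, [\beta^+, \beta_U]}) \leq T(\pbox).$$
Since $(\beta^\star, \beta_U]$ and $[\beta^+, \beta_U]$ agree on the active domain of $x_\free^i$, the two corresponding boxes have equal $T$-values, so I conclude
$$T(\pangle{a_1, \dots, a_{i-1}, (\beta^\star, \beta_U]}) \;\leq\; T(\pbox) - f(\beta^+) \;<\; T(\pbox) - t,$$
which is the second bound; the first bound is immediate from the choice of $\beta^\star$.

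For the running time, monotonicity of $f$ allows me to locate $\beta^\star$ by binary search over the active-domain values in $[\beta_L, \beta_U]$ using $O(\log |D|)$ probes. Each probe evaluates $f$ by computing a constant-size product over hyperedges of terms $|R_F \ltimes \pangle{a_1, \dots, a_{i-1}, [\beta_L, \beta)}|^{\hat u_F}$; with each relation pre-sorted and equipped with rank/prefix-count indices consistent with the variable order $x_\free^1, \dots, x_\free^i$, each range-count query answers in $O(\log |D|)$ time, yielding the overall $\tilde{O}(1)$ bound. The main technical subtlety is that $f$ can jump by an arbitrarily large amount between consecutive domain values, so the equality $f(\beta) = t$ cannot be enforced; the key trick is to invoke Lemma~\ref{lem:box:partition} at the successor $\beta^+$ rather than at $\beta^\star$ itself, which converts the ``overshoot'' $f(\beta^+) > t$ directly into the required slack $T(\pbox) - t$ for the right side.
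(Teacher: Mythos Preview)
Your proof is correct and follows essentially the same approach as the paper: establish monotonicity of the left-mass function, binary-search for the threshold point, and invoke Lemma~\ref{lem:box:partition} on a two-part split to bound the right side. The only cosmetic difference is that the paper works with the closed-interval function $v_j = T(\pangle{a_1,\dots,a_{i-1},[\beta_L,b_j]})$ and selects the \emph{smallest} $b_j$ with $v_j \geq \min(T(\pbox),t)$, whereas you work with the open-interval function $f$ and select the \emph{largest} element with $f \leq t$; these are dual binary searches yielding possibly different but equally valid split points.
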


\begin{proof}
	Let $\beta_L = b_1, \dots, b_n = \beta_U$ be the elements of the interval $[\beta_L, \beta_U]$ 
	in sorted order. 
	Define $v_i = T(\pangle{a_1, \dots, a_{i-1}, [\beta_L,b_i]})$ for $i=1, \dots,n$. Observe that 
	we have $v_1 \leq v_2 \leq \dots v_n = T(\pbox)$. 
	Hence, we can view the elements $b_i$ as being sorted in increasing order w.r.t. to 
	the value $v_i$. We now perform binary search to find $\beta = \min_{i} \{v_i  \geq \min (T(\pbox),t) \}$; 
	such an element always exists since $v_i$ is increasing and
	$v_n = T(\pbox)$. We can create an index that returns the count $|R_F(\pbox)|$ in
	logarithmic time, hence the running time to find $\beta$ is $\tilde{O}(1)$.
	By construction, we have
	$T(\pangle{a_1, \dots, a_{i-1}, [\beta_L, \beta)}) \leq \min (T(\pbox),t) \leq t$. 
	Finally, since the intervals $[\beta_L,\beta]$, $[\beta,\beta]$ and
	and $(\beta, \beta_U]$ form a partition of $[\beta_L, \beta_U]$, we can apply 
	Lemma~\ref{lem:box:partition} to obtain that
	$T(\pangle{a_1, \dots, a_{i-1}, (\beta, \beta_U]})  \leq T(\pbox)-\min (T(\pbox),t)
	= \max (0,T(\pbox)-t)$.
\end{proof}

We now present Algorithm~\ref{algo:isplit}, an algorithm that allows for balanced
splitting of an $\free$-interval $\interval$. 

\begin{algorithm}[htp]
  \DontPrintSemicolon
  \LinesNumbered
  \SetNoFillComment
  $\mathcal{B}(\interval) = \{\pbox_1, \dots, \pbox_k \}$ in lexicographic order \;
  $T \gets \sum_{i=1}^k T(\pbox_i)$ \;
  $s \gets \arg \min_{j} \{ \sum_{i=1}^j T(\pbox_i) > T/2\}$  \;
   \BlankLine
 \tcc{let $\pbox_s = \pangle{c_1, \dots, c_{k-1}, I_k, \dots, I_\mu}$}
  
 $\gamma_{k-1} \gets \sum_{i=1}^{s-1} T(\pbox_i), \quad \Delta_{k-1} \gets T(\pbox_s)$ \;
   \BlankLine
 \For{j=k \emph{\KwTo} $\mu$}{
   find min $c_j$ s.t.  
   $T(\pangle{c_1, \dots, c_{j-1}, I_j \cap [\bot,c_j]}) \geq$ 
   $\min\{ \Delta_{j-1}, T/2-\gamma_{j-1}\}$ \;
   $\Delta_j \gets T(\pangle{c_1, \dots, c_{j}})$ \;
   $\gamma_j \gets \gamma_{j-1} +T(\pangle{c_1, \dots, c_{j-1}, I_j \cap [\bot,c_j)}$ \;
 }
 \KwRet{$(c_1, \dots, c_\mu)$}
  \caption{Splitting an $\free$-interval $\interval$}
  \label{algo:isplit}
\end{algorithm}

\begin{proposition}\label{prop:interval:split}
Let $\interval = [\ba, \mathbf{b}]$ be an $\free$-interval. Then, 
Algorithm~\ref{algo:isplit} returns $\mathbf{c} \in \dom_\free$
that splits $\interval$ into $\interval^\prec = [\ba, \mathbf{c})$ and $\interval^\succ = (\mathbf{c}, \mathbf{b}]$
such that
$T(\interval^\prec) \leq T(\interval)/2$ and $T(\interval^\succ) \leq T(\interval)/2$.
Moreover, it terminates in time $\tilde{O}(1)$.
\end{proposition}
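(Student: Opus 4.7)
The plan is to verify by induction both claimed upper bounds on $T(\interval^\prec)$ and $T(\interval^\succ)$, tracking the scalars $\gamma_j$ (committed weight on the left) and $\Delta_j$ (weight still pending inside the unresolved singleton-prefix box $\pangle{c_1,\dots,c_j}$) that Algorithm~\ref{algo:isplit} maintains. The two invariants to carry through the loop are (I1) $\gamma_j \le T/2$, and (I2) $\gamma_j + \Delta_j \ge T/2$.

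For the base step $j=k{-}1$, minimality of the index $s$ chosen on line 3 gives directly $\gamma_{k-1} = \sum_{i<s} T(\pbox_i) \le T/2$ and $\gamma_{k-1}+\Delta_{k-1} = \sum_{i \le s} T(\pbox_i) > T/2$. The boxes of $\mathcal{B}(\interval)$ with index $<s$ sit lexicographically below every tuple of $\pbox_s$ (so below the eventual $\mathbf{c} \in \pbox_s$), while those with index $>s$ sit above; this uses item~(1) of Lemma~\ref{lem:box:decomposition}. For the inductive step, observe that the search on line~6 is exactly a binary-search implementation of Lemma~\ref{lem:split:interval} applied to the interval $I_j$ with threshold $t_j = \min\{\Delta_{j-1},\,T/2-\gamma_{j-1}\}$. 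The lemma yields $T(\pangle{c_1,\dots,c_{j-1}, I_j \cap [\bot,c_j)}) \le t_j$, which combined with $\gamma_j = \gamma_{j-1} + T(\pangle{c_1,\dots,c_{j-1}, I_j \cap [\bot,c_j)})$ preserves (I1); and by the minimality of $c_j$, $T(\pangle{c_1,\dots,c_{j-1}, I_j \cap [\bot,c_j]}) \ge t_j$, so $\gamma_j + \Delta_j \ge \gamma_{j-1}+t_j \ge \min\{\gamma_{j-1}+\Delta_{j-1},\,T/2\}$, and the inductive hypothesis delivers (I2).

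To convert the scalar invariants into the stated bounds on $T(\interval^\prec)$ and $T(\interval^\succ)$, I would argue that the canonical boxes summed into $\gamma_\mu$, together with the leading boxes $\pbox_1,\dots,\pbox_{s-1}$, form (up to empty pieces) the box decomposition of $[\ba, \mathbf{c})$, giving $T(\interval^\prec) \le \gamma_\mu \le T/2$. For the right side, applying Lemma~\ref{lem:box:partition} to each three-way split at position $j$ telescopes to
\[
T(\interval^\succ) \;\le\; \sum_{i>s} T(\pbox_i) \;+\; \sum_{j=k}^{\mu} T\bigl(\pangle{c_1,\dots,c_{j-1},\, I_j \cap (c_j,\top]}\bigr) \;\le\; T - \gamma_\mu - \Delta_\mu,
\]
and (I2) closes this at $T/2$. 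The runtime claim is immediate: computing $T$ and locating the prefix-sum crossover $s$ over the $\le 2\mu-1$ boxes of $\mathcal{B}(\interval)$ is $\tilde{O}(1)$, and each of the at most $\mu{-}k{+}1$ loop iterations runs the $\tilde{O}(1)$ binary search of Lemma~\ref{lem:split:interval}; since $\mu$ is a query-complexity constant, the total is $\tilde{O}(1)$.

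The main obstacle is the structural step identifying the canonical boxes accumulated in $\gamma_\mu$ with the formal box decomposition $\mathcal{B}(\interval^\prec)$ (and symmetrically for $\interval^\succ$), because $\mathcal{B}$ is defined relative to the first disagreement coordinate of the endpoints, which shifts when the right endpoint changes from $\mathbf{b}$ to $\mathbf{c}$. Edge cases where $c_j$ coincides with an endpoint of $I_j$, or where $\pbox_s$ straddles the left-series/right-series boundary of $\mathcal{B}(\interval)$, will be handled uniformly by treating empty intervals as boxes of $T$-value zero; once this is done, the equality (or at worst, Lemma~\ref{lem:box:partition}-justified inequality) between the algorithm's bookkeeping and the formal $T$-values follows.
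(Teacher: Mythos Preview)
Your invariant (I1) and its proof are fine, and the structural bookkeeping (identifying $\gamma_\mu$ with $T(\interval^\prec)$, the telescoping bound $T(\interval^\succ)\le T-\gamma_\mu-\Delta_\mu$, the runtime argument) matches the paper. The gap is in the inductive step of (I2). You deduce $\gamma_j+\Delta_j\ge\gamma_{j-1}+t_j$ from $T(\pangle{c_1,\dots,c_{j-1},I_j\cap[\bot,c_j]})\ge t_j$, but this needs
\[
T(\pbox^\prec_j)+\Delta_j \;\ge\; T\bigl(\pangle{c_1,\dots,c_{j-1},I_j\cap[\bot,c_j]}\bigr),
\]
i.e.\ superadditivity of $T$ over the two-piece partition of $I_j\cap[\bot,c_j]$ into $I_j\cap[\bot,c_j)$ and $\{c_j\}$. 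Lemma~\ref{lem:box:partition} gives only the \emph{opposite} (subadditive) direction. In fact (I2) can fail outright: with $\mu=1$, two relations with $\hat u_1=\hat u_2=1/2$, and the mass of $R_1$ sitting entirely below $c_1$ while the mass of $R_2$ sits entirely at $c_1$, both $T(\pbox^\prec_1)$ and $\Delta_1$ vanish even though $T(\pangle{[\bot,c_1]})$ is large. So the bound $T(\interval^\succ)\le T-\gamma_\mu-\Delta_\mu$ alone cannot reach $T/2$.

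The paper avoids this by not attempting a lower bound on $\gamma_j+\Delta_j$. Instead it defines the right-side accumulator $\bar\gamma_j=\sum_{i>s}T(\pbox_i)+\sum_{i\le j}T(\pbox^\succ_i)$ and bounds each increment $T(\pbox^\succ_j)$ \emph{from above} via part~(2) of Lemma~\ref{lem:split:interval}: $T(\pbox^\succ_j)\le\max\{0,\Delta_{j-1}-t_j\}$. This yields $\bar\gamma_j\le\max\{\bar\gamma_{j-1},\;(\Delta_{j-1}+\bar\gamma_{j-1}+\gamma_{j-1})-T/2\}$, and the second branch is controlled by the telescoping identity $\bar\gamma_{j-1}+\gamma_{j-1}\le T-\Delta_{j-1}$ (which is your same Lemma~\ref{lem:box:partition} telescoping, just used at every index rather than only at $\mu$). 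To fix your argument, drop (I2) and carry the induction directly on $\bar\gamma_j\le T/2$ using the upper bound on $T(\pbox^\succ_j)$.
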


\begin{proof}
Notice first that line (6) of the algorithm always finds a $c_j$, following
Lemma~\ref{lem:split:interval}. Hence, the algorithm always returns a 
split point $\mathbf{c} = (c_1, \dots, c_\mu)$.

Define $\pbox^\prec_j = \pangle{c_1, \dots, c_{j-1}, I_j \cap [\bot,c_j)}$ and
$\pbox^\succ_j = \pangle{c_1, \dots, c_{j-1}, I_j \cap (c_j,\top]}$ for 
$j = k, \dots, \mu$.
Similarly to $\gamma_j$, define $\bar{\gamma}_{k-1} = \sum_{i=s+1}^\mu T(\pbox_i)$,
and for $j=k, \dots, \mu$, $\bar{\gamma}_{j} = \bar{\gamma}_{j-1} + T(\pbox^\succ_j)$.

Now, consider the following sets of canonical $\free$-boxes:
\begin{align*}
\mathcal{B}^\prec & =  \pbox_1, \dots, \pbox_{s-1}, \pbox^\prec_{k}, \dots, \pbox^\prec_\mu \\
 \mathcal{B}^\succ & =  \pbox_{1}, \dots, \pbox_{s-1}, \pbox^\succ_{k}, \dots, \pbox^\succ_{\mu}
\end{align*}
The key observation is that $\mathcal{B}^\prec = \mathcal{B}(\interval^\prec )$ and
$\mathcal{B}^\succ = \mathcal{B}(\interval^\succ)$. Moreover, by
construction $\gamma_{\mu} = \sum_{\pbox \in \mathcal{B}^\prec} T(\pbox)$ and also
$\bar \gamma_{\mu}  = \sum_{\pbox \in \mathcal{B}^\succ} T(\pbox)$.
Thus, to prove the statement, it suffices to show that $\gamma_{\mu}, \bar \gamma_{\mu}  \leq T/2$.

We will first show that for any $j=k-1, \dots, \mu: \gamma_{j} \leq T/2$. For $\gamma_{k-1}$ this follows by our choice of $s$.
For some $j \geq k$,  we have $\gamma_j 
= \gamma_{j-1} + T(\pbox^\prec_j) \leq \gamma_{j-1} + \min \{\Delta_{j-1}, T/2-\gamma_{j-1}\} \leq T/2$,
where the first inequality follows from the choice of $c_j$.

Second, we will show by induction that for $j=k-1, \dots, \mu: \bar \gamma_{j} \leq T/2$.
For $\bar \gamma_{k-1}$, we have
$\bar \gamma_{k-1} = T - \sum_{i=1}^s T(\pbox_i) \leq T-T/2 = T/2$. 
Now, let $j \geq k$. We can write:
\begin{align*}
\bar{\gamma}_{j} & = \bar{\gamma}_{j-1} + T(\pbox^\succ_j) \\
& \leq \bar{\gamma}_{j-1} + \max \{0, \Delta_{j-1} - (T/2- \gamma_{j-1}) \} \\
& = \max \{ \bar{\gamma}_{j-1}, (\Delta_{j-1} + \bar{\gamma}_{j-1} + \gamma_{j-1}) -T/2 \} \
\end{align*}
The first inequality follows from item (2) of Lemma~\ref{lem:split:interval}.
By the inductive hypothesis we have $ \bar{\gamma}_{j-1} \leq T/2$. We next show
that
$ \bar{\gamma}_{j} + \gamma_{j} \leq T - \Delta_{j}$.
From Lemma~\ref{lem:box:partition}, it holds for every $j=k, \dots, \mu$:
\begin{align*}
T(\pbox^\prec_j) + T(\pbox^\succ_j) \leq \Delta_{j-1} - \Delta_j
\end{align*}
Using the above inequality, we can write:
\begin{align*}
\bar{\gamma}_{j} + \gamma_{j} 
& = \sum_{i \neq s} T(\pbox_i)  + \sum_{i=k}^j  (T(\pbox_i^\prec) + T(\pbox_i^\succ)) \\
& \leq \sum_{i \neq s} T(\pbox_i)  + \sum_{i=k}^j  (\Delta_{j-1} - \Delta_j) \\
& = \sum_{i \neq s} T(\pbox_i)  + \Delta_{k-1} - \Delta_j \\
& = T - \Delta_j 
\end{align*}

The runtime bound of $\tilde{O}(1)$ follows from Lemma~\ref{lem:split:interval},
which tells us that we can compute each $c_j$ (line (6)) in time $\tilde{O}(1)$.
\end{proof}

\subsection{The Basic Structure}

We now have all the necessary pieces to describe how we construct the 
compressed representation. Recall that our data structure is parametrized 
by a threshold parameter $\tau$, and by a weight assignment 
$\bu = (u_F)_{F \in \edges}$ that covers the variables in $\nodes$. 
The construction consists of two steps.

\smallskip
\introparagraph{1) The Delay-Balanced Tree}
In the first step, we construct an annotated binary tree $\tree$.
Each node $w \in V(\tree)$ is annotated
with an $\free$-interval $\interval(w)$ and a value $\beta(w) \in \dom_\free$, 
which is chosen according to Algorithm~\ref{algo:isplit}.
The tree is constructed recursively. 

Initially, we create a {\em root} $r$ with interval $\interval(r) = \dom_\free$.
Let $w$ be a node at level $\ell$ with interval $\interval(w) = [\ba, \mathbf{c}]$,
and define the threshold at level $\ell$ to be $\tau_\ell = \tau / 2^{\ell(1-1/\slack)}$.
In the case where $T(\interval(w)) < \tau_\ell$, $w$ is a leaf of the tree. 
Otherwise, using $\beta(w)$ as a splitting point, 
we construct two sub-intervals of $\interval$:
$$\interval^\prec = [\ba, \beta(w)) \text{ and } \: \interval^\succ = (\beta(w), \mathbf{c}].$$
If $\interval^\prec \neq \emptyset$, we create
a new node $w_l$ as the left child of $w$, with interval $\interval(w_l) = \interval^\prec$. 
Similarly, if $\interval^\succ \neq \emptyset$, we create a new node $w_r$ as the right 
child of $w$, with interval $\interval(w_r) = \interval^\succ$.
We call the resulting tree $\tree$ a {\em delay-balanced tree}. 
 
\begin{restatable}{lemma}{delaybalancedtreesize}
Let $\tree$ be a delay-balanced tree. Then:
\begin{packed_enum}
\item For every node $w \in V(\tree)$ at level $\ell$, we have
$T(\interval(w)) \leq T(\interval(r))/2^\ell$.
\item The depth of $\tree$ is at most $O(\log T)$ and its size at most $O(T)$,
where $T = \prod_{F \in \edges} |R_F|^{u_F}/ \tau^{\slack}$.
\end{packed_enum}
\end{restatable}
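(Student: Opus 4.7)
For part (1), I will argue by induction on the level $\ell$. The base case $\ell=0$ is trivial since $T(\interval(r)) \le T(\interval(r))$. For the inductive step, suppose a node $w$ at level $\ell$ satisfies $T(\interval(w)) \le T(\interval(r))/2^{\ell}$ and it has children $w_l,w_r$ at level $\ell+1$ with intervals $\interval^\prec,\interval^\succ$ obtained by the split at $\beta(w)$. Proposition~\ref{prop:interval:split} applied to $\interval(w)$ gives $T(\interval^\prec),T(\interval^\succ) \le T(\interval(w))/2$, so both children satisfy the bound at level $\ell+1$.

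For the depth bound in part (2), I will combine (1) with the non-leaf condition. A node $w$ at level $\ell$ is internal only if $T(\interval(w)) \ge \tau_\ell = \tau/2^{\ell(1-1/\slack)}$, so by (1) we must have $T(\interval(r))/2^{\ell} \ge \tau/2^{\ell(1-1/\slack)}$, i.e.\ $2^{\ell/\slack} \le T(\interval(r))/\tau$. The AGM-style bound $T(\interval(r)) \le \bigl(\prod_F |R_F|^{u_F}\bigr)^{1/\slack}$ (each box contributing at most its AGM cover, with weights $\hat u_F = u_F/\slack$) then yields $2^{\ell} \le \prod_F |R_F|^{u_F}/\tau^{\slack} = T$, so the depth is $O(\log T)$.

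For the size bound, the key auxiliary claim is that for every level $\ell$, the sum $\sum_{w \text{ at level } \ell} T(\interval(w))$ is at most $T(\interval(r))$. This I will prove by induction over levels: when a node $w$ is split into children with intervals $\interval^\prec$ and $\interval^\succ$, the proof of Proposition~\ref{prop:interval:split} already establishes (via iterated application of Lemma~\ref{lem:box:partition}) that $T(\interval^\prec) + T(\interval^\succ) \le T(\interval(w))$, so the levelwise total is non-increasing. Consequently the number of internal nodes at level $\ell$ is at most $T(\interval(r))/\tau_\ell$, and summing over $\ell = 0,1,\dots,L$ with $L = O(\log T)$ gives
\[
|V(\tree)| \;=\; O\!\Bigl(\tfrac{T(\interval(r))}{\tau}\sum_{\ell=0}^{L} 2^{\ell(1-1/\slack)}\Bigr) \;=\; O\!\Bigl(\tfrac{T(\interval(r))}{\tau}\cdot 2^{L(1-1/\slack)}\Bigr).
\]
Substituting $T(\interval(r)) \le \bigl(\prod_F |R_F|^{u_F}\bigr)^{1/\slack}$ and $2^{L} = O(T)$ collapses this to $O(T)$, as desired.

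The main obstacle is the size bound. Part (1) and the depth estimate are clean inductions, but showing $|V(\tree)| = O(T)$ rather than the naive $2^{O(\log T)}$ requires the subadditivity claim $T(\interval^\prec)+T(\interval^\succ)\le T(\interval(w))$ across a split — this is not immediate from Lemma~\ref{lem:box:partition} alone, since the split point lies inside a box produced by the box decomposition and must be threaded through the telescoping argument that already appears inside the proof of Proposition~\ref{prop:interval:split}. Once that subadditivity is in hand, the geometric-series calculation balances exactly because the threshold $\tau_\ell$ relaxes by factor $2^{1-1/\slack}$ per level, matching the growth of the AGM cover under the slack $\slack$.
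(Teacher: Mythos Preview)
Your proof of part (1) and the depth bound in part (2) match the paper exactly. For the size bound, however, you take a more elaborate route than necessary. The paper simply observes that the depth derivation actually yields the sharper inequality $2^\ell \le (2\mu-1)^{\slack}\cdot T$ (not merely $\ell = O(\log T)$ with an unspecified leading constant), and since the tree is binary, the total number of nodes is at most $2\cdot 2^{\text{depth}} = O(T)$ immediately. Your worry about ``$2^{O(\log T)}$ rather than $O(T)$'' is therefore unfounded: the constant in front of $\log T$ is $1$, with only an additive $O(1)$ shift coming from the query-dependent factor $(2\mu-1)^{\slack}$.

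Your alternative argument via the subadditivity $T(\interval^\prec)+T(\interval^\succ)\le T(\interval(w))$ and the level-by-level geometric sum is correct (the subadditivity is indeed established inside the proof of Proposition~\ref{prop:interval:split}, where $\gamma_\mu+\bar\gamma_\mu\le T-\Delta_\mu\le T$), and it would be the right tool if the depth bound were only $\ell\le c\log T$ for some $c>1$. It buys robustness against a weaker depth estimate, at the cost of re-deriving an inequality already buried in an earlier proof. Here it is unnecessary: once you have written $2^{\ell/\slack}\le T(\interval(r))/\tau$ and substituted your own bound $T(\interval(r))\le (2\mu-1)\bigl(\prod_F|R_F|^{u_F}\bigr)^{1/\slack}$, raising to the $\slack$ power gives $2^\ell = O(T)$ directly, and the binary-tree node count follows in one line.
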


\begin{proof} 
	If $w_1$ is a child of $w_2$, then we have that
	$T(\interval(w_1)) \leq T(\interval(w_2))/2$ by Proposition~\ref{prop:interval:split}.
	Item (1) follows by a simple induction on the depth of the tree.
	
	Suppose that $w$ is a node at level $\ell$. From the condition that we use to
	stop expanding a node, we have:
	\begin{align*}
		\tau_\ell & \leq T(\interval(w)) \leq T(\interval(r)) /2^\ell \\
		& \leq (2\mu-1) \cdot \prod_{F \in \edges} |R_F|^{\hat u_F} /2^{\ell}
	\end{align*}
	The bound on the size follows from the fact that the tree is binary.
\end{proof}

\begin{example}
	Continuing our running example, we will construct the delay-balanced tree. Since $\ell = 0$ for root node, $\tau_\ell = \tau$. We begin by finding the split point $\beta(r)$ for root node. We start with unit interval $\interval(r)^\prec = [\pangle{1,1,1}, \pangle{1,1,1}]$ and keep increasing the interval range until the join evaluation cost $T(\interval(r)^\prec) > T(\interval(r)) / 2$. For interval $\interval(r)^\prec = [\pangle{1,1,1}, \pangle{1,1,1}]$, the box decomposition is $\mathcal{B}(\interval(r)^\prec) = \pbox_{3}^\ell = \pangle{1, 1, 1}$. 
	
	The reader can verify that $T(\interval(r)^\prec) = \sqrt{|3||1||2|} \approx 2.44 \text{ units}$ and changing the interval to  $\interval'(r)^\prec = [\pangle{1,1,1}, $ $\pangle{1,1,2}]$ gives $T(\interval'(r)^\prec) = \sqrt{|3||3||4|} > T(\interval(r))/2$. Thus, $\beta(r) = (1,1,2)$ and $\interval(r)^\succ = [\pangle{1,2,1}, \pangle{2,2,2}]$ with $T(\interval(r)^\succ) = \sqrt{|1||2||4|} + \sqrt{|1||3||1|} \approx 4.56$ . 
	
	For the next level $\ell = 1$, the threshold $\tau_\ell = \tau/ \sqrt{2} \approx 2.82$. Since $T(\interval(r)^\prec) \leq 2.82$, it is a leaf node. We recursively split $\interval(r_r) = \interval^\succ(r) = [\pangle{1,2,1}, \pangle{2,2,2}]$ into $\interval^\prec(r_r)$ and $\interval^\succ(r_r)$. Fixing $\beta(r_r) = (1,2,2)$, we get $T(\interval^\prec(r_r)) = \sqrt{|1||2||1|} \approx 1.414$ and $\interval^\succ(r_r) = [\pangle{2,1,1}, $ $\pangle{2,2,2}],$ $ T(\interval^\succ(r_r)) = \sqrt{3}$. Since both worst case running times are smaller than $\tau_2 = \tau/2 = 2$, our tree construction is complete. We demonstrate the final delay-balanced tree $\tree$ in Figure~\ref{fig:runexample}.
	
	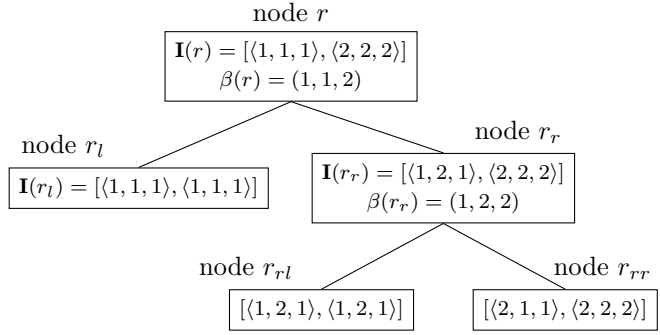
\begin{figure} 
		\centering
		\begin{tikzpicture} 
		[scale=.27,auto,every node/.style={draw},level distance=6cm]
		\tikzstyle{level 1}=[sibling distance = 15cm]
		\tikzstyle{level 2}=[sibling distance = 12cm] 
		\node [label={ \small node $r$}] {\scriptsize 
			\shortstack{$\interval(r) = [\pangle{1,1,1}, \pangle{2,2,2}]$\\$ \beta(r) = (1,1,2)$}
		}
		child {node [label={[xshift=-1.0cm] \small node $r_l$}] {\scriptsize \shortstack{$\interval(r_l) = [\pangle{1,1,1}, \pangle{1,1,1}]$}}}
		child {node [label={[xshift=1cm] \small node $r_r$}] {\scriptsize \shortstack{$\interval(r_r) = [\pangle{1,2,1}, \pangle{2,2,2}]$\\ $ \beta(r_r) = (1,2,2)$}}
			child {node [label={[xshift=-1.0cm] \small node $r_{rl}$}] {\scriptsize \shortstack{$[\pangle{1,2,1}, \pangle{1,2,1}]$}}}
			child {node [ label={[xshift=0.5cm] \small node $r_{rr}$}] {\scriptsize $[\pangle{2,1,1}, \pangle{2,2,2}]$}}
		};
		\end{tikzpicture}
		\caption{Delay balanced tree for running example}
		\label{fig:runexample}	
	\end{figure}

\end{example}

\introparagraph{2) Storing Auxiliary Information}
The second step is to store auxiliary information for the heavy valuations at each node
of the tree $\tree$. Recall that the threshold for a heavy valuation at a node in level $\ell$
is $\tau_\ell = \tau / 2^{\ell (1-1/\slack)}$. We will construct a {\em dictionary} $\dict$ that 
takes as arguments a node $w \in V(\tree)$ at level $\ell$ and a valuation $v_\bound$
such that $(v_\bound, \interval(w))$ is $\tau_\ell$-heavy 
and returns in constant time:
\begin{equation*}
\dict(w, v_\bound) = 
  \begin{cases}
      \mathsf{0}, & \text{ if $(\Join_{F \in \edges} R_F(v_\bound)) \ltimes \interval(w) = \emptyset$}, \\
      \mathsf{1}, & otherwise.
  \end{cases}
\end{equation*}
If $(v_\bound, \interval(w))$ is not $\tau_\ell$-heavy, then there is no entry for this
pair in the dictionary and it simply returns $\bot$. In other words, $\dict$ remembers
for the pairs that are heavy whether the answer is empty or not for the restriction of
the result to the $\free$-interval $\interval(w)$.


We next provide an upper bound on the size of $\dict$.

\begin{restatable}{lemma}{heavyvaluationsbound} \label{lem:dictsizebound}
$|\dict| = \tilde{O}( \prod_{F \in \edges} |R_F|^{u_F} / \tau^{\slack} )$.
\end{restatable}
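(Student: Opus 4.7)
The plan is to bound $|\dict|$ by summing, over all nodes $w \in V(\tree)$, the number of heavy valuations stored at $w$, and then to simplify the resulting geometric sum using the delay-balanced tree structure.

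First I would apply Proposition~\ref{lem:heavy:bound} to each node $w$ at level $\ell$ with threshold $\tau_\ell = \tau/2^{\ell(1-1/\slack)}$. This gives that the number of $\tau_\ell$-heavy valuations stored at $w$ is at most $(T(\interval(w))/\tau_\ell)^{\slack}$. By item (1) of the delay-balanced tree size lemma, $T(\interval(w)) \leq T(\interval(r))/2^\ell$. Substituting both expressions, the per-node bound becomes
\begin{equation*}
\left( \frac{T(\interval(r))/2^\ell}{\tau / 2^{\ell(1-1/\slack)}} \right)^{\slack}
= \left( \frac{T(\interval(r))}{\tau} \right)^{\slack} \cdot \frac{1}{2^\ell}.
\end{equation*}
The key observation is that the exponents involving $\ell$ cancel out in such a way that each level contributes a factor $1/2^\ell$ per node, precisely matching the choice of $\tau_\ell$.

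Next I would sum over all nodes at a fixed level $\ell$. Since $\tree$ is binary, there are at most $2^\ell$ nodes at level $\ell$, so the contribution of level $\ell$ is at most $(T(\interval(r))/\tau)^{\slack}$, independent of $\ell$. Summing over the $O(\log T)$ levels (by item (2) of the delay-balanced tree size lemma) yields a total bound of $\tilde{O}((T(\interval(r))/\tau)^{\slack})$.

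Finally I would simplify $T(\interval(r))$. Because $|\mathcal{B}(\interval(r))| \leq 2\mu-1 = O(1)$ and each box contributes at most $\prod_{F} |R_F|^{\hat u_F}$, we have $T(\interval(r)) = O(\prod_F |R_F|^{\hat u_F})$. Raising to the power $\slack$ and using $\hat u_F \cdot \slack = u_F$ gives $T(\interval(r))^{\slack} = O(\prod_F |R_F|^{u_F})$, yielding the claimed bound $|\dict| = \tilde{O}(\prod_F |R_F|^{u_F}/\tau^{\slack})$. The main subtlety, and the reason the calibration $\tau_\ell = \tau/2^{\ell(1-1/\slack)}$ was chosen in the first place, is exactly the cancellation that makes each level contribute the same amount; without this choice the geometric sum would not telescope correctly, so verifying this balance is the one place care is needed.
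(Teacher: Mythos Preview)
Your proposal is correct and follows essentially the same approach as the paper's proof: both apply Proposition~\ref{lem:heavy:bound} at each node, use the halving property $T(\interval(w)) \leq T(\interval(r))/2^\ell$, exploit the calibration of $\tau_\ell$ so that each of the $2^\ell$ nodes at level $\ell$ contributes $2^{-\ell}$ times a fixed quantity, and then sum over the $O(\log|D|)$ levels. The only cosmetic difference is that the paper converts $T(\interval(r))^{\slack}$ into $\prod_F |R_F|^{u_F}$ before summing over levels, whereas you do it afterward.
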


\begin{proof}
	We first bound the number of $(w, v_\bound)$ pairs that are stored in the dictionary for a 
	node $w$ at level $\ell$. Notice that for node $w$ we will store an entry for at most 
	the $\tau_\ell$-heavy valuations. By Proposition~\ref{lem:heavy:bound}, these are at most
	\begin{align*}
		|\mathcal{H}(\interval(w),\tau_\ell)| & 
		\leq \left( \frac{T(\interval(w))}{ \tau_\ell} \right)^{\slack}
		\leq \left( \frac{T(\interval(r))}{2^\ell \tau_\ell} \right)^{\slack} \\
		& \leq (2\mu-1)^{\slack} 2^{-\ell \slack} \tau_\ell^{-\slack}\prod_{F \in \edges} |R_F|^{u_F}  \displaybreak[0]\\
		& = c \cdot 2^{-\ell} \tau^{-\slack} \prod_{F \in \edges} |R_F|^{u_F}
	\end{align*}
	where $c = (2\mu-1)^{\slack}$ is a constant. 
	At level $\ell$ we have at most $2^\ell$ nodes. Hence, the total number of nodes 
	if the tree has $L$ levels is at most:
	\begin{align*}
		\sum_{\ell=0}^L 2^\ell \left( \tau^{-\slack} 2^{-\ell}  \prod_{F \in \edges} |R_F|^{u_F} \right) 
		\leq \log  |D| \cdot \tau^{-\slack}  \prod_{F \in \edges} |R_F|^{u_F}
	\end{align*}
	This concludes the proof.
\end{proof}

We show in Appendix~\ref{sec:dictionary:construction} a detailed construction that allows us to build
the dictionary $\dict$ in time  $\tilde{O}(\prod_{F \in \edges} |R_F|^{u_F})$, using at most 
$\tilde{O}( \prod_{F \in \edges} |R_F|^{u_F} / \tau^{\slack})$ space, \ie no more space than the size of the
dictionary. 

The final compressed representation consists of the pair $(\tree, \dict)$, along with the necessary indexes on the base relations (that need only linear space).

\begin{example}
	The last step for our running example is to construct the dictionary for all $\tau_\ell$-heavy valuations. Consider the valuation $v_\bound(w_1, w_2, w_3) = (1,1,1)$, which we have shown to be 
	$\tau$-heavy. Next, we store a bit in the dictionary at each node for $v_\bound$ denoting if the join output is non-empty for the restriction of result to interval $\interval$. The reader can verify that 
	$(v_\bound, \interval(r))$  and $(v_\bound, \interval(r_r))$ are $\tau_0$- and $\tau_1$-heavy respectively. Thus, the dictionary will store two entries for $v_\bound$: $\dict(\interval(r), v_\bound) = 1, \dict(\interval(r_r), v_\bound) = 1$.
	
\end{example}

\subsection{Answering a Query}

We now explain how we can use the data structure to answer an access
request $q = Q^\eta[v]$  given by a valuation $v$. 
The detailed algorithm is depicted in Algorithm~\ref{algo:answer}.

\begin{algorithm}[htp]
\SetCommentSty{textsf}
\DontPrintSemicolon 
\SetKwFunction{proc}{\textsf{eval}}
\SetKwInOut{Input}{\textsc{input}}\SetKwInOut{Output}{\textsc{output}}
\Input{tree $\tree$, dictionary $\dict$, valuation $v$}
\Output{query answer $q(D)$}
\BlankLine
%
%
\proc{$r,v_\bound$} \tcc*[r]{start from the root} 
\KwRet{}
\BlankLine
\SetKwProg{myproc}{\textsc{procedure}}{}{}
\myproc{\proc{$w, v_\bound$}}{
   \uIf{$\dict(w, v_\bound) =\bot$ }{
   \ForAll{$\pbox \in \mathcal{B}(\interval(w))$}{
   \textbf{output} $\Join_{F \in \edges} R_F(v_\bound,\pbox)$ }}
    \uElseIf{$\dict(w, v_\bound) =\mathsf{1}$}{
     \uIf{$w$ has left child $w_\ell$}{\proc{$w_\ell, v_\bound$}}
     \textbf{output} $\Join_{F \in \edges} R_F(v_\bound,[\beta(w), \beta(w)]) $ \;
     \uIf{$w$ has right child $w_r$}{\proc{$w_r, v_\bound$}}
    }
    \KwRet{}}
  \caption{Answering a query $q = Q^\eta[v_\bound]$}
  \label{algo:answer}
\end{algorithm}

We start traversing the tree starting from the root $r$.
For a node $w$, if $\dict(w, v_\bound) = \bot$, we
compute the corresponding subinstance using a worst-case optimal algorithm for every box in
the box decomposition. 
If $\dict(w, v_\bound) = \mathsf{0}$, we do nothing. 
If $\dict(w, v_\bound)= \mathsf{1}$, we recursively traverse the left child
(if it exists), compute the instance for the unit interval $[\beta(w), \beta(w)]$, then recursively
traverse the right child (if it exists). This traversal order guarantees that
the tuples are output in lexicographic order.

\smallskip
\introparagraph{Algorithm Analysis}
We now analyze the performance of Algorithm~\ref{algo:answer}.
Let $\tree_{v}$ be the subtree of $\tree$ that contains the nodes visited by
Algorithm~\ref{algo:answer}. The algorithm stops traversing down the tree only when 
it finds a node $w \in V(\tree)$ such that $\dict(w, v_\bound) \neq \textsf{1}$.
(The leaf nodes of $\tree$ have all $\bot$ entries, since by construction they contain
no heavy pairs.)
Thus, the leaves of $\tree_v$ have $\dict(w, v_\bound) \in \{ \textsf{0}, \bot\}$ and
the internal nodes have $\dict(w, v_\bound) = \textsf{1}$.
Figure~\ref{fig:treev} depicts an instance of such an incomplete binary tree.

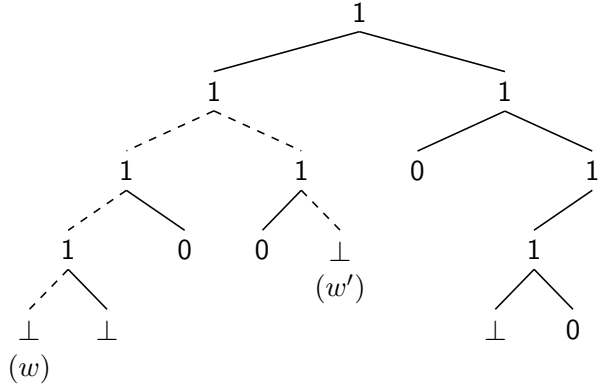
\begin{figure}[t]
\centering
\begin{tikzpicture}
\tikzset{every tree node/.style={minimum width=2.5em, align=center,anchor=north},
blank/.style={draw=none},
edge from parent/.append style={draw=black, line width=0.6},
thick/.style={{very thick}}}
\Tree [.$\mathsf{1}$ [.$\mathsf{1}$ \edge[dashed]; 
  [.$\mathsf{1}$  \edge[dashed]; 
      [.$\mathsf{1}$ \edge[dashed]; $\bot$\\($w$) $\bot$ ] 
      $\mathsf{0}$
      ] 
  \edge[dashed];  [.$\mathsf{1}$ $\mathsf{0}$ \edge[dashed];  $\bot$\\($w'$) ]]
[.$\mathsf{1}$ $\mathsf{0}$
[.$\mathsf{1}$ 
    [.$\mathsf{1}$  $\bot$ $\mathsf{0}$ ] 
    \edge[blank]; \node[blank]{}; ] ] ]
\end{tikzpicture}
\caption{An example subtree $\tree_v$ traversed by Algorithm~\ref{algo:answer}
to answer $q = Q^\eta[v]$. Each node $w$ is annotated by the dictionary entry
$\dict(w, v_\bound)$. The dashed edges show the path from node $w$ that
outputs tuple $t$, to node $w'$ that outputs the lexicographically next tuple $t'$.}
\label{fig:treev}
\end{figure}

\begin{restatable}{lemma}{internalnodetime} \label{lem:internal:constant}
Let $w$ be a node in $\tree_v$. Algorithm~\ref{algo:answer} spends $O(1)$ time at $w$ 
if $\dict(w, v_\bound) \neq \bot$; otherwise it spends time $O(\tau_\ell)$, where
$\ell$ is the level of node $w$.
\end{restatable}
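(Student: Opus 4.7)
The plan is to analyze a single invocation of the procedure \textsf{eval}$(w, v_\bound)$ in Algorithm~\ref{algo:answer}, measuring only the work performed at $w$ itself and charging the recursive calls to $w_\ell$ and $w_r$ separately. Since the dictionary is stored as a hash table, the lookup $\dict(w, v_\bound)$ costs $O(1)$, so in every case the remaining question is only how expensive the body of the corresponding branch is.

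I would split into three cases on the value of $\dict(w, v_\bound)$. When $\dict(w, v_\bound) = \mathsf{0}$, the procedure only performs the lookup and returns, so the time is $O(1)$. When $\dict(w, v_\bound) = \mathsf{1}$, the only work at $w$ outside the recursive calls is outputting $\Join_{F} R_F(v_\bound, [\beta(w), \beta(w)])$. Because the unit $\free$-box $[\beta(w), \beta(w)]$ fixes every free variable to a single constant, combined with $v_\bound$ this already specifies a complete valuation over $\nodes$; therefore, checking whether this candidate tuple lies in every $R_F$ (using hash indexes on the base relations) and emitting it if so costs $O(|\edges|) = O(1)$ for a fixed query.

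The remaining case is $\dict(w, v_\bound) = \bot$, which requires the $O(\tau_\ell)$ bound. The key observation is that by construction the dictionary contains an entry for every pair $(v_\bound, \interval(w))$ that is $\tau_\ell$-heavy (and at leaves we have $T(\interval(w)) < \tau_\ell$, so $T(v_\bound, \interval(w)) \le T(\interval(w)) < \tau_\ell$ automatically). Hence the absence of an entry certifies that $T(v_\bound, \interval(w)) \le \tau_\ell$. The body of this branch iterates over $\pbox \in \mathcal{B}(\interval(w))$ and computes $\Join_{F} R_F(v_\bound, \pbox)$ for each. Since $\hat{\bu} = \bu/\slack$ is a fractional edge cover of $\nodes_\free$ (the variables remaining after $v_\bound$ is applied), invoking a worst-case optimal join algorithm such as NPRR computes each such join in time $\tilde{O}(\prod_F |R_F(v_\bound,\pbox)|^{\hat u_F}) = \tilde{O}(T(v_\bound, \pbox))$. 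By Lemma~\ref{lem:box:decomposition}(3), $|\mathcal{B}(\interval(w))| \le 2\mu - 1 = O(1)$, so summing gives $\tilde{O}(\sum_{\pbox} T(v_\bound, \pbox)) = \tilde{O}(T(v_\bound, \interval(w))) = \tilde{O}(\tau_\ell)$.

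The main subtlety, and the one step that deserves care, is Case~3: justifying that $\dict(w, v_\bound) = \bot$ implies non-heaviness even when $w$ is a leaf (handled by the stopping criterion of the delay-balanced tree construction) and that the restrictions $R_F(v_\bound, \pbox)$ can be supplied to the worst-case optimal join algorithm with only polylogarithmic overhead via the precomputed indexes on the base relations, so that the bound $\tilde{O}(T(v_\bound, \pbox))$ per box really does hold. Everything else is just a direct case analysis plus an application of Proposition~\ref{prop:box:semijoin} to validate that the per-box computation correctly reproduces $(\Join_{F} R_F(v_\bound)) \ltimes \interval(w)$.
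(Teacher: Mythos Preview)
Your proposal is correct and follows essentially the same three-case analysis as the paper's own proof. In fact you are considerably more careful than the paper in the $\bot$ case: the paper simply says ``by definition takes time $O(\tau_\ell)$,'' whereas you explicitly argue that $\dict(w,v_\bound)=\bot$ forces $T(v_\bound,\interval(w))\le\tau_\ell$ (treating leaves via the stopping criterion), invoke the worst-case optimal join per box, and sum over the $O(1)$ boxes of the decomposition.
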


\begin{proof}
	It takes constant time to retrieve the value $\dict(w, v_\bound)$ from the
	dictionary. If the result is $\mathsf{0}$, we do nothing more on node $w$.
	If the result is $\mathsf{1}$, we also need to evaluate the subinstance  
	$\Join_{F \in \edges} R_F(v_\bound, [\beta(w), \beta(w)]) $. But this can be done in 
	constant time, since $[\beta(w), \beta(w)]$ is a unit interval, and thus the evaluation
	can be done by checking a constant number of hash tables.
	
	If $\dict(w, v_\bound) = \bot$ and $w$ is at level $\ell$, the algorithm will
	evaluate the subinstance with interval $\interval(w)$, which by definition
	takes time $O(\tau_\ell)$.
\end{proof}

\begin{proposition}
Algorithm~\ref{algo:answer} enumerates $q(D)$ in lexicographic order
with delay $\delta = \tilde{O}(\tau)$.
\end{proposition}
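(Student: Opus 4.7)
The plan is to establish two things: (i) the algorithm correctly enumerates $q(D)$ in lexicographic order without repetitions, and (ii) the time between any two consecutive outputs (as well as the time to the first output and to termination) is $\tilde{O}(\tau)$.

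For correctness, I would prove by structural induction on the tree $\tree$ that a recursive call $\texttt{eval}(w, v_\bound)$ enumerates exactly the tuples of $(\Join_{F} R_F(v_\bound)) \ltimes \interval(w)$ in lexicographic order, with no repetition. The base cases correspond to the two branches in which we never recurse: when $\dict(w,v_\bound)=\mathsf{0}$, the subinstance is empty by definition of the dictionary; when $\dict(w,v_\bound)=\bot$, we invoke a worst-case optimal join for each box in $\mathcal{B}(\interval(w))$ and use Corollary~\ref{prop:union:box} plus item (1) of Lemma~\ref{lem:box:decomposition} (which says the boxes are lexicographically ordered) to conclude that all tuples in the subinstance are produced in the correct order. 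For the inductive step $\dict(w,v_\bound)=\mathsf{1}$, the interval $\interval(w)=[\ba,\mathbf{c}]$ decomposes disjointly into $[\ba,\beta(w))$, $[\beta(w),\beta(w)]$, and $(\beta(w),\mathbf{c}]$, which are exactly $\interval(w_\ell)$, the unit interval handled by the output statement, and $\interval(w_r)$, in that lexicographic order. The recursive structure of Algorithm~\ref{algo:answer} therefore outputs tuples in lex order and without duplicates.

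For the delay bound, I would reason along the path in $\tree_v$ between the node producing a tuple $t$ and the node producing the next tuple $t'$ (cf.\ Figure~\ref{fig:treev}). This path consists of an upward segment followed by a downward segment, and every internal node it traverses belongs to $\tree_v$, so it satisfies $\dict(\cdot,v_\bound)=\mathsf{1}$. By Lemma~\ref{lem:internal:constant}, each such internal node contributes only $O(1)$ work (dictionary lookup plus the unit-interval evaluation, which is either skipped if it does not produce $t'$, or produces $t'$ itself). The path length is bounded by twice the depth of $\tree$, which is $O(\log |D|)$ by the delay-balanced tree lemma. For the terminal node of the path: if $t'$ comes from a unit-interval output at a $\dict=\mathsf{1}$ node, the extra work is $O(1)$; if $t'$ comes from a $\dict=\bot$ leaf at level $\ell$, the total work inside that leaf is $O(\tau_\ell)$ by Lemma~\ref{lem:internal:constant} and $\tau_\ell \le \tau$. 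Adding the traversal cost, the delay is $O(\log|D|)+O(\tau)=\tilde{O}(\tau)$. The same bookkeeping handles the time to the first tuple (root-down path of length $O(\log|D|)$) and to completion.

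The most delicate point is the case where $t$ and $t'$ are both produced inside a single $\dict=\bot$ leaf evaluation: here the worst-case optimal join is invoked once and emits several tuples, so I need to argue that the \emph{worst gap between consecutive outputs} of this single invocation is $O(\tau_\ell)$. This follows from the total-time bound: the entire call takes at most $O(\tau_\ell)$ time, so in particular any single gap between two consecutive outputs (including the gap before the first output and after the last) is bounded by $O(\tau_\ell)\le O(\tau)$. Combining all three cases gives the stated $\tilde{O}(\tau)$ delay.
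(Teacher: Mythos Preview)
Your correctness argument by structural induction is fine and in fact more explicit than the paper, which essentially takes correctness for granted.

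However, your delay analysis has a genuine gap. You claim that at each internal node on the path from $w$ to $w'$ the algorithm does only $O(1)$ work, giving an additive bound $O(\log|D|)+O(\tau)$. This is not true: when the path ascends from a \emph{left} child $w_{j-1}$ to its parent $w_j$, Algorithm~\ref{algo:answer} does not immediately continue upward. It first evaluates the unit interval at $\beta(w_j)$ and then recurses into the \emph{right} subtree of $w_j$, which is entirely off your path. Symmetrically, on the downward segment, descending to a right child first triggers a full exploration of the left subtree. These off-path detours are where the real work hides, and you have not accounted for them.

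The saving observation (which the paper makes explicitly) is that each such off-path subtree must be empty of output, since otherwise some tuple in it would lie strictly between $t$ and $t'$. Hence the root $w''$ of each detour subtree satisfies $\dict(w'',v_\bound)\in\{\mathsf{0},\bot\}$, and the recursion terminates immediately at $w''$ with cost $O(1)$ or $O(\tau_\ell)\le O(\tau)$. But this means each of the $O(\log|D|)$ path steps can cost up to $O(\tau)$, not $O(1)$, so the correct bound is multiplicative, $O(\tau\log|D|)=\tilde{O}(\tau)$, rather than the additive $O(\log|D|)+O(\tau)$ you state. The final asymptotic conclusion is the same, but your argument as written does not justify it; the same issue affects your one-line treatment of the time to the first tuple.
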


\begin{proof}
Suppose that Algorithm~\ref{algo:answer} outputs $t \in q(D)$, and the lexicographically
next tuple exists and is $t'$. We will show that the time to output $t'$ after $t$ is 
$\tilde{O}(\tau)$.

If $t,t'$ are output while Algorithm~\ref{algo:answer} is at the same node $w$, it must be
that $\dict(w, v_\bound) = \bot$, in which case the delay will be trivially 
bounded by $O(\tau)$. Otherwise let $w$ be the node where $t$ is output, and $w'$ the
node where $t'$ is output. Notice that $t$ will be the last tuple from $w$ that is output,
and $t'$ the first tuple from $w'$. Now, let $P$ be the unique path in $T_v$ that connects
$w$ with $w'$, with nodes $w=w_1, w_2, \dots, w_k  = w'$. An example of $P$ is
depicted in Figure~\ref{fig:treev}. All the nodes in the path, except possibly the endpoints $w_1,w_k$ are internal nodes and thus we have 
$\dict(w_i, v_\bound) = \mathsf{1}$ for $i=2, \dots, k-1$. Moreover, there must exist 
some $q=1, \dots, k$ such that: $(i)$ if $j \leq q$, then $w_{j-1}$ is a child of $w_{j}$, 
and $(ii)$ if $j > q$, then $w_{j}$ is a child of $w_{j-1}$.

Let us consider the first segment of the path, where $j \leq q$. If $w_{j-1}$ is the right
child of $w_j$, then the algorithm will exit $w_j$ and visit the next node in the path.
If it is the left child, then the algorithm will visit the subtree rooted at its right child first.
However, the subtree can only have a single node $w''$ with $\dict(w''_i, v_\bound) \neq \mathsf{1}$, since otherwise $t'$ would not have been the next tuple to be output. 
Thus, after at most $O(\tau)$ time, the algorithm will visit the next node in the path. 
By a symmetric argument, the algorithm will take at most $O(\tau)$ time to visit the
next node in the path for the second segment, where $j \geq q$. 
Since the length of $P$ is at most 2 times the depth of the tree,
which is $O(\log |D|)$, the algorithm will visit $w'$ (and thus output $t'$) in time
${O}(\tau \log |D|)$.

In the case where there is no next tuple after $t$, it is easy to see that there exists again a
path $P$ that ends at the root node $r$. A similar argument can be done to bound the time
to output the first tuple.
\end{proof}

We now proceed to bound the time to answer the query. The next lemma relates the
output size $|q(D)|$ to the size of the tree $\tree_v$.

\begin{restatable}{lemma}{nodesintree}\label{lem:nodes:tree}
The number of nodes in $\tree_v$ is $ \tilde{O}(|q(D)|)$. 
\end{restatable}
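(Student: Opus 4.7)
My strategy is to reduce the counting of all visited nodes to counting only the nodes $w$ with $\dict(w,v_\bound) = \mathsf{1}$, and then charge these to output tuples via the hierarchical interval structure of $\tree$.

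First, I would observe that a non-root node of $\tree$ is placed in $\tree_v$ only if its parent $p$ was visited and satisfies $\dict(p,v_\bound) = \mathsf{1}$ (cases $\mathsf{0}$ and $\bot$ terminate the recursion in Algorithm~\ref{algo:answer}). Since $\tree$ is binary, each such $\mathsf{1}$-parent contributes at most two children to $\tree_v$, yielding
\[
|V(\tree_v)| \;\leq\; 1 \;+\; 2\,N_1, \qquad N_1 \;:=\; |\{\,w \in V(\tree_v) : \dict(w,v_\bound) = \mathsf{1}\,\}|.
\]
Thus it suffices to prove $N_1 = \tilde{O}(|q(D)|)$.

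Next, I would exploit the semantic meaning of $\mathsf{1}$: by definition, $\dict(w,v_\bound)=\mathsf{1}$ implies $(\Join_{F \in \edges} R_F(v_\bound))\ltimes \interval(w) \neq \emptyset$, so there exists at least one tuple $t \in q(D)$ with $t \in \interval(w)$. To charge these nodes to tuples, I need the key structural fact that the collection $\{\interval(w) : w \in V(\tree)\}$ is laminar. This follows from the construction: for any node $w$ with children $w_\ell, w_r$, we have the disjoint decomposition $\interval(w) = \interval(w_\ell) \sqcup \{\beta(w)\} \sqcup \interval(w_r)$. Consequently, for any two nodes $w,w'$ in $\tree$, their intervals are either disjoint or one contains the other, and for a fixed output tuple $t$, the set $\{w \in V(\tree) : t \in \interval(w)\}$ forms a root-to-descendant chain in $\tree$.

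Finally, I would combine these ingredients via a double counting argument:
\[
N_1 \;\leq\; \sum_{w \in V(\tree_v),\, \dict(w,v_\bound)=\mathsf{1}} |q(D) \cap \interval(w)|
\;=\; \sum_{t \in q(D)} |\{\, w \in V(\tree_v) : \dict(w,v_\bound)=\mathsf{1},\ t \in \interval(w)\,\}|.
\]
The inner set is contained in a chain in $\tree$, hence has size at most the depth of $\tree$, which by Lemma on the delay-balanced tree is $O(\log|D|)$. This gives $N_1 \leq |q(D)| \cdot O(\log|D|) = \tilde{O}(|q(D)|)$, and plugging back yields $|V(\tree_v)| = \tilde{O}(|q(D)|)$ (absorbing the $+1$ additive term, which handles the degenerate case $|q(D)|=0$ where $\tree_v$ consists only of the root).

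The plan has no real technical obstacle; the only step that requires care is verifying the laminar property of the intervals, which is immediate from how children inherit their intervals via the split point $\beta(w)$. Everything else is a clean pigeonhole/chain argument using the already established $O(\log|D|)$ bound on the depth of $\tree$.
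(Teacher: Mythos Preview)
Your proposal is correct and follows essentially the same approach as the paper: both arguments use that every $\mathsf{1}$-labeled node has a nonempty output in its interval together with the $O(\log|D|)$ depth of $\tree$ to bound the tree size by $\tilde{O}(|q(D)|)$. The only cosmetic difference is that the paper isolates the set $F$ of deepest $\mathsf{1}$-nodes (whose intervals are pairwise disjoint, giving $|F|\le|q(D)|$) and then multiplies by depth, whereas you double-count pairs $(w,t)$ via the laminar structure; the underlying idea is identical.
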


\begin{proof}
	Let $F$ be the set of internal nodes of $\tree_v$, such that there is no child with entry $\mathsf{1}$. The key observation is that $|q(D)| \geq |F|$, since the intervals of
	the nodes in $F$ do not overlap, and each interval will produce at least one output tuple.
	We can easily also see that $|V(\tree_v)| \leq |F| \cdot \log |D|$. Hence, $|V(\tree_v)| = {O}(|q(D)| \cdot \log |D|)$.
\end{proof}

\begin{proposition}
Algorithm~\ref{algo:answer} enumerates $q(D)$ in lexicographic order
in $T_A = \tilde{O}(|q(D)| + \tau \cdot |q(D)|^{1/\slack})$ time.
\end{proposition}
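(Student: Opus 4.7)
The plan is to split the total work performed by Algorithm~\ref{algo:answer} into (a) constant-time bookkeeping at each visited node $w$ whose dictionary entry is in $\{\mathsf{0},\mathsf{1}\}$, and (b) the subinstance evaluation performed at each non-heavy leaf $w$ of $\tree_v$, which by Lemma~\ref{lem:internal:constant} costs $O(\tau_{\ell(w)})$ where $\ell(w)$ is the level of $w$. Part (a) totals $O(|V(\tree_v)|)$ and, by Lemma~\ref{lem:nodes:tree}, equals $\tilde{O}(|q(D)|)$, accounting for the first summand of the claimed bound.

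The main work is bounding $\sum_{w} O(\tau_{\ell(w)})$, where the sum ranges over non-heavy leaves of $\tree_v$. Letting $M_\ell$ denote the number of such leaves at level $\ell$, I plan to leverage two simultaneous upper bounds on $M_\ell$: the binary-tree depth bound $M_\ell \leq 2^\ell$ (since $\tree_v$ is a rooted binary tree), and the aggregate bound $\sum_\ell M_\ell \leq |V(\tree_v)| = \tilde{O}(|q(D)|)$, again from Lemma~\ref{lem:nodes:tree}. Substituting $\tau_\ell = \tau/2^{\ell(1-1/\slack)}$, I would split the sum at level $\ell^{\star} := \lceil \log_2 |q(D)| \rceil$.

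For $\ell \leq \ell^{\star}$, apply $M_\ell \leq 2^\ell$, which gives $M_\ell \tau_\ell \leq \tau \cdot 2^{\ell/\slack}$; this is a geometric progression in $\ell$ whose sum is $O(\tau \cdot 2^{\ell^{\star}/\slack}) = O(\tau |q(D)|^{1/\slack})$. For $\ell > \ell^{\star}$, since $\tau_\ell$ is decreasing in $\ell$, factor out $\tau_{\ell^{\star}+1} \leq \tau/|q(D)|^{1-1/\slack}$ and apply $\sum_{\ell > \ell^{\star}} M_\ell \leq \tilde{O}(|q(D)|)$, yielding $\tilde{O}(\tau \cdot |q(D)| \cdot |q(D)|^{-(1-1/\slack)}) = \tilde{O}(\tau |q(D)|^{1/\slack})$. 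Combining parts (a) and (b) produces the stated $T_A = \tilde{O}(|q(D)| + \tau \cdot |q(D)|^{1/\slack})$.

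The key technical point is the simultaneous use of the depth bound $M_\ell \leq 2^\ell$ and the aggregate bound $\sum_\ell M_\ell \leq \tilde{O}(|q(D)|)$, balanced at the right level $\ell^{\star}$; without this balance, the naive delay-times-output estimate only yields the weaker $\tilde{O}(\tau|q(D)|)$. The degenerate case $\slack = 1$ causes the geometric series to collapse, but in that case $|q(D)|^{1/\slack} = |q(D)|$ and the target bound coincides with the delay-based estimate, so no extra argument is needed.
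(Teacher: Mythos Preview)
Your proof is correct but follows a genuinely different route from the paper. The paper bounds $\sum_{w\in V}\tau_{\ell_w}$ in one stroke: it writes $\tau_{\ell_w}=\tau\cdot 1^{1/\slack}\cdot (2^{-\ell_w})^{1-1/\slack}$, applies H\"older's inequality with exponents $(\slack,\frac{\slack}{\slack-1})$ to obtain $\tau\,|V|^{1/\slack}\bigl(\sum_{w\in V}2^{-\ell_w}\bigr)^{1-1/\slack}$, and then invokes Kraft's inequality (the $\bot$-nodes are leaves of the binary tree $\tree_v$, so $\sum_{w}2^{-\ell_w}\le 1$) to collapse the second factor to~$1$. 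Your argument replaces this analytic step with a combinatorial threshold split at $\ell^\star=\lceil\log_2|q(D)|\rceil$, using the per-level bound $M_\ell\le 2^\ell$ below the threshold and the aggregate bound $\sum_\ell M_\ell=\tilde O(|q(D)|)$ above it. Both approaches exploit the same two pieces of structure---the binary shape of $\tree_v$ and Lemma~\ref{lem:nodes:tree}---but package them differently: the paper's H\"older/Kraft route is shorter and yields the bound without a case split, while your approach is more elementary and makes the balancing point explicit. One small caveat: in your geometric series for $\ell\le\ell^\star$ the hidden constant is $1/(2^{1/\slack}-1)$, which depends on $\slack$; since $\slack$ is a data-independent constant this is harmless in data complexity, but it is worth noting.
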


\begin{proof}
We first bound the time needed to visit the nodes $w$ in $\tree_v$ with entry $\neq \bot$.
Since every such node requires constant time to visit, and by Lemma~\ref{lem:nodes:tree} the total number of nodes in tree is $\tilde{O}(|q(D)|)$, we need $\tilde{O}(|q(D)|)$ time. 
Second, we bound the time
to visit the nodes with entry $= \bot$. Let $V$ be the set of such nodes.
Every node in $V$ is a leaf in $\tree_v$. For a node $w$, let $\ell_w$ be its level.
The answer time can be bounded by:
\begin{align*}
\sum_{w \in V} \tau_{\ell_w} & = \sum_{w \in V} \tau \cdot 2^{-\ell_w(1-1/\slack)} \\ 
& = \tau \cdot \sum_{w \in V}  1^{1/\slack} (2^{-\ell_w})^{1-1/\slack} \\
& \leq \tau |V|^{1/\slack} \left( \sum_{w \in V} 2^{-\ell_w} \right)^{1-1/\slack} \\
& \leq \tilde{O}(\tau \cdot |q(D)|^{1/\slack})
\end{align*} 
The first inequality is an application of H{\"o}lders inequality.
The second inequality is an application of Kraft's inequality~\cite{Kraft},
which states that for a binary tree we have 
$\sum_{w \text{ leaf}} 2^{-depth(w)} \leq 1$.
\end{proof}

	\section{Query Decompositions}
\label{sec:fractionalwidth}

In this section, we prove Theorem~\ref{thm:main2}.
Consider an adorned view  over a natural join query (with hypergraph
$\mathcal{H}$), with bound variables $\nodes_\bound$.
Fix a $\nodes_\bound$-connex tree decomposition $(\mathcal{T},A)$. 
Observe that, since the bags in $A$ do not play any role in the width or height, we
can assume w.l.o.g. that $A$ consists of a single bag $t_\bound$.
We consider as the running example in this section the one in Figure~\ref{fig:cfhw}.

\subsection{Constant Delay Enumeration}

As a warm-up, we first show how to construct the data structure for Proposition~\ref{prop:cd},
when our goal is to achieve constant delay for the enumeration of 
every access query.

In the full enumeration case, 
where $\nodes_\bound = \emptyset$, we can take any optimal tree decomposition
with fractional hypertree width $\textsf{fhw}(\mathcal{H})$ (so a $\nodes_\bound$-connex decomposition), 
materialize the tuples in each bag by running the query restricted on the vertices of the bag,
and finally apply a sequence of semi-joins in a bottom-up order  to 
remove any tuples from the bags that do not participate in the final result. Additionally, for each node
$t \in V(\htree)$, we construct a hash index over the materialized result with key 
$\nodes_\bound^t = \bag_t \cap \textsf{anc}(t)$, and output values for the variables in
$\nodes_\free^t = \bag_t \setminus \nodes_\bound^t$.
For example, the node with bag $\{v_2,v_3\}$ constructs an index with key $v_2$
that returns all the matching values of $v_3$.
%
Given such indexes, we can perform a full
enumeration in constant delay starting from the root (which will be the empty bag), and visiting the nodes of 
the tree $\htree$ in a pre-order fashion by following the indexes at each bag. This construction uses the same idea as
$d$-representations~\cite{DBLP:journals/tods/OlteanuZ15}, and requires space
$O(|D|^{\textsf{fhw}(\mathcal{H})})$.

When $\nodes_\bound \neq \emptyset$, the standard tree decomposition may not be useful 
to achieve constant delay enumeration. For instance, for the example hypergraph of Figure~\ref{fig:cfhw}, 
if the adorned view has $\nodes_\bound = \{{v_1, v_5, v_6 }\}$, then the pre-order traversal of the
decomposition on the left will fail to achieve constant delay. 
However, we can use a $\nodes_\bound$-connex decomposition to successfully answer any access request
(\eg, the right decomposition in Figure~\ref{fig:cfhw}). We first materialize all the bags, except for
the bag of $t_\bound$. Then, we run a sequence of semi-joins in a bottom-up manner, where we stop
right before the node $t_\bound$ (since it it not materialized).
For each node $t \in V(\htree) \setminus \{t_\bound\}$, we construct a hash index over the materialized result 
with key $\nodes_\bound^t$. Finally, for the root node $t_\bound$, we construct a hash index that tests
membership for every hyperedge of $\mathcal{H}$ that is contained in $\nodes_\bound$. For the example in Figure~\ref{fig:cfhw},
we construct one such index for the hyperedge $\{v_5,v_6\}$.

Given a valuation $v_\bound$ over $\nodes_\bound$, we answer the access request as follows. 
We start by checking in constant time whether $v_\bound(v_5, v_6)$ is in the input. 
Then, we use the hash index of the node $\{v_2,v_4,v_1,v_5\}$ to find the matching values for $v_2, v_4$
(since $v_1, v_5$ are bound by $v_\bound$), and subsequently use the hash index of $\{v_3,v_2,v_4\}$ to
find the matching values for $v_3$; similarly, we also traverse the right subtree starting of the root node to
find the matching values of $v_7$. We keep doing this traversal until all tuples are enumerated.
We describe next how this algorithm generalizes for any adorned view and beyond constant delay.

%
%

\subsection{Beyond Constant Delay}


We will now sketch the construction and query answering for the general case. 
The detailed construction and algorithms are presented in~\autoref{sec:construction:decomposition},~\ref{sec:answering:decomposition}.
Along with the tree decomposition, let us fix a delay assignment $\delay$.

\smallskip
\introparagraph{Construction Sketch} 
The first step is to apply for each node $t$ in $\htree$ (except $t_\bound$) the construction of the data structure from Theorem~\ref{thm:main}, with the following parameters: 
$(i)$ hypergraph $\mathcal{H'} = (\nodes', \edges')$ where $\nodes' = \bag_t$ and $\edges' = \edges_{\bag_t}$, 
$(ii)$ bound variables $\nodes'_\bound = \nodes_\bound^t$,
$(iii)$ $\tau = |D|^{\delta(t)}$, and $(iv)$ $\bu$ the fractional edge cover that minimizes $\rho_t^+$.
For the root node $t_\bound$, we simply construct a hash index that tests
membership for every hyperedge of $\mathcal{H}$ that is contained in $\nodes_\bound$.
This construction uses for each bag space $\tilde{O}(|D| + |D|^{\rho^+_t - \delta(t)\cdot \alpha(\nodes_\free^t)})$, which means that the compressed representation has size $\tilde{O}(|D| + |D|^{f})$. 

The second step is to set run a sequence of semi-joins in bottom-up fashion. However, since the bags are not fully materialized anymore, this operation is not straightforward. Instead, we set any entry of the dictionary $\dict_t(w, v_\bound)$ of the data structure at node $t$ that is $\mathsf{1}$ to $\mathsf{0}$ if no valuation in
the interval $\interval(w)$ joins with its child bag. This step is necessary to guarantee that if we visit an interval in the delay-balanced tree with entry $\mathsf{1}$, we are certain to produce an output for the full query (and not only the particular bag). To perform this check, we do not actually materialize the bag of the child, 
but we simply use its dictionary (hence costing an extra factor of $\max_t \delta(t)$ during preprocessing time). 

\smallskip
\introparagraph{Query Answering} 
To answer an access query with valuation $v_\bound$,
we start from the root node $t_b$ of the decomposition and check using the indexes of the node
whether $v_\bound$ belongs to all relations $R_F$ such that  $F \subseteq \nodes_\bound$. Then, we invoke Algorithm~\ref{algo:answer} on the leftmost child $t_0$ of $t_b$, which outputs a new valuation
in time at most $\tilde{O}(|D|^{\delay(t_0)})$, or returns nothing.
As soon as we obtain a new output, we 
recursively proceed to the next bag in pre-order traversal of $\htree$, and find valuations for the (still free)
variables in the bag. If there are no such valuations returned by Algorithm~\ref{algo:answer} for the node
under consideration, this means that the last valuation outputted by the parent node does not lead to any output. 
In this case, we resume the enumeration for the parent node. 
Finally, when Algorithm~\ref{algo:answer} finishes the enumeration procedure, then we resume
the enumeration for the pre-order predecessor of the current node (and not the parent).
Intuitively, we go the predecessor to fix our next valuation, in order to enumerate the cartesian product of all free variables in the subtree rooted at the least common ancestor of current node and predecessor node.

The delay guarantee of $\tilde{O}(|D|^{h})$ comes from the fact that, at every node $t$ in the tree,
we will output in time $\tilde{O}(|D|^{\delay(t)})$ at most $\tilde{O}(|D|^{\delay(t)})$ valuations, one
of which will produce a final output tuple. Moreover, when a node has multiple children, then for a fixed
valuation of the node, the traversal of each child is independent of the other children: if
one subtree produces no result, then we can safely exit all subtrees and continue the enumeration of
the node.
The full details and analysis of the algorithm are in~\autoref{sec:answering:decomposition}.

%

	\section{The Complexity of Minimizing Delay}
\label{sec:application}

In this section, we study the computational complexity of choosing the optimal parameters
for Theorem~\ref{thm:main} and Theorem~\ref{thm:main2}. We identify two objectives
that guide the parameter choice: $(i)$ given a space constraint, minimize the delay, and
$(ii)$ given a delay constraint, minimize the necessary space.

We start with the following computational task, which we call \textsc{MinDelayCover}.
We are given as input a full adorned view $Q^\eta$ over a CQ, the sizes $|R_F|$ of each relation $F$, and 
a positive integer $\Sigma$ as a space constraint. 
The size of $Q^\eta$, denoted $\vert Q^\eta \vert$, is defined as the length of $Q^\eta$ when viewed as a word over alphabet that consists of variable set $\nodes$, $\domain$ and atoms in the body of the query. 
The goal is to output a fractional edge cover $\bu$ that minimizes 
the delay in Theorem~\ref{thm:main}, subject to the space constraint $S \leq \Sigma$.


We observe that we can express \textsc{MinDelayCover} as a linear fractional program with a bounded
and non-empty feasible region. Such a program can always be transformed to an equivalent linear
program~\cite{charnes1962programming}, which means that the problem can be solved in polynomial time.  
\begin{proposition} \label{prop:delay}
	\textsc{MinDelayCover} can be solved in polynomial time in the size of the adorned view, the relation
	sizes, and the space constraint.
\end{proposition}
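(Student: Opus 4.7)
The plan is to reduce \textsc{MinDelayCover} to solving two linear programs. Let $s_F = \log|R_F|$ and $\Sigma' = \Sigma - c\,|D|$, with $c$ absorbing the hidden constant from the $\tilde{O}$ in Theorem~\ref{thm:main}; if $\Sigma' \leq 0$ the instance is infeasible. Taking logarithms, the space bound becomes the inequality $\sum_{F} u_F\,s_F - \alpha(\nodes_\free)\cdot \log\tau \leq \log\Sigma'$. Since delay $\tilde{O}(\tau)$ is monotone in $\tau$, the optimal $\tau$ for any feasible cover $\bu$ satisfies $\log\tau = \max\{0,\,(\sum_F u_F s_F - \log\Sigma')/\alpha(\nodes_\free)\}$. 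So I would first test, by a standard LP over fractional edge covers, whether some cover satisfies $\sum_F u_F s_F \leq \log\Sigma'$; if so, output it (giving $\tau = 1$ and hence delay $\tilde{O}(1)$).

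Otherwise the optimum has $\tau > 1$ and the numerator $\sum_F u_F s_F - \log\Sigma'$ is strictly positive on the entire feasible region, so the problem reduces to minimizing the ratio $(\sum_F u_F s_F - \log\Sigma')/\alpha(\nodes_\free)$ over fractional edge covers of $\nodes$. I would linearize the inner minimum in $\alpha(\nodes_\free)$ by introducing $\alpha$ as an explicit decision variable subject to $\alpha \leq \sum_{F: x \in F} u_F$ for every $x \in \nodes_\free$ and $\alpha \geq 1$; because the numerator is positive, any optimum achieves $\alpha = \alpha(\nodes_\free)$. The resulting linear-fractional program has a non-empty bounded polyhedral feasible region (any integral edge cover of $\nodes$ is feasible) with denominator bounded below by $1$, so by the Charnes--Cooper substitution $t = 1/\alpha$, $v_F = u_F\,t$ it is equivalent to the linear program
\begin{equation*}
\min\; \textstyle\sum_F v_F s_F - t\log\Sigma' \;\text{ s.t. }\; v_F \geq 0,\; \textstyle\sum_{F: x \in F} v_F \geq t \;(x \in \nodes),\; \textstyle\sum_{F: x \in F} v_F \geq 1 \;(x \in \nodes_\free),\; 0 < t \leq 1.
\end{equation*}

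This LP has $|\edges|+1$ variables and $O(|\nodes|)$ constraints, with coefficients of bit-length polynomial in $|Q^\eta|$, $\max_F \log|R_F|$, and $\log\Sigma$, so it is solvable in polynomial time by interior-point or ellipsoid methods; the desired cover is then recovered as $u_F = v_F^\ast / t^\ast$. The main obstacle is the bilinearity $\alpha \cdot \log\tau$ in the log-space constraint, which at first glance appears to make the problem non-linear; the key observation is that for any fixed $(\bu,\alpha)$ the optimal $\tau$ is fully determined, so the problem projects down onto $(\bu,\alpha)$-space as a linear-fractional objective, which is precisely what the Charnes--Cooper transformation handles. Verifying that the two linearizations (of the minimum and of the ratio) together preserve the optimum is the only real step; once that is done, polynomial solvability follows from classical LP theory.
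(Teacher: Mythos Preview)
Your approach is essentially the same as the paper's: both identify that, after eliminating $\tau$ via the tight space constraint, the problem becomes a linear-fractional program over fractional edge covers with denominator $\alpha$, and both invoke the Charnes--Cooper transformation to reduce to an ordinary LP. The paper writes the bilinear program directly (assuming all relations have equal size for simplicity), substitutes $\hat\tau=\alpha\log\tau$ to obtain the objective $\hat\tau/\alpha$, and then appeals to Charnes--Cooper; you instead handle general $s_F=\log|R_F|$ and add an explicit preliminary LP to detect the $\tau=1$ case, but the substance is identical.
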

\begin{proof}
	Consider the bilinear program in Figure~\ref{bilinear}. Without loss of
	generality, assume that all relations are of the same size. The first
	constraint ensures that $\vert D \vert ^{\sum u_F}/\tau^\alpha \leq \Sigma$,
	while the fourth constraint encodes the fractional edge covers. However, the
	program is not an LP as $\alpha \log \tau$ is a bilinear constraint. We can
	easily transform it into a linear fractional program as shown in
	Figure~\ref{lfp} where $\hat{\tau} = \alpha \log \tau$. Notice that we can
	replace the objective in program~\ref{bilinear} from $\tau$ to $\log \tau$
	without changing the optimal solution. The key idea is that we can convert the
	linear fractional program to a linear program using the Charnes-Cooper
	transformation~\cite{charnes1962programming} provided that the feasible region
	is bounded and non-empty. Our claim follows from the observation that the
	region is indeed bounded since $u_F \leq 1, \alpha \leq \vert Q \vert,
	\hat{\tau} \leq \vert Q \vert^{2} \log \vert D \vert$ and non-empty as $u_F =
	1, \alpha = 1, \tau = \vert D \vert^{\vert Q \vert}$ is a valid solution.    
\end{proof}

\begin{figure*}[!htp]
\begin{subfigure}{0.4\textwidth} 
	\begin{equation*}
	\begin{array}{ll@{}ll}
	\textbf{minimize}  &  \tau&\\
	\textbf{subject to}& \rho \log \vert D \vert \leq \log
	\vert \Sigma \vert + \alpha \log \tau \label{cons:1} \\
	& \rho = \sum_{F \in \edges} u_F \\
	& \forall x \in \nodes_{\free} :  \sum_{F:x \in F} u_F \geq \slack \\
	& \forall x \in \nodes : \sum_{F:x \in F} u_F \geq 1 \\
	& \forall F \in \edges: 0 \leq u_F \leq 1 \\
	& \alpha \geq 1 
	\end{array}
	\end{equation*}
	\caption{Linear program with bilinear constraint}
	\label{bilinear}
\end{subfigure}
\begin{subfigure}{0.1\textwidth}
	\hfill
\end{subfigure}
\begin{subfigure}{0.4\textwidth} 
	\begin{equation*}
	\begin{array}{ll@{}ll}
	\textbf{minimize}  &  \hat{\tau} / \alpha&\\
	\textbf{subject to}& \rho \log \vert D \vert \leq \log
	\vert \Sigma \vert + \hat{\tau} \\
	& \rho = \sum_{F \in \edges} u_F \\
	& \forall x \in \nodes_{\free} :  \sum_{F:x \in F} u_F \geq \slack \\
	& \forall x \in \nodes : \sum_{F:x \in F} u_F \geq 1 \\
	& \forall F \in \edges: 0 \leq u_F \leq 1 \\
	& \alpha, \hat{\tau} \geq 1 
	\end{array}
	\end{equation*}
	\caption{Transformed linear fractional program}
	\label{lfp}
\end{subfigure}
\caption{Left to right: Bilinear program to minimize delay;
	Equivalent linear fractional program}
\end{figure*}

We also consider the inverse task, called \textsc{MinSpaceCover}: 
given as input a full adorned view $Q^\eta$ over a CQ, 
the sizes $|R_F|$ of each relation $F$, and a positive integer $\Delta$ as a delay constraint, we
want to output a fractional edge cover $\bu$ that minimizes 
the space $S$ in Theorem~\ref{thm:main}, subject to the delay constraint $\tau \leq \Delta$.

To solve \textsc{MinSpaceCover}, observe that we can simply perform a binary search over
the space parameter $S$, from $|D|$ to $|D|^{k}$, where $k$ is the number of atoms in $Q$.
For each space, we then run \textsc{MinDelayCover} and check whether the minimum delay returned
satisfies the delay constraint. 
\begin{proposition}
	\textsc{MinSpaceCover} can be solved in polynomial time in the size of the adorned view, the relation
	sizes, and the delay constraint.
\end{proposition}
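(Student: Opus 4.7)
The plan is to reduce \textsc{MinSpaceCover} to polynomially many invocations of \textsc{MinDelayCover} via binary search, exploiting a natural monotonicity. First I would observe that if $\tau^\star(\Sigma)$ denotes the optimal delay returned by \textsc{MinDelayCover} on space budget $\Sigma$, then $\tau^\star$ is non-increasing in $\Sigma$: any fractional edge cover $\bu$ with threshold $\tau$ that is feasible for budget $\Sigma$ remains feasible for any larger budget, since the only constraint in the linear fractional program of Figure~\ref{lfp} involving $\Sigma$ is $\rho \log |D| \leq \log \Sigma + \hat{\tau}$, which only loosens as $\Sigma$ grows. Hence the least $\Sigma$ satisfying $\tau^\star(\Sigma) \leq \Delta$ is well-defined and can be located by searching over $\Sigma$ in a monotone fashion.

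Next I would discretize the search range. By Theorem~\ref{thm:main}, meaningful values of $S$ lie in $[|D|, |D|^{k}]$, where $k$ is the number of atoms in $Q$, so after taking $\log_{|D|}$ the search domain is the bounded interval $[1,k]$. Working in log scale, I would binary-search this interval: each iteration picks a candidate value $\sigma$, sets $\Sigma = |D|^\sigma$, invokes \textsc{MinDelayCover} with this space constraint in polynomial time by Proposition~\ref{prop:delay}, and uses the returned $\tau^\star$ to decide whether to move the search window left or right. After $O(\log(k / \epsilon))$ iterations, for an appropriate precision $\epsilon$, we converge on the optimal $\Sigma^\star$ and may then recover the corresponding cover $\bu^\star$ from the final call to \textsc{MinDelayCover}.

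The main obstacle is to argue that a polynomially bounded precision $\epsilon$ is sufficient to certify optimality, i.e., to distinguish values of $\sigma$ for which $\tau^\star(\Sigma) \leq \Delta$ from those for which it does not. Here I would appeal to the standard fact that the optimum of a linear program whose coefficients have bit-length $L$ is a rational number with numerator and denominator of bit-length bounded by $\mathrm{poly}(L)$; consequently two distinct LP optima of the program of Figure~\ref{lfp} cannot be arbitrarily close. Combined with the Charnes--Cooper transformation used in the proof of Proposition~\ref{prop:delay}, this yields a concrete polynomial bound on the number of binary-search iterations, so the total running time is polynomial in $|Q^\eta|$, the relation sizes, and $\Delta$, as required.
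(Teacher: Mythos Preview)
Your proposal is correct and follows essentially the same approach as the paper: binary search over the space parameter $S$ in the range $[|D|,|D|^k]$, invoking \textsc{MinDelayCover} at each candidate and using the returned optimal delay to decide which half of the interval to keep. The paper's own argument is in fact only a one-sentence sketch of exactly this strategy; your additional observations about the monotonicity of $\tau^\star(\Sigma)$ and the bit-complexity bound needed to justify a polynomial number of binary-search iterations fill in details the paper leaves implicit.
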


We next turn our attention to how to optimize for the parameters in Theorem~\ref{thm:main2}.

Suppose we are given a full adorned view $Q^\eta$ over a CQ, the database size $|D|$, and a
space constraint $\Sigma$, and we want to minimize the delay.
If we are given a fixed $\nodes_\bound$-connex tree decomposition,
then we can compute the optimal delay assignment $\delta$ and optimal fractional edge cover
for each bag as follows: we iterate over every bag in the
tree decomposition, and then solve \textsc{MinDelayCover} for each bag using the space constraint.
It is easy to see that the delay that we obtain in each bag must be the delay of an optimal 
delay assignment. For the inverse task where we are given a $\nodes_\bound$-connex tree decomposition,
a delay constraint, and our goal is to minimize the space, we can apply the same binary search 
technique as in the case of \textsc{MinSpaceCover} (observe the that $\delta$-height is also easily
computable in polynomial time).
In other words, we can compute the optimal parameters for our given objective in polynomial time,
as long as we are provided with a tree decomposition.

In the case where the tree decomposition is not given, then the problem of finding the optimal 
data structure according to Theorem~\ref{thm:main2} becomes intractable. Indeed, we have already
seen that if we want to achieve constant delay $\tau=1$, then the tree decomposition that minimizes
the space $S$ is the one that achieves the $\nodes_\bound$-connex fractional hypertree width,
$\fhw{\mathcal{H} \mid \nodes_\bound}$. Since for $\nodes_\bound = \emptyset$ we have
$\fhw{\mathcal{H} \mid \nodes_\bound} = \fhw{\mathcal{H}}$, and finding the optimal fractional
hypertree width is NP-hard~\cite{gottlob2014treewidth}, finding the optimal tree decomposition
for our setting is also NP-hard.

	\section{Related Work}
\label{sec:related}

There has been a significant amount of literature on data compression; a common application is to apply compression in column-stores~\cite{DBLP:conf/sigmod/AbadiMF06}. 
However, such compression methods typically ignore the logical structure that governs data that is a result of a relational query. The key observation is that we can take advantage of the underlying logical structure in order to design algorithms that can compress the data effectively. This idea has been explored before in the context of {\em factorized databases}~\cite{DBLP:journals/tods/OlteanuZ15}, which can be viewed as a form of logical compression. Our approach builds upon the idea of using query decompositions as a factorized representation, and we show that for certain access
patterns it is possible to go below $\vert D \vert^{\textsf{fhw}}$ space for constant delay enumeration. In addition, our results also allow trading off delay for smaller space requirements of the data structure. A long line of work has also investigated 
the application of  a broader set of queries with projections and aggregations~\cite{bakibayev2012fdb, bakibayev2013aggregation}, as well as learning linear regression models over factorized databases~\cite{schleich2016learning, olteanu2016factorized}. Closely related to our setting is the investigation of join-at-a-time query plans, where at each step, a join over one variable is computed~\cite{ciucanu2015worst}. The intermediate results of these plans are partial factorized representations that compress only a part of the query result. Thus, they can be used to tradeoff
space with delay, albeit in a non-tunable manner.

Our work is also connected to the problem of constant-delay enumeration~\cite{Segoufin13, Segoufin15,DBLP:conf/csl/BaganDG07}: in this case, we want to enumerate a query result with constant delay after a linear time preprocessing step. We can view the linear time preprocessing step as a compression algorithm, which needs space  only $O(|D|)$. It has been shown that the class of {\em connex-free acyclic conjunctive queries} can be enumerated with constant delay after a linear-time preprocessing. Hence, in the case of connex-free acyclic CQs, there exists an optimal compression/decompression algorithm.  However, many classes of widely used queries are not factorizable to linear size, and also can not be enumerated with constant-delay after linear-time preprocessing. Examples in this case are the triangle query $\Delta^{\free \free \free}(x,y,z) = R(x,y), R(y,z), R(z,x)$, or the 2-path query $P_2^{\free \free}(x,y) = R(x,y), R(y,z)$. 

Beyond CQs, related work has also focussed on evaluating signed conjunctive queries~\cite{brault2013pertinence, brault2012negative}. 
CQs that contain both positive and negative atoms allow for tractable enumeration algorithms when they are \textit{free-connex signed-acyclic}~\cite{brault2013pertinence}. Nearby problems include counting the output size $|Q(D)|$ using index structures for enumeration~\cite{durand2011polynomials, durand2015structural}, and enumerating more expressive queries over restricted class of databases~\cite{kazana2013enumeration}.

The problem of finding class of queries that can be maintained in constant time under updates and admit constant delay enumeration is also of considerable interest. Recent work~\cite{cqdelay} considered this particular problem and obtained a dichotomy for self-join free and boolean CQs.  Our work is also related to this problem in that the class of such queries have a specific structure that allow constant delay enumeration.

Query compression is also a central problem in graph analytics. Many applications involve extracting insights from relational databases using graph queries. In such situations, most systems load the relational data in-memory and expand it into a graph representation which can become very dense. Analysis of such graphs is infeasible, as the graph size blows up quickly. Recent work~\cite{graphgen2015, graphgen2017, graphgen2017adaptive} introduced the idea of controlled graph expansion by storing information about high-degree nodes and evaluating acyclic CQs over light sub-instances. 
However, this work is restricted only to binary views (\ie, graphs), and does not offer any formal guarantees on delay or answer time. It also does not allow the compressed representation to grow more than linear in the size of the input.

Finally, we also present a connection to the problem of set intersection.  Set intersection has applications in problems related to document indexing~\cite{Cohen2010, afshani2016data} and proving hardness and bounds for space/approximation tradeoff of distance oracles for graphs~\cite{patrascu2010distance, cohen2010hardness}. Previous work~\cite{Cohen2010} has looked at creating a data structure for fast set intersection reporting and the corresponding boolean version. Our main data structure is a strict generalization of the one from~\cite{Cohen2010}.

	\section{Conclusion}
\label{sec:conclusion}

In this paper we propose a novel and tunable data structure that allows us to compress the result of a conjunctive query so that
we can answer efficiently access requests over the query output.

This work initiates an exciting new direction on studying compression tradeoffs for query results, and thus there are several open problems. The main challenge is to show whether our proposed data structure achieves optimal tradeoffs between the various parameters. Recent work~\cite{afshani2016data} makes it plausible to look for lower bounds in the pointer machine model. A second open problem is to explore how our data structures can be modified to support efficient updates of the base tables. Recent results~\cite{cqdelay} indicate that efficient maintenance of CQ  results under updates is in general a hard problem. This creates a new challenge for designing data structure and algorithms that provide theoretical guarantees and work well in practice. A third challenge is to extend our algorithms to support views with projections:
projections add the additional challenge that we have to deal with duplicate tuples in the output.
Finally, an interesting question is whether it is possible to build our data structure \textit{on-the-fly} without a
preprocessing step.

	\bibliographystyle{abbrv}
	\bibliography{reference}  

	\newpage
	\appendix

\section{Dictionary Construction}
\label{sec:dictionary:construction}

In this section, we show how to efficiently construct the dictionary $\dict$ from
the input database $D$. In particular, we prove the following:

\begin{lemma}
	The dictionary $\dict$ can be constructed in time
	$\tilde{O}(\prod_{F \in \edges} |R_F|^{u_F})$, using at most $\tilde{O}( \prod_{F \in
		\edges} |R_F|^{u_F} / \tau^{\slack})$ space.
\end{lemma}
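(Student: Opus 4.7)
The plan is to build $\dict$ in three stages, using the delay-balanced tree $\tree$ and multidimensional range-count indexes on the base relations as scaffolding.

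\emph{Stage 1 (preprocessing).} I would first build $\tree$ top-down using Algorithm~\ref{algo:isplit}; by Proposition~\ref{prop:interval:split} each split costs $\tilde{O}(1)$, so this step runs in $\tilde{O}(|\tree|) = \tilde{O}(N/\tau^{\slack})$ time, where $N = \prod_{F} |R_F|^{u_F}$. In parallel, for each relation $R_F$ I would build a multidimensional range-count index (a range tree on the free columns, partitioned by the bound columns) supporting $|R_F(v_\bound, \pbox)|$ queries in $\tilde{O}(1)$, with $\tilde{O}(|D|)$ preprocessing and space. This makes $T(v_\bound, \interval(w)) = \sum_{\pbox \in \mathcal{B}(\interval(w))} \prod_{F} |R_F(v_\bound, \pbox)|^{\hat{u}_F}$ queryable in $\tilde{O}(1)$ per pair.

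\emph{Stage 2 (the $\mathsf{1}$-entries).} For each node $w$ I would run a worst-case-optimal join box-by-box to materialize $J_w = \Join_F (R_F \ltimes \interval(w))$; then I would insert each distinct $v_\bound = t|_{\nodes_\bound}$ into a per-node hash table, compute $T(v_\bound, \interval(w))$ via the index, and set $\dict(w, v_\bound) = \mathsf{1}$ for those declared heavy. Proposition~\ref{prop:interval:split} gives $\sum_{\mathrm{level}(w) = \ell} T(\interval(w)) \leq T(\interval(r)) = O(N)$, so summed across the $O(\log|D|)$ levels this phase costs $\tilde{O}(N)$.

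\emph{Stage 3 (the $\mathsf{0}$-entries).} The missing entries are heavy valuations whose restricted join is empty; they must still be stored to preserve the delay guarantee of Algorithm~\ref{algo:answer}. Every heavy $v_\bound$ at $w$ satisfies $T(v_\bound, \pbox) > \tau_\ell / (2\mu-1)$ for at least one $\pbox \in \mathcal{B}(\interval(w))$, so I would enumerate such $v_\bound$ with a WCO-style traversal of the bound projections $\pi_{\bound(F)}(R_F \ltimes \pbox)$, pruning branches whose cumulative count product drops below the threshold via the Stage~1 index. Each surviving candidate is verified in $\tilde{O}(1)$; its emptiness is decided by probing the Stage~2 hash table, and non-$\mathsf{1}$ heavy candidates are recorded as $\mathsf{0}$. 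The space bound $\tilde{O}(N/\tau^{\slack})$ then follows immediately from Lemma~\ref{lem:dictsizebound}.

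The main obstacle will be bounding Stage~3 by $\tilde{O}(N)$: the naive AGM bound on the bound-projected join at $w$ is $T(\interval(w))^{\slack}$, which even after a power-mean summation across the tree only yields $\tilde{O}(N^{\slack})$. A refined charging argument, closely mirroring the H{\"o}lder-plus-Query-Decomposition step from the proof of Proposition~\ref{lem:heavy:bound}, is required to amortize the WCO enumeration work against the $\tilde{O}(N/\tau^{\slack})$ heavy pairs actually surfaced, using the Stage~1 counts to abandon branches as soon as the product invariant shows they cannot produce any heavy valuation.
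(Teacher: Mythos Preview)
Your overall architecture is close to the paper's, but there are two genuine gaps.

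First, the Stage~2 time bound does not follow from Proposition~\ref{prop:interval:split}. The quantity $T(\interval(w))$ is defined with the \emph{scaled} exponents $\hat u_F = u_F/\slack$, which cover only $\nodes_\free$; it does not bound the cost of a WCO join over \emph{all} variables of $\bag$, which needs the full cover $\bu$. The splitting algorithm halves $T$, not $\sum_{\pbox}\prod_F |R_F\ltimes\pbox|^{u_F}$, so the per-level sum you want is not controlled by halving. The paper instead observes that the $\free$-boxes at a fixed level are pairwise disjoint and merges them coordinate-by-coordinate back into $\interval(r)$ via Lemma~\ref{lem:box:partition} (applied with exponents $u_F$), yielding $\sum_{w\in w_L}\sum_{\pbox\in\mathcal B(\interval(w))}\prod_F|R_F\ltimes\pbox|^{u_F}\le \prod_F|R_F|^{u_F}$ directly. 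You also need to stream the join output rather than ``materialize $J_w$'', or the intermediate space blows past $\tilde O(N/\tau^{\slack})$.

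Second, and more seriously, Stage~3 is where the argument actually lives, and you leave it open. The paper's solution is simpler than the charging argument you sketch: every heavy $v_\bound$ must satisfy $R_F(v_\bound,\pbox)\neq\emptyset$ for all $F\in\edges_{\nodes_\bound}$ and some $\pbox\in\mathcal B(\interval(w))$, hence $v_\bound$ appears in $\pi_{\nodes_\bound}\bigl(\Join_{F\in\edges_{\nodes_\bound}}(R_F\ltimes\pbox)\bigr)$. Running a WCO algorithm on this projected join costs $\sum_{\pbox}\prod_{F\in\edges_{\nodes_\bound}}|R_F\ltimes\pbox|^{u_F}$ per node, and the \emph{same} box-merging argument bounds the per-level total by $\prod_{F\in\edges_{\nodes_\bound}}|R_F|^{u_F}\le N$. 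Each candidate is then tested for heaviness in $\tilde O(1)$ and only heavy ones are retained. With the heavy list in hand, the paper initializes every entry to $\mathsf 0$ and lets the Stage~2 full-join pass flip entries to $\mathsf 1$; this ordering removes the need for your Stage~2 hash-table probe and avoids the ``refined charging'' altogether.
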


We construct the dictionary $\dict$ as follows:

\smallskip
\introparagraph{a) Find Heavy Valuations} The first step of the algorithm is to compute the list of
heavy valuations $v_{\bound}$ for any interval $\interval$. 
\begin{proposition} \label{obv:heavy:time}
	Let $\mathcal{L}_{\interval}$ denote the sorted list of all valuations
	of $\nodes_{\bound}$ such that $(v_{\bound}, \interval)$ is
	$\tau$-heavy. Then, $\mathcal{L}_{\interval}$ can be constructed in time
	$\tilde{O}(\sum_{\pbox \in \mathcal{B}(\interval)} \prod_{F \in \mathcal{E}_{\nodes_{\bound}}} |R_F \ltimes \pbox|^{u_F})$ and
	using space at most $O( (T(\interval) / \tau)^{\slack})$.
\end{proposition}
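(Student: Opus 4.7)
The plan is to enumerate the $\tau$-heavy valuations one box of $\mathcal{B}(\interval)$ at a time using a worst-case optimal join (WCOJ) algorithm, and then filter globally. For each $\pbox \in \mathcal{B}(\interval)$, I would first project each $R_F \ltimes \pbox$ with $F \in \edges_{\nodes_\bound}$ onto its bound-variable subset $F \cap \nodes_\bound$, and then run a WCOJ algorithm on the projected relations to enumerate the candidate set $\pi_{\nodes_\bound}(\Join_{F \in \edges_{\nodes_\bound}} R_F \ltimes \pbox)$. The restricted weights $(u_F)_{F \in \edges_{\nodes_\bound}}$ form a valid fractional edge cover of $\nodes_\bound$: for every $x \in \nodes_\bound$, every hyperedge containing $x$ lies in $\edges_{\nodes_\bound}$ by definition, so $\sum_{F \in \edges_{\nodes_\bound},\, x \in F} u_F = \sum_{F : x \in F} u_F \geq 1$. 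Hence the AGM inequality bounds the per-box enumeration by $\tilde{O}(\prod_{F \in \edges_{\nodes_\bound}} |R_F \ltimes \pbox|^{u_F})$, and summing over $\pbox$ yields the target running time.

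For each candidate $v_\bound$ produced I would compute $T(v_\bound, \interval) = \sum_{\pbox' \in \mathcal{B}(\interval)} \prod_F |R_F(v_\bound, \pbox')|^{\hat u_F}$ in $\tilde{O}(1)$ time, using precomputed hash indexes that return the count $|R_F(v_\bound, \pbox')|$ for any $v_\bound$ and any $\pbox'$. Since $|\mathcal{B}(\interval)| \leq 2\mu - 1$ is constant in data complexity (Lemma~\ref{lem:box:decomposition}), the $\tilde{O}$ absorbs the sum over $\pbox'$. If $T(v_\bound, \interval) > \tau$, I would insert $v_\bound$ into a global hash set $H$ (which deduplicates across boxes); otherwise discard. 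By Proposition~\ref{lem:heavy:bound}, $|H| \leq (T(\interval)/\tau)^{\slack}$ throughout, matching the stated space bound. A final sort of $H$ in $\tilde{O}(|H|)$ time produces the sorted list $\mathcal{L}_{\interval}$.

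The main obstacle is the subtle interplay between boxes: the same valuation may appear as a candidate from several $\pbox$, and naively one might worry this blows up the runtime. However, the target running time itself is a sum over boxes, each per-box enumeration cost is individually bounded by AGM, and the hash set $H$ prevents duplicate entries from inflating the space bound. A minor technical point is that WCOJ algorithms typically require trie-style indexes on the input relations; these can be precomputed once on each base $R_F$ in $\tilde{O}(|D|)$ time, and both the $\pbox$-restriction and the projection onto $F \cap \nodes_\bound$ can be carried out inline during trie traversal without affecting the stated time or space bounds for producing $\mathcal{L}_{\interval}$.
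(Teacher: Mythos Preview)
Your proposal is correct and follows essentially the same approach as the paper: enumerate candidate bound valuations box-by-box via a worst-case optimal join over the relations in $\edges_{\nodes_\bound}$ restricted to each $\pbox$, test heaviness of each candidate in $\tilde{O}(1)$ using count indexes, retain only the heavy ones (whose number is bounded by Proposition~\ref{lem:heavy:bound}), and sort. The one point you leave implicit but the paper spells out is completeness: if $(v_\bound,\interval)$ is $\tau$-heavy then $T(v_\bound,\pbox)>0$ for some $\pbox\in\mathcal{B}(\interval)$, hence $R_F(v_\bound,\pbox)\neq\emptyset$ for every $F\in\edges_{\nodes_\bound}$, so $v_\bound$ is indeed among the candidates produced for that box.
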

\begin{proof}
	The first observation is that for all heavy $(v_{\bound}, \interval)$ valuations, since $T(v_\bound, \interval) > \tau$, there exists a $\pbox \in \mathcal{B}(\interval)$ such that $R_F(v_{\bound}) \ltimes \pbox$ is non-empty for each $F \in \mathcal{E}_{\nodes_{\bound}}$. This implies that $\pi_{F \cap \nodes_{\bound}} (v_\bound) \in \pi_{F \cap \nodes_{\bound}} (R_{F} \ltimes \pbox)$ (otherwise the relation will be empty and $T(v_\bound, \interval) = 0$). Thus, it is sufficient to compute $\pi_{ \nodes_{\bound}}   ((\Join_{F \in \mathcal{E}_{\nodes_{\bound}}} R_F) \ltimes \interval)$ to find all heavy valuations.

	
	We can construct the list $\mathcal{L}_{\interval}$ by running a worst case join algorithm in time $O(\sum_{\pbox \in \mathcal{B}(\interval)} \prod_{F \in \mathcal{E}_{\nodes_{\bound}}} |R_F \ltimes \pbox|^{u_F})$.  Additionally, as soon as the worst case join algorithm generates an output $v_\bound$, we check if $v_{\bound}$ is $\tau$-heavy in $\tilde{O}(1)$ time. This can be done by using linear sized indexes on base relations to count the number of tuples in each relation $R_{F \in \edges}(v_{\bound}, \interval)$ and using $\bu$ as cover to check whether the execution time is greater than the threshold $\tau$. Since we need only
	heavy valuations, we only retain those in memory.
	Proposition~\ref{lem:heavy:bound} bounds the space requirement of
	$\mathcal{L}_{\interval}$ to at most $O((T(\interval)/ \tau)^{\alpha})$. Sorting $\mathcal{L}_{\interval}$ introduces at most an additional $O(\log |D|)$ factor.
\end{proof}

\introparagraph{b) Using join output to create $\dict$} Consider the delay
balanced tree $\tree$ as constructed in the first step. Without loss of
generality, assume that the tree is full. We bound the time taken to create
$\dict(w, v_{\bound})$ for all nodes $w_{L}$ at some level $L$
(applying the same steps to other levels introduces at most $\log |D|$ factor).
Detailed algorithm is presented below.

\begin{algorithm}[htp]
	\SetCommentSty{textsf}
	\DontPrintSemicolon 
	\SetKwFunction{proc}{\textsf{eval}}
	\SetKwInOut{Input}{\textsc{input}}\SetKwInOut{Output}{\textsc{output}}
	\Input{tree $\tree$}
	\Output{$\dict(w, v_\bound)$ for all leaf nodes}
	\BlankLine
	\ForAll{$w$ in $w_L$}{
		\tcc{Run \textsf{NPRR} on $ (\Join_{F \in \edges_{\nodes_\bound}} R_F) \ltimes \interval(w)$ to compute $\mathcal{L}_{\interval(w)}$}
		\ForAll{$v_\bound \in \mathcal{L}_{\interval(w)}$
			\label{loop:start}}{
			$\dict(w, v_\bound) = 0$ \tcc*[r]{initializing
				$\dict$ with all heavy pairs}
			\label{loop:end}
		}
		\tcc{Run \textsf{NPRR} on $ (\Join_{F \in \edges} R_F) \ltimes \interval(w)$}
		\ForAll{\label{join:nprr} $j \leftarrow$ output tuple from \textsf{NPRR} 
			\tcc*[r]{requires $\log |D|$ main memory }  }{
			{$v_\bound \leftarrow 
				\Pi_{\nodes_{\bound}}(j)$} \label{do:start}\\
			\If(\tcc*[f]{binary search over
				$\mathcal{L}_{\interval(w)}$}) {$v_\bound \in \mathcal{L}_{\interval(w)}$
				\label{val:present} }{
				$\dict(w, v_\bound) = 1$
			}
			\label{do:end}
		}
	}
	\BlankLine

	\caption{Create Dictionary $\dict(w, v_\bound)$ for level $L$ nodes in $\tree$}
	\label{algo:lastlevel}
\end{algorithm}

\smallskip
\introparagraph{Algorithm Analysis} 
We first bound the running time of the algorithm.

\begin{proposition} \label{prop:boxmerging}
	Algorithm~\ref{algo:lastlevel} runs in time $\tilde{O}(\prod_{F \in \edges} |R_F|^{u_F})$.
\end{proposition}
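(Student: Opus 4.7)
The plan is to bound the per-node cost inside the outer loop of Algorithm~\ref{algo:lastlevel} and then telescope these costs along the delay-balanced tree $\tree$ from the root down to level $L$. To that end, let $\tilde{T}(\interval(w)) := \sum_{\pbox \in \mathcal{B}(\interval(w))} \prod_{F \in \edges} |R_F \ltimes \pbox|^{u_F}$, which is the quantity that will serve as the accounting token.

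First, I would argue that the body of the outer loop on a single node $w$ runs in time $\tilde{O}(\tilde{T}(\interval(w)))$. Constructing the sorted list $\mathcal{L}_{\interval(w)}$ on line~\ref{loop:start} is charged directly via Proposition~\ref{obv:heavy:time}, since $\edges_{\nodes_\bound} \subseteq \edges$ and dropping variables can only shrink the AGM bound. For the full join on line~\ref{join:nprr}, I would use Corollary~\ref{prop:union:box} to replace $(\Join_F R_F)\ltimes \interval(w)$ with a box-by-box enumeration: for each $\pbox \in \mathcal{B}(\interval(w))$, running \textsf{NPRR} with cover $\bu$ on the $\nodes$-join takes time $O(\prod_F |R_F \ltimes \pbox|^{u_F})$, and summing over the at most $2\mu-1$ boxes (Lemma~\ref{lem:box:decomposition}) yields $O(\tilde{T}(\interval(w)))$. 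The per-tuple work on lines~\ref{do:start}--\ref{do:end} (projecting to $\nodes_\bound$ and a binary search in $\mathcal{L}_{\interval(w)}$) adds only a polylogarithmic overhead per output tuple, since $|\mathcal{L}_{\interval(w)}| \le |D|^{O(1)}$.

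Next, I would show $\sum_{w \text{ at level }L} \tilde{T}(\interval(w)) \leq \tilde{T}(\interval(r))$. The only technical point here is that Lemma~\ref{lem:box:partition}, and therefore Proposition~\ref{prop:interval:split}, should be applied with $\tilde{T}$ (exponents $u_F$) rather than $T$ (exponents $\hat{u}_F$). This substitution is harmless: the Friedgut inequality inside the proof of Lemma~\ref{lem:box:partition} only requires $\sum_{F \ni x} u_F \geq 1$ for each free variable $x$, which holds because $\bu$ is a fractional edge cover of $\nodes \supseteq \nodes_\free$ (indeed $\sum_{F \ni x} u_F \geq \slack \geq 1$). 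With this variant of Proposition~\ref{prop:interval:split} in hand, each splitting step at a non-leaf node of $\tree$ satisfies $\tilde{T}(\interval(w_\ell)) + \tilde{T}(\interval(w_r)) \leq \tilde{T}(\interval(w))$, and a straightforward induction on the depth gives the claim.

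Finally, I would bound $\tilde{T}(\interval(r))$ trivially. Since $\interval(r) = \dom_\free$, Lemma~\ref{lem:box:decomposition} gives $|\mathcal{B}(\interval(r))| \le 2\mu - 1$, and each box is contained in $\dom_\free$ so that $|R_F \ltimes \pbox| \leq |R_F|$, yielding $\tilde{T}(\interval(r)) \leq (2\mu - 1)\prod_F |R_F|^{u_F} = O(\prod_F |R_F|^{u_F})$. Chaining these three bounds produces the target $\tilde{O}(\prod_F |R_F|^{u_F})$. The main obstacle I expect is the bookkeeping mismatch between the cover $\bu$ used to drive \textsf{NPRR} and the normalized cover $\hat{\bu}$ used throughout the paper's $T$-based accounting; checking that the Friedgut-based inequalities still work for $\bu$ is routine but needs to be stated explicitly to make the telescoping step rigorous.
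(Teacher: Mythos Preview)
Your proposal is correct and rests on the same key observation as the paper: Lemma~\ref{lem:box:partition} holds verbatim with the unnormalized exponents $u_F$ in place of $\hat u_F$ (since $\bu$ covers $\nodes$, the Friedgut condition $\sum_{F \ni x} u_F \geq 1$ is met), so the total $\tilde T$-mass across level $L$ collapses to $O(\prod_F |R_F|^{u_F})$. The only organizational difference is that the paper does not telescope level by level through a $\tilde T$-variant of Proposition~\ref{prop:interval:split}. Instead, it takes all $\free$-boxes appearing in the decompositions $\mathcal{B}(\interval(w))$ across every $w$ at level $L$ and merges them coordinate by coordinate: first it groups boxes sharing a length-$(\mu-1)$ unit prefix and applies Lemma~\ref{lem:box:partition} along the $\mu$-th coordinate, then repeats along coordinate $\mu-1$, and so on down to coordinate $1$, ending at the single root box $\pangle{\square,\dots,\square}$. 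Your route has the advantage of reusing the interval-splitting analysis already proved; the paper's dimension-by-dimension merge is more direct and avoids reopening Proposition~\ref{prop:interval:split}, but it needs the (easy) side observation that boxes at a fixed level with a common unit prefix have pairwise disjoint last intervals so that Lemma~\ref{lem:box:partition} applies.
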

\begin{proof}
	We will first compute the time needed to construct the list $\mathcal{L}_{\interval(w)}$ for all nodes $w$ at level $L$. Proposition~\ref{obv:heavy:time} tells us that to find the heavy valuations for an interval $\interval(w)$ we need time $\tilde{O}(\sum_{\pbox \in \mathcal{B}(\interval(w))} \prod_{F \in \mathcal{E}_{\nodes_{\bound}}} |R_F \ltimes \pbox|^{u_F})$. We will apply Lemma~\ref{lem:box:partition} to show that $\sum_{w \in w_L} \sum_{\pbox \in \mathcal{B}(\interval(w))} \prod_{F \in \edges} |R_F \ltimes \pbox|^{u_F} = O(\prod_{F \in \edges} |R_F|^{u_F})$. Consider all the $\free$-boxes in the box decomposition of $\interval(w), \forall w \in w_L$. All $\free$-boxes that have the first $\mu-1$ variables fixed are of the form $\pbox_{\mu}^k = \pangle{a_1, \dots, a_{\mu - 1}, (a^{k}_\mu, b^{k}_\mu)}$. We apply Lemma~\ref{lem:box:partition} with $i = \mu$ to all such boxes. Thus, $\sum_k T(\pbox_{\mu}^k) \leq T(\pangle{a_1, \dots, a_{\mu - 1}, \square})$.
	
	After this step, all $\free$-boxes have unit interval prefix of length at most $\mu - 1$ and have the domain of $x^{\free}_\mu$ as $\square$. Now, we repeatedly apply lemma ~\ref{lem:box:partition} to all boxes with $i = \mu - 1, \mu - 2, \dots, 1$ sequentially. Each application merges the boxes and fixes the domain of $x^i_{\free} = \square$. The last step merges all $\free$-boxes of the form $\pangle{a}$ to $\interval(r) = \pangle{\square,\dots,\square}$. This gives us,
	
\begin{align*}
	\sum_{w \in w_L} \sum_{\pbox \in \mathcal{B}(\interval(w))} \prod_{F \in \edges} |R_F \ltimes \pbox|^{u_F} &\leq \prod_{F \in \edges} |R_F \ltimes \interval(r)|^{u_F} \\ &= O(\prod_{F \in \edges} |R_F|^{u_F})
\end{align*}
	
	The second step is to bound the running time of the worst case join optimal algorithm to compute $(\Join_{F \in \edges} R_F) \ltimes \interval(w)$ in ~\autoref{join:nprr}. Observe that this join can also be computed in worst case time $\sum_{w \in w_L} \sum_{\pbox \in \mathcal{B}(\interval(w))} \prod_{F \in \edges} |R_F \ltimes \pbox|^{u_F} = O(\prod_{F \in \edges} |R_F|^{u_F})$.
	
	Finally, note that all steps in~\autoref{do:start}-\autoref{do:end} are $\tilde{O}(1)$ operations.
\end{proof}

Next, we analyze the space requirement of
Algorithm~\ref{algo:lastlevel}. 

\begin{proposition}
	Algorithm~\ref{algo:lastlevel}  requires space $O(\prod_{F \in \edges}
	|R_F|^{u_F}/\tau^{\slack})$
\end{proposition}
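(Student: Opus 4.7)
The plan is to argue that the algorithm's space usage is dominated by three components, each of which fits within the stated bound.

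First I would observe that the final dictionary $\dict$ being constructed is precisely the object whose total size is bounded by Lemma~\ref{lem:dictsizebound}, giving $|\dict| = \tilde{O}(\prod_{F \in \edges} |R_F|^{u_F}/\tau^{\slack})$. Since the algorithm only stores entries for heavy pairs and each entry is a single bit plus pointer overhead, the accumulated dictionary contribution to space across all nodes at level $L$ (and across all levels, by the same proof as Lemma~\ref{lem:dictsizebound}) fits the claimed bound.

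Next I would handle the transient working memory. The NPRR invocation at line~\ref{join:nprr} needs only $O(\log |D|)$ memory, as noted in the computational model footnote in Section~\ref{sec:framework} (since the query size is constant and all base relations are sorted and indexed). The projection at line~\ref{do:start} and the binary search on $\mathcal{L}_{\interval(w)}$ at line~\ref{val:present} are both constant extra memory operations. Thus the NPRR pass contributes only $\tilde{O}(1)$ beyond what is already counted.

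The main obstacle is bounding the size of the current list $\mathcal{L}_{\interval(w)}$, which is built up front before the NPRR pass. The key insight is that this list is not accumulated: once node $w$ has been processed, $\mathcal{L}_{\interval(w)}$ can be deallocated, since all heavy valuations with their bits have been flushed into $\dict$. So at any moment we store only one list. By Proposition~\ref{lem:heavy:bound}, $|\mathcal{L}_{\interval(w)}| \leq (T(\interval(w))/\tau_L)^{\slack}$. Using the inequality $T(\interval(w)) \leq T(\interval(r))/2^{L}$ from the delay-balanced tree bound, and the fact that $\tau_L = \tau/2^{L(1-1/\slack)}$, this gives a per-node peak of at most $O(2^{-L} \tau^{-\slack} \prod_{F \in \edges}|R_F|^{u_F})$, which is already within the claimed bound (it is in fact bounded by the same quantity that Lemma~\ref{lem:dictsizebound} derives for the whole level).

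Putting these three contributions together yields total space $\tilde{O}(\prod_{F \in \edges} |R_F|^{u_F}/\tau^{\slack})$, matching the claim. The subtlety to emphasize is the deallocation of $\mathcal{L}_{\interval(w)}$ after processing each $w$, since without this observation one would naively sum list sizes across all $w$ and lose the bound by a polylogarithmic factor that would still be absorbed in $\tilde{O}$, but is cleaner to avoid.
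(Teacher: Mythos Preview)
Your proposal is correct and follows essentially the same approach as the paper: bound the dictionary via Lemma~\ref{lem:dictsizebound}, the NPRR working memory by $O(\log |D|)$, and the heavy-valuation list via the $(T(\interval)/\tau)^{\alpha}$ bound (the paper cites Proposition~\ref{obv:heavy:time}, you cite Proposition~\ref{lem:heavy:bound}, which is the same content). Your treatment is in fact more careful than the paper's terse version, since you explicitly address the deallocation of $\mathcal{L}_{\interval(w)}$ between nodes and work out the per-node peak; the paper simply gestures at the root-level bound without spelling this out.
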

\begin{proof}
	
	Lines~\ref{loop:start}-\ref{loop:end} take $|\dict|$ amount of space.
	The \textsf{NPRR} algorithm requires $\log |D|$ amount of memory to keep track of
	pointers \footnote{If we have at least $|D|$ memory, i.e, all relations can fit
		in memory, then we require no subsequent I/O's}. Since we are only streaming
	through the join output, there is no additional memory overhead in this step.
	Thus, the bound on memory required follows from
	Proposition~\ref{obv:heavy:time} (bounding the size of $\mathcal{L}_{\interval(r)}$) and
	Lemma~\ref{lem:dictsizebound}.
\end{proof}

\section{Constructing Data Structure for Query Decomposition}
\label{sec:construction:decomposition}

In this section, we present the detailed construction of data structure and time required for Theorem~\ref{thm:main2}. Without loss of generality, we assume that all bound variables are present in a single bag $t_\bound$ in the $\nodes_{\bound}$-connex tree decomposition of the hypergraph $\hgraph$. This can be achieved by simply merging all the bags $t$ with $\bag_t \subseteq \nodes_{\bound}$ into $t_\bound$ which is also designated as the root. Note that the delay assignment for root node is $\delta_{t_\bound} = 0$. Let $\lambda(\htree)$ denote the set of all root to leaf paths in $\htree$, $h$ be the $\delta$-height of the decomposition and $f$ be the $\nodes_{\bound}$-connex fractional hypertree $\delta$-width of the decomposition. We also define the quantity $\bu^* = \max_{t \in V(\htree) \setminus t_\bound} (\sum_F u_F)$ where $u$ is the fractional edge cover for bag $\bag_t$.
We will show that in time $T_C = \tilde{O}(|D| + |D|^{\bu^* + \max_t \delta_{t}})$, we can construct required data structure using space $S = \tilde{O}(|D| + |D|^{f})$ for a given $\nodes_{\bound}$-connex tree decomposition $\htree$ of $\delta$-width $f$. The construction will proceed in two steps:

\smallskip
\introparagraph{(i) Apply Theorem~\ref{thm:main} to decomposition}  We apply theorem~\ref{thm:main} to each bag (except $t_\bound$) in the decomposition with the following parameters: $(i) \mathcal{H}^{t} = (\nodes^{t}, \edges^{t})$ where $\nodes^{t} = \bag_t$ and $\edges^{t} = \edges_{\bag_t}$,  $(ii) \nodes^{t}_\bound = \textsf{anc}(t) \cap \bag_t$ and $\nodes^{t}_{\free} = \bag_t \setminus \textsf{anc}(t)$, and $(iii)$ the edge cover $u$ is the cover of the node $t$ in the decomposition corresponding to $\rho^{+}_t$. Thus, in time $\tilde{O}(|D| + |D|^{\bu^*})$ we can the construct delay-balanced tree $\tree_t$ and the corresponding dictionary $\dict_t$  for each bag other than the root. The space requirement for each bag is no more than $\tilde{O}(|D| + |D|^{f})$. 

However, the dictionary $\dict_{t}$ needs to be modified for each bag since  there can be {\em dangling tuples} in a bag that may participate only in the join output of the bag but not in join output of the branch containing $t$. In the following description, we will use the notation $v^{t}_\bound$ to denote a valuation over variables $\nodes^{t}_\bound$. Note that $v_\bound$ is the valuation over $\nodes_{\bound}$.
\begin{algorithm}[!htp]
	\SetCommentSty{textsf}
	\DontPrintSemicolon 
	\SetKwFunction{proc}{\textsf{fullreducer}}
	\SetKwFunction{enqueue}{enqueue}	
	\SetKwFunction{dequeue}{dequeue}
	\SetKwInOut{Input}{\textsc{input}}\SetKwInOut{Output}{\textsc{output}}
	\SetKwData{Order}{traversaltopdown}
	\SetKwData{Stack}{traversalbottomup}
	\SetKwData{node}{node}
	\SetKwData{parent}{parent}
	\SetKwData{child}{child$_{l}$}		
	\Input{$\nodes_{\bound}$-bound decomposition $\htree$, ${(\tree, \dict})_{t \in V(\htree)}$}
	
	\ForAll{ $t \in \htree \setminus \{t_\bound \cup \text{ children of } t_\bound \} \text{ in post-order fashion }$ }{
		\parent$\leftarrow$ \textit{parent of $t$} \;
		\ForAll{$ w \in w_L$ of $\tree_{\parent}$ \tcc*[r]{$w_L$ represents all nodes at level $L$ in $\tree_{\parent}$ }}{
			
			\ForAll{heavy $v^{\parent}_\bound \in w$ and $\dict_{\parent}(w,v^{\parent}_\bound) = 1$}{
				
				\ForAll{$k \leftarrow Q_{\parent}(v^{\parent}_\bound,D) \ltimes \interval(w)$ \tcc*[r]{computing $(\Join_{F \in \edges_{\nodes^{t}}} R_F)$ via box decomposition}}{
					
					\If{Algorithm~\ref{algo:answer} on $t$ with $v^t_\bound = \pi_{\bag_{\parent} \cap \bag_{t}} (k)$ is empty for all $k$ \label{line:check2} }{
						$\dict_{t}(w, \pi_{\nodes^{\parent}_\bound}(k)) = 0$
					}
					
				}
			}
		}
	}

	%
	\caption{Modifying $(\dict)_{t \in V(\htree)}$ }
	\label{algo:fulljoin}
\end{algorithm}

\smallskip
\introparagraph{(ii) Modify $\dict_{t}$ using semijoins} Algorithm~\ref{algo:fulljoin} shows the construction of the modified dictionary $\dict_t$ to incorporate the semijoin result. The goal of this step is to ensure that if $\dict_t(w, v^{t}_\bound) = 1$, then there exists a set of valid valuations for all variables in the subtree rooted at $t$. We will apply a sequence of semijoin operations in a bottom up fashion which we describe next.

\smallskip
\noindent \introparagraph{Bottom Up Semijoin}  In this phase, the bags are processed according to \textit{post-order} traversal of the tree in bottom-up fashion. The key idea is to stream over all heavy valuations of a node in $\tree_t$ and ensure that they join with some tuple in the child bags. Let $Q_{t}(v^t_\bound, D)$ denote the \textsf{NPRR} join instance on the relations covering variables $\nodes^{t}$ where bound variables are fixed to $v^t_\bound$. When processing a non-root (or non-child of root) node $t_j$, a semijoin is performed with its parent $t_i$ to flip all dictionary entries of $t_i$ from $\textsf{1}$ to $\textsf{0}$ if the entry does not join with any tuple in $t_j$ on their common attributes $\bag_{t_i} \cap \bag_{t_j}$. To perform this operation, we stream over all tuples $k \leftarrow Q_{t_i}(v^{t_i}_\bound, D)$ and check if $\pi_{\bag_{t_i} \cap \bag_{t_j}} (k)$ is present in the join output of relations covering $t_j$. This check in bag $t_j$ can be performed by invoking Algorithm~\ref{algo:answer} with bound valuation $\pi_{\bag_{t_i} \cap \bag_{t_j}} (k)$ in time $\tilde{O}(|D|^{\delta_{t_j}})$.

\smallskip
\noindent \introparagraph{Algorithm Analysis} We will show that Algorithm~\ref{algo:fulljoin} can be executed in time $\tilde{O}(|D|^{\bu^* + \max_t \delta_{t}})$. 

\begin{proposition}
	Algorithm~\ref{algo:fulljoin} executes in time $\tilde{O}(|D|^{\bu^*+ \max_t \delta_{t}})$.
\end{proposition}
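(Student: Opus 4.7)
The plan is to bound the total work across the nested loops of Algorithm~\ref{algo:fulljoin} by decomposing it into (i) the cost of enumerating the parent join $Q_{\mathrm{parent}(t)}(v^{\mathrm{parent}}_\bound, D) \ltimes \interval(w)$ across all heavy valuations and all nodes $w$ of the delay-balanced tree $\tree_{\mathrm{parent}(t)}$, and (ii) the cost of the emptiness check performed at node $t$ for each produced tuple. Summing over all non-root nodes $t$ then gives the claimed bound.

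First, fix $t$ with parent $p = \mathrm{parent}(t)$, and bound the total number of tuples enumerated by the outer loops. For each level $L$ of $\tree_p$, the total enumeration cost across $w \in w_L$ is $\sum_{w \in w_L}\sum_{\pbox \in \mathcal{B}(\interval(w))}\prod_{F \in \edges^p} |R_F \ltimes \pbox|^{u_F}$, where $\bu$ is the cover achieving $\rho^+_p$. Exactly as in the proof of Proposition~\ref{prop:boxmerging}, repeated application of Lemma~\ref{lem:box:partition} telescopes this to $O(\prod_{F \in \edges^p} |R_F|^{u_F}) = O(|D|^{u^+_p}) \leq O(|D|^{\bu^*})$. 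Since $\tree_p$ has $O(\log |D|)$ levels, summing over all levels contributes only a polylogarithmic factor, so the outer loops enumerate $\tilde{O}(|D|^{\bu^*})$ tuples in total.

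Next, I would bound the per-tuple cost of the inner check at line~\ref{line:check2}. Each tuple $k$ triggers a single invocation of Algorithm~\ref{algo:answer} on the data structure at $t$ with bound valuation $\pi_{\bag_p \cap \bag_t}(k)$; we do not enumerate the full result but merely detect whether it is empty. The data structure at $t$ is built by Theorem~\ref{thm:main} with threshold $\tau = |D|^{\delta_t}$, so by the delay guarantee, the time to produce the first tuple or to certify emptiness is $\tilde{O}(|D|^{\delta_t})$. Multiplying by the number of outer-loop tuples gives $\tilde{O}(|D|^{\bu^* + \delta_t})$ work attributable to node $t$.

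Finally, under data complexity the decomposition $\htree$ has a constant number of bags, so summing yields $\tilde{O}(\sum_t |D|^{\bu^* + \delta_t}) = \tilde{O}(|D|^{\bu^* + \max_t \delta_t})$, matching the claim. The main obstacle is making the telescoping argument rigorous across the disjoint intervals $\interval(w)$ at a common level of $\tree_p$: one has to observe that their box decompositions can be merged variable-by-variable via Lemma~\ref{lem:box:partition} into a single decomposition whose AGM bound is that of the unrestricted join on $\bag_p$. A secondary subtlety is charging the inner call to Algorithm~\ref{algo:answer} only its delay rather than its full answer time, which is justified because emptiness is detected at the moment the algorithm either outputs its first tuple or declares completion.
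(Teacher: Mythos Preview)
Your proposal is correct and follows essentially the same approach as the paper: invoke Proposition~\ref{prop:boxmerging} to bound the per-level enumeration cost in $\tree_{\mathrm{parent}(t)}$ by $O(|D|^{\bu^*})$, absorb the $O(\log |D|)$ levels into the $\tilde{O}$, charge each produced tuple the $\tilde{O}(|D|^{\delta_t})$ delay of Algorithm~\ref{algo:answer} at the child, and sum over the constant number of bags. Your write-up is in fact somewhat more explicit than the paper's---in particular, your remark that the inner call incurs only the delay cost (not the full answer time) because we only need emptiness detection is a point the paper leaves implicit.
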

\begin{proof}
	The main observation is that the join $Q_{\textsf{parent}}(v^t_\bound,D) \ltimes \interval(w)$ for all nodes at level $L$ in $\tree_{\textsf{parent}}$ can be computed in time at most $O(|D|^{\bu^*})$ as shown in Proposition~\ref{prop:boxmerging}. Since the operation in Line~\ref{line:check2} can be performed in time  $\tilde{O}(|D|^{\delta_{t}})$ for each $k$ and $\tree_\textsf{parent}$ has at most logarithmic number of levels, the total overhead of the procedure is dominated by the semijoin operation where delay for bag $t$ is largest. This gives us the running time of $\tilde{O}(|D|^{\bu^* + \max_t \delta_{t}})$ for the procedure.
\end{proof}

\begin{proposition}
	If $\dict_{t}(w, v^{t}_\bound) = 1$, then there exists a set of valuations for all variables in the subtree rooted at $t$ for $v^{t}_\bound$.
\end{proposition}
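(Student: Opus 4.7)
The plan is to establish the claim by strong induction on the structure of the $\nodes_\bound$-connex tree decomposition $\htree$, mirroring the post-order traversal used by Algorithm~\ref{algo:fulljoin}. The inductive invariant to maintain is: after a bag $c$ has been processed, whenever $\dict_c(w', v^c_\bound) = \mathsf{1}$ there exists a valuation over all variables in the subtree rooted at $c$ that (i) agrees with $v^c_\bound$ on $\nodes^c_\bound$, (ii) projects to a tuple in $\interval(w')$ on $\nodes^c_\free$, and (iii) lies in the join of the relations of all bags in that subtree. The base case is when $c$ is a leaf of $\htree$: no descendant ever triggers a semi-join update on $\dict_c$, so $\dict_c$ equals the dictionary built by Theorem~\ref{thm:main} applied to the bag subquery $\Join_{F \in \edges_{\bag_c}} R_F$, and the correctness of that construction directly supplies the required valuation.

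For the inductive step, let $t$ have children $c_1, \ldots, c_m$, and suppose the invariant holds for each $c_i$ by the time $t$ is processed. When the iteration reaches each $c_i$, Algorithm~\ref{algo:fulljoin} streams every tuple $k$ of $Q_t(v^t_\bound, D) \ltimes \interval(w)$ and invokes Algorithm~\ref{algo:answer} on $c_i$ with bound valuation $\pi_{\bag_t \cap \bag_{c_i}}(k)$. Since Algorithm~\ref{algo:answer} relies solely on $\dict_{c_i}$ and its delay-balanced tree $\tree_{c_i}$, the inductive hypothesis implies that this invocation produces a non-empty enumeration iff $\pi_{\bag_t \cap \bag_{c_i}}(k)$ extends to a valuation over the subtree rooted at $c_i$. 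The semi-join rule of line~\ref{line:check2} flips $\dict_t(w, v^t_\bound)$ to $\mathsf{0}$ whenever no $k$ passes this test, so a surviving $\mathsf{1}$ entry certifies the existence of some bag-join tuple $k$ that extends through $c_i$. Once all $m$ children have been processed, if $\dict_t(w, v^t_\bound)$ is still $\mathsf{1}$, one picks for each $c_i$ such an extension $\sigma_i$; the running-intersection property of the tree decomposition guarantees that subtrees rooted at distinct children share variables only within $\bag_t$, so $k$ and the $\sigma_i$'s glue into a single consistent valuation over the entire subtree rooted at $t$, completing the induction.

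The principal obstacle is reconciling the sequential per-child processing in Algorithm~\ref{algo:fulljoin} with the ``single $k$ that survives for all children simultaneously'' semantics that the gluing step requires. Since dictionary entries are monotone (only $\mathsf{1} \to \mathsf{0}$ transitions occur), the set of witness tuples $k$ underlying a surviving $\mathsf{1}$ entry only shrinks as successive children are handled; one therefore has to verify that the filtering preserves the invariant that the intersection of per-child witness sets is non-empty, equivalently, that the ``for all $k$'' quantifier on line~\ref{line:check2} interacts correctly with the outer iteration over $c_i$'s. Modulo this bookkeeping, the remainder is a clean induction, with the combinatorial heavy lifting supplied by the connex tree decomposition's running-intersection property and by the correctness guarantees of Theorem~\ref{thm:main} and Algorithm~\ref{algo:answer}.
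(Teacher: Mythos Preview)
Your approach is essentially the same as the paper's: both argue by induction down the tree, using the contrapositive of the semi-join flipping rule to conclude that a surviving $\mathsf{1}$ entry at $t$ must have a joining witness in each child bag, and then recursing to the leaves. The paper's proof is a three-sentence version of your plan and, notably, does not address the gluing concern you raise in your final paragraph---that the per-child witness tuples $k_i$ need to be reconciled into a single $k$---so you have not omitted anything relative to the paper; if anything, you have been more careful in isolating where the argument is delicate.
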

\begin{proof}
	Consider a valuation such that $\dict_{t}(w, v^{t}_\bound)  = 1$. If $v^{t}_\bound$ does not join with the relations of any child bag $c$, then Line~\ref{line:check2} would be true and Algorithm~\ref{algo:fulljoin} would have flipped the dictionary entry to $0$. Thus, there exists a valuation for $\nodes^{c}_\free$. Applying the same reasoning inductively to each child bag $c$ till we reach the leaf nodes gives us the desired result.
\end{proof}


The modified dictionary, along with the enumeration algorithm, will guarantee that valuation of free variables that is output by Algorithm~\ref{algo:answer} for a particular bag will also produce an output for the entire query. Note that the main memory requirement of Algorithm~\ref{algo:fulljoin} is only $O(1)$ pointers and the data structures for each bag which takes $\tilde{O}(|D| + |D|^{f})$ space. 

\section{Answering using Query Decomposition}
\label{sec:answering:decomposition}

In this section, we present the query answering algorithm for the given $\nodes_{\bound}$-connex decomposition. We will first add some metadata to each bag in the decomposition and then invoke algorithm~\ref{algo:answer} for each bag in pre-order fashion.

\smallskip
\introparagraph{Adding pointers for each bag} Consider the decomposition $\htree$ along with $(\tree, \dict)_{t \in V(\htree) \setminus t_\bound}$ for each bag. We will modify $\htree$ as follows: for each node of the tree, we fix a pointer $predecessor(t)$, that will point to the {\em pre-order predecessor} of the node. Intuitively, pre-order predecessor of a node is the last node where valuation for a free variable will be fixed in the pre-order traversal of the tree just before visiting the current node. Figure~\ref{fig:queryanswering:decompostion} shows an example of a modified decomposition. This transformation can be done in $O(1)$ time.

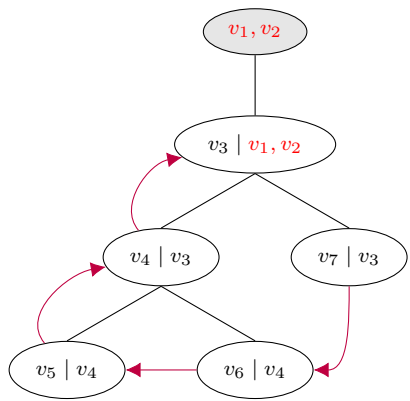
\begin{figure}[htp!]
	\centering
	\begin{tikzpicture}
	[scale=.5,auto,every node/.style={ellipse,draw},level distance=3cm]
	\tikzstyle{level 1}=[sibling distance = 6cm]
	\tikzstyle{level 2}=[sibling distance = 5cm] 
	
	\node [fill=black!10] {\scriptsize $\color{red} v_1, v_2$ }
	child {node (1) {\scriptsize $\prq{ v_3}{\color{red} v_1, v_2}$}
		child {node  (2) {\scriptsize $\prq{ v_4}{ v_3}$}
			child {node (3)  {\scriptsize $\prq{ v_5}{ v_4}$ }}
			child {node (4) {\scriptsize $\prq{ v_6}{ v_4}$ }}
		}
		child {node (5) {\scriptsize $\prq{ v_7}{ v_3}$}}};
	
	\draw[purple, ->] (2) to[ out = 130, in = 190] (1);
	\draw[purple, ->] (3) to[ out = 130, in = 190] (2);
	\draw[purple, ->] (4) to[ out = 180, in = 0] (3);
	\draw[purple, ->] (5) to[ out = 270, in = 0] (4);
	\end{tikzpicture}
	\caption{Example of the modified tree decomposition: the arrows in color are the predecessor pointers} \label{fig:queryanswering:decompostion}
\end{figure}

For ease of description of the algorithm, we assume that the Algorithm~\ref{algo:answer} answering $Q^{\eta}[v^{t}_{\bound}]$ for any bag $t$ is accessible using the procedure $\textsf{next}_t(v^{t}_{\bound})$. Let $\nodes^{t}_{\textsf{pred}}$ represent all bound variables encountered in the pre-order traversal of the tree from $t_\bound$ to $t$ (including bound variables of $t$).

\smallskip
\introparagraph{Algorithm Description} The algorithm begins from the root node and fixes the valuation for all free variables in root bag. Then, it proceeds to the next bag recursively considering all ancestor variables as bound variables and finds a valuation for $\bag_{t} \setminus \textsf{anc}(t)$. At the first visit to any bag, if the bound variables $v^{t}_{\bound}$ do not produce an output in delay $\tilde{O}(|D|^{\delta_{t}})$, then we proceed to the next valuation in the parent bag. However, if the enumeration for some $v^{t}_{\bound}$ did produce output tuples but the procedure $\textsf{next}_t(v^{t}_{\bound})$ has finished, we proceed to the predecessor of the bag to fix the next valuation for variables in predecessor bag. In other words, the ancestor variables remain fixed and we enumerate the cartesian product of the remaining variables.

\begin{lemma}
	Algorithm~\ref{algo:answering:decomposition} enumerates the answers with delay at most $\tilde{O}(|D|^{h})$ where $h$ is the $\delta$-height of the decomposition tree. Moreover, it requires at most $O(\log |D|)$ memory
\end{lemma}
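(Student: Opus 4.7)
I would split the argument into three parts: (i) correctness (every answer is produced exactly once), (ii) the delay bound $\tilde{O}(|D|^h)$, and (iii) the $O(\log|D|)$ memory bound. Correctness follows from the structure of the $\nodes_\bound$-connex decomposition combined with the semi-join reduction of the dictionaries. The pre-order traversal fixes valuations top-down, so at each bag $t$ the algorithm calls $\textsf{next}_t(v^t_\bound)$ with $v^t_\bound$ already determined by the ancestors in $\textsf{anc}(t)$. The dictionary reduction performed by Algorithm~\ref{algo:fulljoin} guarantees that whenever $\dict_t(w, v^t_\bound)=\mathsf{1}$, the valuation $v^t_\bound$ extends to a complete answer in the subtree rooted at $t$; hence, every binding returned by $\textsf{next}_t$ participates in at least one output, and no enumeration effort at a descendant is ever wasted. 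The predecessor pointers let the algorithm resume exactly at the last bag where a fresh binding can still be chosen, so distinct outputs correspond to distinct choices of bindings at the bags above the current leaf, giving each answer tuple of $Q^\eta[v_\bound](D)$ exactly once.

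For the delay bound, I would trace what happens between two consecutive outputs. After an output is emitted, the algorithm advances $\textsf{next}_{t^{\star}}$ at the deepest bag $t^{\star}$ that can still yield a new binding. If $\textsf{next}_{t^\star}$ is exhausted, control passes to $\textup{predecessor}(t^\star)$, which may in turn be exhausted, propagating upward along the tree. Once a new binding is fixed at some bag $t'$, the algorithm descends through the subtree rooted at $t'$ in pre-order, calling $\textsf{next}_t$ once at each descendant $t$ along a single root-to-leaf path suffix; by the reduced-dictionary property, each such descent returns at least one binding, so no backtracking occurs. Each call to $\textsf{next}_t$ costs $\tilde{O}(|D|^{\delta(t)})$ by Theorem~\ref{thm:main}. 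The set of bags visited during this transition lies along a single root-to-leaf path $P$ of $\htree$, and therefore the total delay is
\begin{equation*}
\sum_{t \in P} \tilde{O}\!\left(|D|^{\delta(t)}\right) \;\leq\; |P| \cdot \tilde{O}\!\left(|D|^{\max_{t \in P} \delta(t)}\right) \;\leq\; \tilde{O}\!\left(|D|^{h}\right),
\end{equation*}
since $|P|$ is bounded by the depth of the decomposition (a constant in data complexity, absorbed by $\tilde{O}$) and $\max_{t \in P}\delta(t) \leq \sum_{t \in P}\delta(t) \leq h$ by definition of the $\delta$-height.

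For the memory bound, Algorithm~\ref{algo:answer} inside a single bag uses only $O(\log |D|)$ working memory, as it walks a root-to-leaf path of the delay-balanced tree $\tree_t$ and inspects constant-size entries of $\dict_t$ and the base-relation indexes. At any instant the enumeration maintains one active stack frame per bag along the current root-to-leaf prefix in $\htree$, each frame storing $O(1)$ pointers. Since the depth of $\htree$ is bounded by the (fixed) query size, the total auxiliary memory is $O(\log |D|)$.

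The main obstacle I anticipate is making the path-based delay accounting rigorous, i.e., showing that the bags visited during a single transition can indeed be charged to one path rather than to a whole subtree. The subtlety is that a bag $t$ may have several children whose subtrees each enumerate many tuples; I need to argue that once one child's subtree finishes, the algorithm does not re-enter sibling subtrees within the same transition, but instead moves to the predecessor pointer of that child, which by construction lies strictly above the least common ancestor of that child and the next active child. This, combined with the fact that every $\mathsf{1}$-entry in the reduced dictionaries is guaranteed to be extensible, is precisely what rules out redundant work and keeps the per-transition cost additive (rather than multiplicative) across the traversed bags.
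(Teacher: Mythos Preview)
Your delay argument contains a genuine gap. You claim that once a new binding is fixed at bag $t'$, the descent through the subtree calls $\textsf{next}_t$ \emph{once} per descendant and that ``by the reduced-dictionary property, each such descent returns at least one binding, so no backtracking occurs.'' This is not what the dictionary reduction of Algorithm~\ref{algo:fulljoin} actually guarantees. That reduction only flips entries $\dict_t(w,v^t_\bound)=\mathsf{1}$ to $\mathsf{0}$ when the entire heavy interval $\interval(w)$ is dangling; it does nothing to the individual tuples produced from \emph{light} ($\bot$) intervals, which are computed on the fly from the base relations and were never semi-joined. Hence a single call to $\textsf{next}_t$ may very well return a tuple that fails to extend in a child bag, forcing the algorithm to backtrack to $t$ and call $\textsf{next}_t$ again.

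The correct accounting is therefore multiplicative along a root-to-leaf path, not additive. What the reduction \emph{does} guarantee is that, between two extending tuples at bag $t$, the enumerator for $t$ spends at most $\tilde{O}(|D|^{\delta(t)})$ time and emits at most $\tilde{O}(|D|^{\delta(t)})$ candidate valuations before finding one that extends (this is the delay bound of Theorem~\ref{thm:main} applied to the reduced dictionary). Each such candidate costs $d_c$ to test in a child $c$, so the recurrence is $d_t = \tilde{O}(|D|^{\delta(t)}) \cdot \max_c d_c$, and unrolling along a path $P$ gives $\tilde{O}(|D|^{\sum_{t\in P}\delta(t)}) = \tilde{O}(|D|^h)$. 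This is exactly why the $\delta$-height is defined as a \emph{sum} of $\delta(t)$ along paths rather than a maximum. Your additive bound $\sum_{t\in P}|D|^{\delta(t)}$ would, if it held, yield the strictly stronger delay $\tilde{O}(|D|^{\max_t \delta(t)})$; the fact that you then weaken it back to $|D|^h$ via $\max_t \delta(t) \le h$ hides the error but does not repair it. Your correctness sketch and the $O(\log |D|)$ memory argument are fine and in line with the paper.
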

\begin{proof}
	Since the size of the decomposition is a constant, we require at most $O(1)$ pointers for predecessors and $O(1)$ pointers for storing the valuations of each free variable. Let $n_\ell$ be the set of nodes at depth $\ell$. We will express the delay of the algorithm in terms of the delay of the subtrees of every node. The delay at the root $t_\bound$ after checking whether valuation $v_\bound$ is in the base relations is $d_{t_\bound} = O(\sum_{t \in n_1} d_{t})$.  This is because the enumeration of each subtree rooted at depth $\ell = 1$ depends only on its ancestor variables and is thus independent of the other subtrees at that depth. Since each node in the tree can produce at most $|D|^{\delta_{t}}$ valuations in $\tilde{O}(|D|^{\delta_{t}})$ time, the recursive expansion of $\delta_{t_\bound}$ gives $\delta_{t_\bound} = O( \sum_{p \in \lambda(\htree)} \tilde{O}(|D|^{\sum_{t \in p} \delta_{t} }))$. The largest term over all root to leaf paths is $\tilde{O}(|D|^{h})$ which gives us the desired delay guarantee.
\end{proof}

\begin{algorithm}[!htp]
	\SetCommentSty{textsf}
	\DontPrintSemicolon 
	\SetKwFunction{proc}{\textsf{eval}}
	\SetKw{KwGoTo}{go to}	
	\SetKwInOut{Input}{\textsc{input}}\SetKwInOut{Output}{\textsc{output}}
	\Input{tree $\htree$, $(\tree, \dict)_{t \in V(\htree)}, v_b$}
	\Output{query answer $Q(D)$}
	\BlankLine
	Initialize $t_{visited} \leftarrow 0$ for all nodes, $v \leftarrow v_b, t \leftarrow$ left child of $t_\bound$, $parent(t) \leftarrow t$ \\
	Check if $R_F(v_\bound) \neq \emptyset, F \in \edges, F \subseteq C$ \\ 
	\ForAll{nodes in pre-order traversal starting from $t$}{
		$v \leftarrow \pi_{\nodes^{t}_{\textsf{pred}}}(v)$ 		\label{loop:begin}  \\
		$v^{t}_\free \leftarrow \textsf{next}_t( \pi_{\nodes^{t}_\bound} (v))$ \\
		\If{$v^{t}_\free$ is empty and $t_{visited} = 0$}{
			$t \leftarrow parent(t)$ \\
			$continue$
		}
		\If{$v^{t}_\free$ is empty and $t_{visited} = 1$}{
			$t_{visited} \leftarrow 0$ \\
			$t \leftarrow predecessor(t)$ \\
			$continue$
		}
		$t_{visited} \leftarrow 1$ \\
		$v \leftarrow (v, v^{t}_\free)$ \\
		\If{$t$ is last node in the tree}{
			\If{$R_F(v) \neq \emptyset, F \in \edges$}{
				emit {$v$} \\
			}
			\KwGoTo line~\ref{loop:begin} \tcc*[r]{If $t$ is last node in tree, find next valuation for $\nodes^{t}_{\free}$}
		}
	}	
	\BlankLine
	\caption{Query Answering using $\nodes_{\bound}$-connex decomposition}
	\label{algo:answering:decomposition}
\end{algorithm}

\section{Comparing width notions} \label{sec:width}

We briefly discuss the connection of $\textsf{fhw}(\hgraph \mid \nodes_\bound)$ for a $\nodes_{\bound}$-connex decomposition with other related hypergraph notions. 

The first observation is that the minimum edge cover number $\rho^*$ is always an upper bound on $\textsf{fhw}(\hgraph \mid \nodes_\bound)$. 
On the other hand, the $\textsf{fhw}(\hgraph \mid \nodes_\bound)$ is incomparable with $\fhw{\hgraph}$.
Indeed, Example~\ref{ex:cbound} shows that $\textsf{fhw}(\hgraph \mid \nodes_\bound)  < \fhw{\hgraph}$, and the
example below shows that the inverse situation can happen as well.

\begin{example}
	The query $R(x,y), S(y,z)$ is acyclic and has $\fhw{\hgraph} = 1$. Let
	$\nodes_{\bound} = \{x,z\}$. The only valid $\nodes_{\bound}$-bound decomposition is the one with two bags,
	$\{x,z\}, \{x,y,z\}$, and hence $\textsf{fhw}(\hgraph \mid \nodes_\bound) = 2$. In this scenario,
	$\textsf{fhw}(\hgraph \mid \nodes_\bound)  > \fhw{\hgraph}$. \label{ex:first}
\end{example} 

\begin{example} \label{ex:cbound}
	Figure~\ref{fig:ex:fhw} shows an example hypergraph and a $\nodes_{\bound}$-bound tree decomposition
	(the variables in $\nodes_{\bound}$ are colored red). 
	For this example, $\textsf{fhw}(\hgraph) = 2$, but $\textsf{fhw}(\hgraph \mid \nodes_{\bound}) = 3/2$. Indeed, observe that we can cover the lower bag of the tree decomposition with a 
	fractional edge cover of value only $3/2$.
\end{example}

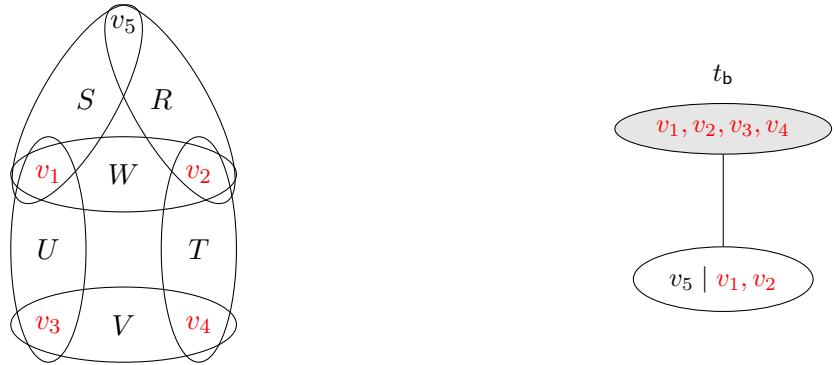
\begin{figure}[!htp]
	\begin{minipage}{0.60\linewidth}
		\centering
		\begin{tikzpicture}
		\tikzset{edge/.style = {->,> = latex'}}
		\begin{scope}[fill opacity=1]
		
		\draw[] (0,0) ellipse (1.5cm and 0.5cm);
		\draw[] (0,-2) ellipse (1.5cm and 0.5cm);
		\draw[rotate=90] (-1,-1) ellipse (1.5cm and 0.5cm);
		\draw[rotate=90] (-1,1) ellipse (1.5cm and 0.5cm);
		
		\draw[rotate=60] (0.5,1) ellipse (1.5cm and 0.5cm);
		\draw[rotate=300] (-0.5,1) ellipse (1.5cm and 0.5cm);	
		\node (E) at (0,2) {$v_5$};
		\node (A) at (-1,0) {$\color{red} v_1$};
		\node (B) at (1,0) {$\color{red} v_2$};
		\node (F) at (0,0) {$W$};
		\node (G) at (0,-2) {$V$};
		\node (H) at (-1,-1) {$U$};
		\node (I) at (1,-1) {$T$};
		\node (J) at (0.5,1) {$R$};				
		\node (J) at (-0.5,1) {$S$};					
		\node (C) at (-1,-2) {$\color{red} v_3$};
		\node (D) at (1,-2) {$\color{red} v_4$};
		\end{scope}	
		\end{tikzpicture}
	\end{minipage}
	\begin{minipage}{0.35\linewidth}
		\centering
		\begin{tikzpicture}
		[scale=.5,auto,every node/.style={ellipse,draw},level distance=4cm]
		\tikzstyle{level 1}=[sibling distance = 4cm]
		\tikzstyle{level 2}=[sibling distance = 4cm] 
		\node [fill=black!10, label = {\small $t_\bound$}] {\small $\color{red} v_1, v_2, v_3, v_4$ }
		child {node (1) {\small $\prq{ v_5}{\color{red} v_1, v_2}$}};
		
		\end{tikzpicture}
	\end{minipage}
	\caption{Query hypergraph and corresponding $C$-bound tree decomposition 
		with $C = \{v_1, v_2, v_3, v_4\}$}
	\label{fig:ex:fhw}
\end{figure}

\end{document}